\documentclass{article}

\usepackage[margin=3cm]{geometry}
\usepackage[utf8]{inputenc}

\usepackage{amsmath}
\usepackage{amssymb}
\usepackage{amsthm}
\usepackage{tikz}
\usepackage{subcaption}
\usepackage{enumitem}
\newcommand{\fwf}[1]{10.55776/#1}

\usetikzlibrary{matrix}
\usetikzlibrary{decorations.pathreplacing}

\newtheorem{lemma}{Lemma}
\newtheorem{theorem}{Theorem}

\newtheorem{corollary}{Corollary}
\newtheorem{claim}{Claim}
\theoremstyle{definition}
\newtheorem{definition}{Definition}
\newtheorem{assumption}{Assumption}
\theoremstyle{definition}

\usepackage{thm-restate}
  
\usepackage{xspace}

\usepackage{algorithm}
\usepackage{algorithmic}

\usetikzlibrary{tikzmark}
\usetikzlibrary{decorations}
\usepackage{makecell}

\makeatletter
\renewcommand*\env@matrix[1][\arraystretch]{%
  \edef\arraystretch{#1}%
  \hskip -\arraycolsep
  \let\@ifnextchar\new@ifnextchar
  \array{*\c@MaxMatrixCols c}}
\makeatother

\newcommand{\calF}{\mathcal{F}}

\newcommand{\calD}{\mathcal{D}}
\newcommand{\calP}{\mathcal{P}}
\newcommand{\calT}{\mathcal{T}}
\newcommand{\calC}{\mathcal{C}}
\newcommand{\var}{\mathit{var}}
\newcommand{\clauses}{\mathit{clauses}}

\newcommand{\tw}{\mathit{tw}}
\newcommand{\tseitin}{\textit{Tseitin}}
\newcommand{\gates}{\mathit{gates}}

\newcommand{\elements}{\mathit{set}}
\newcommand{\poly}{\mathit{poly}}

\newcommand*\circled[2]{\tikz[baseline=(char.base)]{
            \node[shape=circle,draw=#2,thick,inner sep=1pt,font=\small] (char) {#1};}}

\title{\textbf{Compilation and Fast Model Counting beyond CNF}}

\author{
\textbf{Alexis de Colnet}, \textbf{Stefan Szeider} and \textbf{Tianwei Zhang}
\bigskip\\
{Algorithms and Complexity Group, TU Wien, Vienna, Austria}
\\
\{decolnet,sz,zhangtw\}@ac.tuwien.ac.at
}
          \date{ }  
            
\begin{document}

\maketitle

\begin{abstract}
Circuits in deterministic decomposable negation normal form (d-DNNF)  are representations of Boolean functions that enable
linear-time model counting. This paper strengthens our theoretical
knowledge of what classes of functions can be efficiently transformed,
or \emph{compiled}, into d-DNNF. Our main contribution is the
fixed-parameter tractable (FPT) compilation of conjunctions of specific
constraints parameterized by incidence treewidth. This
subsumes the known result for CNF. The constraints in question are all
functions representable by constant-width ordered binary decision
diagrams (OBDDs) for \emph{all} variable orderings. For instance, this
includes parity constraints and cardinality constraints with constant
threshold. The running time of the FPT compilation is singly
exponential in the incidence treewidth but hides large constants in the exponent. To balance that, we give a more efficient FPT algorithm for model counting that applies to a sub-family of the constraints and does not require compilation.
\end{abstract}

\section{Introduction}

Knowledge compilation is a domain of computer sciences that studies
the different ways to represent functions. Classes of representations,
or \emph{languages}, have been invented where specific problems become
tractable. In particular, many Boolean languages have been created
that support polynomial-time model counting, that is, determining the
number of truth assignments on which a function $f : \{0,1\}^n
\rightarrow \{0,1\}$ evaluates to~$1$. In practice, several model
counters for CNF formulas are transforming, or \emph{compiling}, their inputs into the language d-DNNF of circuits in deterministic decomposable negation normal form, where model counting is feasible in linear time (in the size of the circuits). For instance, \texttt{C2D}~\cite{Darwiche04}, \texttt{DSharp}~\cite{MuiseMBH12}, \texttt{miniC2D}~\cite{OztokD15}, and \texttt{D4}~\cite{LagniezM17} all compile CNF formulas into (sublanguages) of d-DNNF. Other model counters do not explicitly compile the CNF but can be seen as compilers in disguise~\cite{KieselE23}. 

Compilation to d-DNNF is often hard in the sense that the d-DNNF
circuit representations of many functions provably have exponential
size, even for functions that belong to fragments where model counting
is tractable such as systems of Boolean linear
equations~\cite{deColnetM23}. On a more positive note, compiling from
CNF to d-DNNF is fixed-parameter tractable (FPT) when the parameter is
the treewidth of the CNF's primal graph~\cite{HuangD07} and when it is
the treewidth of its incidence graph~\cite{BovaCMS15}. To further
understand when the input is easy to compile, one possibility is to
find new CNF parameters that dominate treewidth and yet enable FPT
compilation to d-DNNF, for instance, decision-width~\cite{OztokD14},
CV-width~\cite{OztokD14a} and PS-width~\cite{BovaCMS15}. Another
direction is to part from the CNF input and study the parameterized
compilability of functions given in a different format such as DNF or
a general Boolean circuit~\cite{AmarilliCMS20}. In this paper, we
follow the latter direction: we investigate the FPT compilation of
systems (i.e., conjunctions) of Boolean constraints that are not just
CNF. We see each CNF formula as a system of disjunctive clauses and
identify other types of constraints that can be added to the system
while ensuring an FPT compilation to d-DNNF parameterized by its
incidence treewidth $k$. While compiling general systems of
constraints to d-DNNF cannot be FPT parameterized $k$ unless
$\text{FPT} = \text{W}[1]$ (since even satisfiability is
$\text{W}[1]$-hard in this setting~\cite{SamerS10}), we show that
there is an FPT compilation algorithm for specific constraints.  In a
first approximation, these are the constraints that can only be in a
constant number of states when assigning any subset of the variables
in any way. Parity constraints (i.e., constraints of the form ``the
sum of the variables is odd/even'') are a good example where there are
only two possible states: given a parity constraint and a subset of
its variables, assigning these variables in any way either results in
the odd parity constraint or the even parity constraint on the
remaining variables. Our main result is the following.

\begin{theorem}\label{theorem:intro_thm}
Let $w$ be a constant integer and $\calC$ be a class of constraints such that, for every $c \in \calC$, and every subset $Y$ of $c$'s variables, at most $w$ different constraints can be obtained from $c$ by assigning variables in $Y$ in any way. There is an algorithm that, given a system $F$ of constraints from $\calC$, returns in time $2^{O(k)}\poly(|F|+|var(F)|))$ a d-DNNF circuit for~$F$, where $k$ is the treewidth of the incidence graph of $F$.
\end{theorem}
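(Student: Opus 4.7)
The plan is to lift the known CNF-to-d-DNNF compilation via incidence treewidth (as in the cited Bova et al.\ result) to a richer state space where each constraint can be in one of at most $w$ residual forms. First I would compute, in FPT time, a nice tree decomposition $(T,\{B_t\}_t)$ of the incidence graph of $F$ of width $O(k)$, with the usual introduce-variable, forget-variable, introduce-constraint, forget-constraint, and join node types. Write $\var(B_t)$ and $\clauses(B_t)$ for the variables and constraints occurring in bag $B_t$.

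For each node $t$ I would maintain a family of d-DNNF sub-circuits indexed by \emph{traces}. A trace $\tau$ at $t$ is a pair consisting of an assignment $\alpha : \var(B_t) \to \{0,1\}$ together with, for each $c \in \clauses(B_t)$, a choice of residual constraint $c'$ that $c$ should reduce to after fixing the variables of $c$ already introduced and forgotten in the subtree rooted at $t$. By the hypothesis on $\calC$, for every $c$ there are at most $w$ such residuals, so the number of traces at $t$ is at most $2^{|\var(B_t)|} \cdot w^{|\clauses(B_t)|} = 2^{O(k)}$ since $w$ is constant. The circuit $D_{t,\tau}$ should represent the set of assignments to variables forgotten in the subtree rooted at $t$ that (i) agree with the bag variables via $\alpha$, (ii) force each $c \in \clauses(B_t)$ to reduce exactly to the residual specified by $\tau$, and (iii) satisfy every constraint that has already been forgotten below $t$.

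I would then propagate the $D_{t,\tau}$ bottom-up. Introduce nodes create trivial sub-circuits. A forget-variable node $t$ forgetting $x$ sets $D_{t,\tau} := D_{t',\tau[x\mapsto 0]} \lor D_{t',\tau[x\mapsto 1]}$. A forget-constraint node forgetting $c$ keeps only the traces of the child in which $c$'s residual is the constant~$1$ and disjoins the corresponding circuits. A join node $t$ with children $t_1,t_2$ computes $D_{t,\tau}$ as the disjunction, over pairs $(\tau_1,\tau_2)$ whose variable parts agree with $\tau$ and whose pairwise-composable residuals coincide with the residual in $\tau$, of $D_{t_1,\tau_1} \land D_{t_2,\tau_2}$. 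At the root, the disjunction over satisfying traces yields a d-DNNF for $F$.

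The main obstacle will be maintaining determinism and decomposability throughout. Decomposability at conjunctions is ensured by the standard property that the sets of variables forgotten in disjoint subtrees are disjoint and that bag variables do not appear in the scopes of the sub-circuits. Determinism is the delicate part: distinct traces at~$t$ must index pairwise disjoint sets of assignments. This should follow inductively because a trace pins down the bag assignment and, through the residual, the behaviour of each bag constraint on the already-processed variables, so two distinct traces must differ on some forgotten variable or on the residual of some forgotten constraint, forcing the corresponding assignment sets to be disjoint. Time and size are then dominated by $2^{O(k)}$ traces per node times $\poly(|F|+|\var(F)|)$ nodes, with the $w^{O(k)}$ blow-up accounting for the large hidden constants mentioned in the theorem.
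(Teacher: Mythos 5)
Your route is genuinely different from the paper's, which does not run a dynamic program over the tree decomposition at all: it recasts the hypothesis as the constraints being $w$-slim for complete SDNNFs, extracts from a (modified) tree decomposition a vtree $\tau_c$ for each constraint (Lemma~\ref{lemma:stepOne}), builds a width-$O(w)$ complete SDNNF for each constraint over that vtree, Tseitin-encodes these circuits into a CNF whose incidence treewidth is shown to be $O(w\cdot k)$ (Lemmas~\ref{lemma:mergeTD} and~\ref{lemma:stepThree}), invokes the CNF compilation result (Theorem~\ref{theorem:cnf_fpt_compile}) as a black box, and finally forgets the auxiliary variables while preserving determinism and structuredness (Lemma~\ref{lemma:forget_Tseitin_variables}). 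Your direct DP is instead the compilation analogue of the paper's own counting algorithm in Section~\ref{section:model_counting}; it avoids vtrees, Tseitin encodings and the forgetting step, at the price of having to establish decomposability and determinism by hand, and it yields d-DNNF rather than the stronger d-SDNNF the paper obtains. The two facts that make your scheme viable are at least implicit in your sketch: the residual of $c$ is a function of the partial assignment to its already-forgotten variables, so for a fixed bag assignment distinct residual tuples index disjoint model sets; and the residual after assigning the union of two disjoint variable sets is determined by the two individual residuals, which is what makes the join rule well defined (the paper isolates exactly this observation for its state transition systems).

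Two of your transition rules fail as written, though both are repairable. At a forget-variable node you must take $D_{t,\tau}:=(\bar x\land D_{t',\tau[x\mapsto 0]})\lor(x\land D_{t',\tau[x\mapsto 1]})$; without the literals the circuit never mentions $x$, and the disjunction is not deterministic, since the two child circuits range over the same variables and their model sets need not be disjoint once the bag assignment is dropped. Relatedly, your claimed invariant that \emph{all} distinct traces index pairwise disjoint assignment sets is false for traces differing only in the bag assignment; the correct invariant is disjointness for a fixed bag assignment, with the literals added at forget-variable nodes supplying determinism there. At a forget-constraint node the test cannot be that the residual of $c$ is the constant $1$: in a nice tree decomposition a constraint is routinely forgotten while some of its variables are still in the bag (namely $\var(c)\cap b(t)$), so the correct test is that the residual evaluated on the bag assignment restricted to $\var(c)\cap b(t)$ equals $1$; as stated, your rule discards legitimate models. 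Finally, to compute residuals and their composition tables you need effective access to the constraints beyond the purely semantic hypothesis of the theorem; this is the role of Assumption~\ref{assumption} in the paper's formal statement (Theorem~\ref{theorem:main_fpt_compile}), and your algorithm needs an analogous assumption.
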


\noindent The neatest proof we could find for this result on the compilation of non-CNF systems---which we present in this paper---goes through constructing CNF encodings of the systems and then compiling these encodings to d-DNNF.

Theorem~\ref{theorem:intro_thm} includes a rather lengthy
characterization of our constraints. For the proof, we use a ``more
advanced'' characterization in knowledge compilation terms. The class
$\calC$ will be a class of constraints that are \emph{$w$-slim} for
the language OBDD or the language SDNNF. This notion is inspired by
Wegener's book~\cite[Section 5.3]{Wegener00} and introduced in
detail in Section~\ref{section:slim_constraint_STS}. We are not the first to consider OBDD constraints to prove FPT results, for instance~\cite{ChenG10a} use similar constraints to prove FPT results for the constraint satisfaction problem. We prove our theorem
in Section~\ref{section:fpt_compilation} with a more accurate running
time where the exponential part also depends
on~$w$. Finally, in Section \ref{section:model_counting}, we show that we can
count models faster than by compilation if we put additional
restrictions on the constraints.  One of our results is
based on bounded state-size (introduced in
Section~\ref{section:slim_constraint_STS}), which captures important
constraints like XORs, modulo, and cardinality constraints.

\begin{theorem}
Let $F$ be a system of constraints whose maximum state-size is $w$. There is an algorithm that, given $F$ and a nice width-$k$ tree decomposition of the incidence graph of~$F$, counts the models of $F$ in time $O(w^{2k}(|F|+|var(F)|))$ in the unit-cost model. 
\end{theorem}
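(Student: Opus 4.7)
The plan is dynamic programming over the given nice tree decomposition, tracking at each bag a table indexed by partial assignments to the bag's variables and ``state profiles'' of the bag's constraints. Concretely, at a node $t$ with bag $B_t$, write $X_t = B_t \cap \var(F)$ and $C_t = B_t \cap \clauses(F)$, and let $V_t$ denote the variables appearing in the subtree rooted at $t$. For a constraint $c \in C_t$, its \emph{state} at $t$ is the equivalence class of $c$ under the partial assignment to $\var(c) \cap (V_t \setminus B_t)$ induced by a candidate extension; by the state-size hypothesis, there are at most $w$ such states per constraint. The DP entry $\#[t,\alpha,\sigma]$ is defined as the number of assignments $\tau : V_t \to \{0,1\}$ that extend $\alpha : X_t \to \{0,1\}$, satisfy every constraint already forgotten in the subtree at $t$, and induce the state $\sigma(c)$ on each $c \in C_t$.

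Transitions for the leaf, introduce-variable, introduce-constraint, forget-variable, and forget-constraint nodes are routine. An introduce-variable node $x$ duplicates each child entry for $x \mapsto 0$ and $x \mapsto 1$; an introduce-constraint node appends $c$ with its initial state (no variables of $c$ have been forgotten yet, by the co-occurrence property of tree decompositions); a forget-variable node $x$ sums child entries $(\alpha \cup \{x \mapsto b\}, \sigma')$ over $b \in \{0,1\}$ after pushing $x \mapsto b$ through $\sigma'(c)$ for each $c \in C_t$ with $x \in \var(c)$; a forget-constraint node retains only those child entries whose state for the forgotten constraint, after applying $\alpha$ on its remaining bag variables, evaluates to ``satisfied''. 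The crucial step is the join: with $B_{t_1} = B_{t_2} = B_t$, we set
\[
  \#[t, \alpha, \sigma] = \sum_{(\sigma_1, \sigma_2) \mapsto \sigma} \#[t_1, \alpha, \sigma_1] \cdot \#[t_2, \alpha, \sigma_2],
\]
where $(\sigma_1, \sigma_2) \mapsto \sigma$ means that for every $c \in C_t$ the combination of $\sigma_1(c)$ and $\sigma_2(c)$ equals $\sigma(c)$. Well-definedness of this combination is the key technical point: if $\tau_1, \tau_1'$ induce the same state of $c$ at $t_1$, then $c$ restricted by either is the same function of the remaining variables of $c$, so further restricting by any $\tau_2$ yields the same result; the symmetric statement holds for $\tau_2, \tau_2'$. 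Chaining these two replacements shows that the state of $c$ at $t$ depends only on $(\sigma_1(c), \sigma_2(c))$.

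For the running time, a bag has at most $k+1$ elements, so the table has at most $w^{k+1}$ entries (taking $w \ge 2$ without loss of generality). Leaf, introduce, and forget nodes each process $O(w^{k+1})$ entries with $O(k)$ bookkeeping per entry, while the join iterates over at most $w^{2(k+1)}$ pairs $(\sigma_1, \sigma_2)$ per fixed $\alpha$, again with $O(k)$ work per pair to assemble the combined state profile; the join therefore dominates at $O(w^{2k})$ per node in the unit-cost model. A nice tree decomposition of the incidence graph has $O(|F| + |\var(F)|)$ nodes, yielding the claimed total bound $O(w^{2k}(|F| + |\var(F)|))$, and the model count of $F$ is finally read off at the root.

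The main obstacle is the join: verifying that the combined state of each constraint is a well-defined function of the two child states, rather than depending on unrecoverable details of the underlying assignments, is the non-routine ingredient, and the symmetry argument above is what makes the DP correct; everything else is bookkeeping against the definition of state.
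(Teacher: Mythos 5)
Your proposal is correct and follows essentially the same route as the paper: a dynamic program over the given nice tree decomposition whose tables are indexed by assignments to the bag's variables and by one state per bag constraint, with leaf/introduce/forget nodes handled by direct bookkeeping and the join node by summing over compatible pairs of child state profiles, giving $O(w^{2k})$ work per node. The only cosmetic difference is that you take a constraint's state to be its residual function under the already-forgotten variables and justify the join combination via commutativity of restriction, whereas the paper works directly with the states of a commutative state transition system and its $+$ operation; the two are interchangeable because the residual function is determined by the reached CSTS state, which is also what bounds your number of states by $w$.
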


\section{Preliminaries}\label{section:A preliminaries $x$}

A Boolean variable $x$ takes a value in $\{0,1\}$. \emph{Literals} are
denoted by $x$ and $\bar x$. An \emph{assignment} to a set $X$ of
Boolean variables is a mapping from $X$ to $\{0,1\}$, and a
\emph{Boolean function} $f$ over $X$ maps the assignments to $X$ to
$\{0,1\}$. For us, \emph{constraints} are just Boolean functions that
appear in systems (conjunctions of constraints). \emph{CNF formulas}
are systems of constraints whose constraints are \emph{clauses} (disjunctions
of literals). We write $|F|$ for the number of constraints in
the system~$F$. Boolean circuits are another way to represent functions. The size
of a circuit~$D$, denoted $|D|$, is its number of gates and
connectors. We denote by $\gates(D)$ the set of gates of $D$. For $g
\in \gates(D)$, $D_g$ is the subcircuit of $D$ whose output gate is
$g$. We write $f(X)$, $D(X)$, and $F(X)$ to indicate that the variable
set of a function, circuit, or system is $X$, respectively. If this set is not explicit, we use $\var(f)$, $\var(D)$, $\var(F)$.

\subsection{Treewidth and Tree Decompositions}

A \emph{tree decomposition} $\calT$ of a graph $G$ is a pair $(T,b)$ with $T$ a tree and $b : V(T) \rightarrow \calP(V(G))$ a \emph{bag} function such that
(1)~$\bigcup_{t\in V(T)} b(t) = V(G)$,
(2)~for all $uv \in E(G)$, there is $t\in T$ such that $\{u,v\}
  \subseteq b(t)$, and
(3)~for all  $v \in V(G), T[t \mid v \in b(t)]$ is connected.
The \emph{width} of $\calT$ is $\max_{t \in T}|b(t)|$. The
\emph{treewidth} of $G$ is $\tw(G) := \min_\calT \max_{t \in T}|b(t)|
- 1$ where $\calT$ ranges over all tree decompositions of $G$. A tree
decomposition is \emph{nice} when it is rooted and when each node $t
\in V(T)$ is of one of the following three types: a join node, an
introduce node, or a forget node. $t$ is a \emph{join node} if it has two children $t_1$ and $t_2$ and $b(t) = b(t_1) = b(t_2)$. $t$ is an \emph{introduce node} for $v \in V(G)$ if it has a single child $t'$ and $v \not\in b(t')$ and $b(t) = b(t') \cup \{v\}$. $t$ is a \emph{forget node} for $v \in V(G)$ if it has a single child $t'$ and $v \not\in b(t)$ and $b(t') = b(t) \cup \{v\}$. In addition, the bag of the root node is empty. Every tree decomposition can be made nice without increasing its width.

The \emph{incidence graph} of a system $F$ of constraints, denoted by $G_F$, is the graph whose vertices are the constraints and the variables of $F$ and such that there is an edge between a constraint $c$ and a variable $x$ if and only if $x \in \var(c)$. The \emph{incidence treewidth} of $F$ is $\tw_i(F) := \tw(G_F)$.

\subsection{OBDDs and SDNNF Circuits}

\paragraph{(d-)SDNNF Circuits.} A variable tree (vtree) $\tau$ over a
set $X$ of variables is a rooted binary tree whose leaves are in bijection with $X$. For every $t \in V(\tau)$, $\var(t)$ is the set of variables on the leaves below $t$. A circuit $D$ in structured-decomposable negation normal form (SDNNF)~\cite{PipatsrisawatD08} is a Boolean circuit with literal inputs, whose gates are binary $\lor$-gates and binary $\land$-gates, and such that there exist a vtree $\tau$ over $\var(D)$ and a mapping $\lambda : \gates(D) \rightarrow V(\tau)$ verifying the following:
\begin{itemize}
\item[•] For every $g \in \gates(D)$, $\var(D_g) \subseteq \var(\lambda(g))$.
\item[•] For every $\lor$-gate $g = g_1 \lor g_2$, $\lambda(g) = \lambda(g_1) = \lambda(g_2)$.
\item[•] For every $\land$-gate $g = g_1 \land g_2$, $\lambda(g)$ has two children $t_1$ and $t_2$ and $\lambda(g_1)$ is below $t_1$ and $\lambda(g_2)$ is below $t_2$.
\end{itemize}
\noindent We say that $D$ is \emph{structured by} $(\tau,\lambda)$ and
sometimes omit $\lambda$. An example of an SDNNF circuit is shown in Figure~\ref{figure:SDNNF} with the mapping $\lambda$ given by the color code. An SDNNF circuit $D$ is deterministic (d-SDNNF) when for all $\lor$-gates $g = g_1 \lor g_2$, we have $D^{-1}_{g_1}(1) \cap D^{-1}_{g_2} = \emptyset$. Counting the models of $D$, i.e., finding $|D^{-1}(1)|$, is tractable when $D$ is a d-SDNNF circuit. 

\paragraph{OBDDs.} A binary decision diagram (BDD) is a directed
acyclic graph with a single source, two sinks labeled $0$ and~$1$,
whose internal nodes are decision nodes with two distinct children and
labeled by variables. A node $v$ labeled by variable $x$ and with
children $v_0$ and $v_1$ is recursively interpreted as a Boolean
function $v = (\bar x \land v_0) \lor (x \land v_1)$. Every assignment
$\alpha$ corresponds to a path in the DAG. Starting from the root,
assuming $\alpha$'s path reaches $v$ follows $v_0$ if $\alpha(x) = 0$
and $v_1$ otherwise. The assignments satisfying the BDD are exactly
those whose paths reach the sink $1$. For a total order $\pi$ on the
variables, a $\pi$-Ordered BDD ($\pi$-OBDD) is a BDD whose variables
appear at most once along every path from the root to a sink and
always in an order consistent with~$\pi$. A $\pi$-OBDD is
\emph{complete} if every path contains all the variables. d-SDNNF circuits generalize OBDDs in the sense that there is a simple linear-time rewriting that transforms  OBDDs into d-SDNNF circuits structured by \emph{linear vtrees} (i.e., vtrees whose internal nodes all have a leaf child).

\paragraph{Width measures.} The \emph{width} of an OBDD is the maximum number of nodes labeled by the same variable. Note that making an OBDD complete, while feasible in linear time, can also increase the width by a linear factor~\cite{BolligW00}. The \emph{width} of an SDNNF circuit structured by $(T,\lambda)$ is defined as $\max_{t \in V(T)} |\lambda^{-1}(t)|$. This definition differs from that of~\cite{CapelliM19} but is more convenient for stating our results.

\section{Slim Functions and STS}\label{section:slim_constraint_STS}

In this paper, constraints are \emph{global constraints}~\cite{HoeveK06}. So we know the type (e.g., clauses, parity constraints, cardinality constraints) of every constraint we manipulate.

CNF formulas are known to be FPT compilable to d-SDNNF parameterized
by their incidence treewidth.

\begin{theorem}[\cite{BovaCMS15}]\label{theorem:cnf_fpt_compile}
There is an algorithm that transforms any CNF formula $F$ into an equivalent d-SDNNF circuit in time $2^{O(\tw_i(F))}\poly(|F| + |\var(F)|)$.
\end{theorem}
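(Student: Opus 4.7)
The plan is to run a dynamic program on a nice tree decomposition of the incidence graph $G_F$ and to read off both a d-SDNNF circuit and a compatible vtree from the decomposition. First I would compute a nice tree decomposition $(T,b)$ of $G_F$ of width $O(k)$ in time $2^{O(k)}\poly(|F|+|\var(F)|)$, e.g.\ via Bodlaender's algorithm followed by standard nicening. For each node $t$, I split its bag into a variable part $X_t := b(t)\cap\var(F)$ and a clause part $C_t := b(t)\cap F$, and define a state set $Q_t$ whose elements are pairs $(\sigma,S)$ with $\sigma\colon X_t\to\{0,1\}$ and $S\subseteq C_t$ recording which clauses of the bag are already satisfied by variables forgotten strictly below $t$. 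This gives $|Q_t|\le 2^{O(k)}$.

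Next, I would build bottom-up a gate $g_{t,q}$ for every pair $(t,q)$ representing the function on variables forgotten strictly below $t$ whose ``summary'' at $t$ equals $q$. Leaves produce the constant $1$. An introduce-variable node for $x$ produces $g_{t,q}$ by AND-ing the literal consistent with $\sigma(x)$ to the unique compatible child-gate; a forget-variable node for $x$ produces $g_{t,q}$ as the OR over the two child-states extending $q$ by $x=0$ and $x=1$, updating $S$ whenever the new value satisfies a bag-clause; introduce/forget clause nodes simply update $S$ (and OR over those child-states where the forgotten clause is already marked satisfied); a join node AND-s the two child-gates for the same $q$. The output is the OR of the $g_{r,q_\emptyset}$ at the root. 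A standard inductive argument shows the circuit computes $F$; since each node contributes $2^{O(k)}$ gates, the total size is $2^{O(k)}\poly(|F|+|\var(F)|)$.

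The two properties that separate a d-SDNNF from an arbitrary d-DNNF are structuredness and determinism. For the vtree, I would derive a rooted binary tree $\tau$ from $T$ by restricting to the variable-forget nodes (in the order they are forgotten along $T$) and binarising; each AND-gate then splits its variable set exactly as prescribed by $\tau$, because the bag variables common to the two children of a join node sit above the split. For determinism, each OR at a forget-variable node is trivially disjoint since its disjuncts pin the newly-forgotten variable to different values, while each OR at a forget-clause node is disjoint because distinct child-states differ on the ``satisfied'' flag $S$, which is itself a deterministic function of the forgotten assignment.

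The hardest part, I expect, is arranging the state design and the vtree so they fit together: a single construction must yield a circuit that is \emph{structured} by one vtree over $\var(F)$ \emph{and} deterministic at every OR-gate. Determinism is delicate at forget-clause nodes, where the OR is not activated by a freshly-fixed variable; here one must argue that distinct values of $S$ on the child induce genuinely disjoint sets of extensions below. The rest of the argument is routine, if notationally heavy, bookkeeping.
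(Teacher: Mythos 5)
First, note that the paper does not prove this statement: Theorem~\ref{theorem:cnf_fpt_compile} is imported from \cite{BovaCMS15} and used as a black box, so there is no in-paper proof to compare against. Your proposal follows, in outline, exactly the route of that cited work: a bottom-up dynamic program over a nice tree decomposition of the incidence graph with states $(\sigma,S)$, a circuit whose gates are indexed by (node, state) pairs, a vtree read off from the variable-forget nodes of the decomposition, and determinism certified by disjointness of states. The approach is the right one.

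Two details as written would, however, break the construction. First, you attach the literal for $x$ at the \emph{introduce} node. This contradicts your own semantics of $g_{t,q}$ (a function of the variables forgotten strictly below $t$ --- an introduced variable is not yet forgotten), and it genuinely breaks decomposability and structuredness: in a nice tree decomposition a variable may be introduced several times, once in each branch below a join node whose bag contains it, so both children of that join's $\land$-gate would then mention $x$. The literal must instead be attached at the (unique) forget node for $x$, where it also certifies determinism of the $\lor$-gate created there. Second, your forget-clause rule (``OR over those child-states where the forgotten clause is already marked satisfied'') loses models: a clause $c$ forgotten at $t$ may be satisfied not by a variable recorded in $S$ but by the current bag assignment $\sigma$ on $\var(c)\cap b(t)$. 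The correct rule is $g_{t,(\sigma,S)} = g_{t',(\sigma,S\cup\{c\})} \lor g_{t',(\sigma,S)}$, with the second disjunct present only when $\sigma$ satisfies $c$; determinism survives because membership of $c$ in $S$ is a deterministic function of the assignment to the forgotten variables. With these repairs, plus binarising the $\lor$-gates to match the definition of SDNNF used here, your argument goes through and recovers the cited result.
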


\noindent The result leans on the fact that for every clause $c$,
every subset $Y \subseteq \var(c)$, and every assignment
$\alpha$ to $Y$, the clause $c|\alpha$ where variables are assigned as
in $\alpha$ can be in only two states: either it is satisfied, or it
is the projection of $c$ onto the complement of $Y$. Note that
$\alpha$ falsifying $c$ falls into the second case ($Y = \var(c)$ and
$c|\alpha$ is the empty clause). Our intuition is that FPT compilation
to d-SDNNF should be possible for systems of constraints with a
similar property: when subject to any assignment to any
fixed set $Y$, every constraint of the system can only be in a few states. For example, consider an XOR constraint $c : x_1 \oplus x_2 \oplus x_3 \oplus x_4 \oplus x_5 = 0$ and $Y = \{x_1,x_2,x_3\}$. Then $c|\alpha$ is either the odd parity constraint $x_4 \oplus x_5 = 1$, or the even parity constraint $x_4 \oplus x_5 = 0$.

\subsection{Slim Functions for Complete OBDDs and SDNNFs}\label{section:slim_constraint}

One can consider  a \emph{complete} OBDD is as state diagram where the values of the variables are read in a predefined order. The width of the complete OBDD is then the maximum number of states reachable after having set any number of variables in that order. The property we want for our constraints translates into the requirement that the smallest possible width in one of their $\pi$-OBDD representations is bounded for \emph{all} the variable orders~$\pi$. In the following, $h : \mathbb{N} \rightarrow \mathbb{R}$ is a real function.
 
\begin{definition}
A class $\calF$ of Boolean functions is \emph{$h$-slim for complete OBDDs} when, for every $n$-variables function $f \in \calF$ and every total order $\pi$ of $\var(f)$, there is a complete $\pi$-OBDD of width at most $h(n)$ computing $f$.
\end{definition} 

The notions of $O(1)$-slim functions, $O(n)$-slim functions, etc.\
should be self-explanatory. Given a fixed constant $w$, we talk of
$w$-slim functions when $h$ is the constant~$w$ function. When $h$ is
an unknown polynomial, the $h$-slim functions for complete OBDDs
coincide with the \emph{nice functions} from Wegener's
book~\cite[Section 5.3]{Wegener00}. Our proofs below rely on SDNNF
representations rather than OBDD representations. The concept of
$h$-slim functions generalizes to SDNNF.

\begin{definition}
A class $\calF$ of Boolean functions is \emph{$h$-slim for complete SDNNFs} when, for every $n$-variables function $f \in \calF$ and vtree $T$ over $\var(f)$, there is a complete SDNNF of width at most $h(n)$ and with vtree $T$ that computes~$f$.
\end{definition} 

Given the rewriting of complete OBDDs into complete d-SDNNFs, it is immediate that having a small OBDD-width for every variable ordering implies having a small
SDNNF-width for every \emph{linear} vtree. However, generalizing to
\emph{all} possible vtrees is not straightforward.

\begin{lemma}\label{lemma:slimOBDDgiveNiceSDNNF}
If a class $\calF$ of function is $h$-slim for complete OBDDs, then it is $O(h^3)$-slim for complete SDNNFs. 
\end{lemma}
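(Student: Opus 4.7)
The plan is to exploit an OBDD ordering that is compatible with $T$ and then convert a single OBDD of $f$ into an SDNNF structured by $T$ by a local ``fragmenting'' construction. First I would pick a variable ordering $\sigma$ of $\var(f)$ obtained by a left-to-right DFS traversal of the leaves of $T$: then for every vtree node $t$ the set $\var(t)$ forms a contiguous block of $\sigma$, and for every internal $t$ with left child $t_1$ and right child $t_2$ the block for $\var(t_1)$ immediately precedes the block for $\var(t_2)$. By $h$-slimness of $\calF$ for complete OBDDs, there exists a complete $\sigma$-OBDD $\Phi$ for $f$ of width at most $h(n)$.

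For each vtree node $t$, I would let $L_t$ be the set of OBDD nodes at the layer of the first variable of $\var(t)$ in $\sigma$ and $R_t$ the set of OBDD nodes at the layer immediately after the $\var(t)$-block (so $R_t = \{0,1\}$ when $t$ is the root). For each internal $t$ with children $t_1, t_2$, the boundary layer between the $\var(t_1)$- and $\var(t_2)$-blocks gives a set $M_t$ with $L_{t_1} = L_t$, $R_{t_2} = R_t$, and $M_t = R_{t_1} = L_{t_2}$, and by the OBDD-width bound each of these sets has size at most $h(n)$. For each $(u,v) \in L_t \times R_t$ I would define the \emph{fragment} $\phi_{u,v}^t \colon \{0,1\}^{\var(t)} \to \{0,1\}$ by $\phi_{u,v}^t(\beta) = 1$ iff the $\Phi$-path starting at $u$ under input $\beta$ on $\var(t)$ terminates at~$v$. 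Then I would build the SDNNF bottom-up along $T$: at a leaf $t = \{x\}$ every non-zero fragment is the literal $x$ or $\bar x$, which I include as inputs; at an internal $t$, for each pair $(u,v) \in L_t \times R_t$ I build the gate
$$\phi_{u,v}^t \;=\; \bigvee_{m \in M_t}\ \phi_{u,m}^{t_1} \land \phi_{m,v}^{t_2},$$
consisting of $|M_t|$ binary $\land$-gates and their $\lor$-combination, all mapped to $t$. The output of the SDNNF is the gate $\phi_{r_0,1}^{\text{root}}$, where $r_0$ is the root of $\Phi$, which computes $f$ by the path semantics of $\Phi$.

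Structured decomposability and correctness are routine (every $u$-to-$v$ path through the $\var(t)$-block factors uniquely through some $m \in M_t$, and $\var(t_1), \var(t_2)$ partition $\var(t)$), but the main obstacle will be establishing determinism of the $\lor$-gates, which is where the single-path property of OBDDs is crucial: for any $\beta_1 \in \{0,1\}^{\var(t_1)}$ the $\Phi$-path from $u$ under $\beta_1$ reaches a \emph{uniquely} determined $m^\ast \in M_t$, so $\phi_{u,m}^{t_1}(\beta_1) = 1$ holds for at most one $m$, whence at most one disjunct can evaluate to $1$. Counting gates mapped to $t$ yields at most $|L_t|\cdot|R_t| \leq h(n)^2$ $\lor$-gates and $|L_t|\cdot|M_t|\cdot|R_t| \leq h(n)^3$ $\land$-gates, giving width $O(h(n)^3)$. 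Completeness of $\Phi$ (every path visits all variables) ensures each non-zero fragment syntactically depends on all of $\var(t)$, so the resulting SDNNF is complete; fragments that are identically $0$ (no $u$-to-$v$ path exists) are simply omitted without affecting the computed function.
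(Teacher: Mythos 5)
Your proposal is correct and follows essentially the same route as the paper: your fragments $\phi^t_{u,v}$ are exactly the paper's circuits $D_{s^{i_t}_{k}\rightarrow s^{j_t+1}_{\ell}}$, your recursion $\phi^t_{u,v}=\bigvee_{m}\phi^{t_1}_{u,m}\land\phi^{t_2}_{m,v}$ is the paper's equation for the inductive case, and the $|L_t|\cdot|M_t|\cdot|R_t|\le h^3$ count gives the same width bound. The only (immaterial) difference is the base case: you recurse all the way to the vtree leaves, where fragments degenerate to literals, whereas the paper stops at maximal right-linear subtrees and invokes the standard OBDD-to-SDNNF translation there.
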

\begin{proof}
Let $f \in \calF$. Let $\tau$ be a vtree over $X = \{x_1,\dots,x_n\}$ and suppose the left-to-right ordering of $\tau$'s leaves is the natural ordering $\pi : x_1 \prec \dots \prec x_n$. We show how to construct an SDNNF circuit respecting $\tau$ and computing $f$. For every $1 \leq i < j \leq n$ we define the \emph{$[i,j]$-slice} of $X$, denoted by $X_{[i,j]}$, as $\{x_i,x_{i+1},\dots,x_j\}$. By definition of $\pi$, for every $t \in V(\tau) \setminus \mathit{leaves}(\tau)$, the vtree $\tau_t$ rooted under $t$ in $\tau$ is a vtree over a slice $X_{[i_t,j_t]}$. We define $S \subseteq V(\tau)$ as the set of nodes $t$ such that $\tau_t$ is right-linear and such that, if $t$ has a parent $t'$, then $\tau_{t'}$ is not linear. The set of slices $X_{[i_t,j_t]}$ for all $t \in S$ forms a partition of $X$. 

The construction uses a $\pi$-OBDD $B$ representing $f$. The levels of $B$ are numbered top-down: level $1$ corresponds to the decision nodes for $x_1$ (so only the root of $B$), $\dots$, level $n$ corresponds to the decision nodes for $x_n$, and level $n+1$ comprises the two sinks of $B$. Let $w_i$ be the number of nodes at level $i$ of $B$ and introduce a symbol for each of these nodes: $s^i_1,\dots,s^i_{w_i}$. Note $w := \max_{i \in [n]}w_i$. We also define $w_{n+1} = 1$ and keep the symbols $0$ for $s^{n+1}_1$ and $1$ for $s^{n+1}_2$. For every $1 \leq i < j \leq n$, we define the \emph{$[i,j]$-slice} of $B$, denoted by $B_{[i,j]}$, as the function from $\{0,1\}^{X_{[i,j]}}$ to $\{s^{j+1}_1,\dots,s^{j+1}_{w_{j+1}}\}^{w_i}$ obtained by taking all levels of $B$ from the $i$th to the $(j+1)$th level and by replacing every node at the $(j+1)$th level by the corresponding symbol $s^{j+1}_{\ast}$. So $B_{[i,j]}$ is an OBDD with $w_i$ roots corresponding to $s^{i}_1,\dots,s^{i}_{w_i}$ and whose sinks are labelled by the symbols $s^{j+1}_1,\dots,s^{j+1}_{w_{j+1}}$. To be clear: the sinks of $B_{[i,j]}$ are \emph{explicitly} labelled by symbols $s^{j+1}_{\ast}$, and the roots are \emph{implicitly} mapped to the symbols $s^i_\ast$. An example of such an OBDD is shown Figure~\ref{figure:obdd_slice}.

\begin{figure}[h!]
\centering
\begin{subfigure}{0.4\textwidth}
\centering
\begin{tikzpicture}
\def\s{1.5};
\node[circle,draw,inner sep=\s,label={north:$s^3_1$}] (x11) at (0,0) {$x_3$};
\node[circle,draw,inner sep=\s,label={north:$s^3_2$}] (x12) at (1,0) {$x_3$};
\node[circle,draw,inner sep=\s,label={north:$s^3_3$}] (x13) at (2,0) {$x_3$};

\node[circle,draw,inner sep=\s] (x21) at (0,-1) {$x_4$};
\node[circle,draw,inner sep=\s] (x22) at (1,-1) {$x_4$};
\node[circle,draw,inner sep=\s] (x23) at (2,-1) {$x_4$};

\node[circle,draw,inner sep=\s] (x31) at (0.5,-2) {$x_5$};
\node[circle,draw,inner sep=\s] (x32) at (1.5,-2) {$x_5$};

\node[rectangle,draw,inner sep=\s] (s1) at (0,-3) {$s^6_1$};
\node[rectangle,draw,inner sep=\s] (s2) at (1,-3) {$s^6_2$};
\node[rectangle,draw,inner sep=\s] (s3) at (2,-3) {$s^6_3$};

\draw[-latex,dashed] (x11) -- (x21);
\draw[-latex] (x11) -- (x22);
\draw[-latex,dashed] (x12) -- (x22);
\draw[-latex] (x12) -- (x23);
\draw[-latex,dashed] (x13) -- (x23);
\draw[-latex] (x13) -- (x22);

\draw[-latex,dashed] (x21) -- (x31);
\draw[-latex] (x21) -- (x32);
\draw[-latex,dashed] (x22) -- (x32);
\draw[-latex] (x22) -- (x31);
\draw[-latex,dashed] (x23) to [out=-90,in=40] (x32);
\draw[-latex] (x23) to [out=-140,in=80] (x32);

\draw[-latex,dashed] (x31) -- (s1);
\draw[-latex] (x31) -- (s2);
\draw[-latex,dashed] (x32) -- (s2);
\draw[-latex] (x32) -- (s3);
\end{tikzpicture}
\caption{A ``slice'' of an OBDD}\label{figure:obdd_slice}
\end{subfigure}\begin{subfigure}{0.6\textwidth}
\centering
\hfill\begin{tikzpicture}
\def\s{1.5};
\node[circle,draw,inner sep=\s,label={north:$s^3_1$}] (x11) at (0,0) {$x_3$};

\node[circle,draw,inner sep=\s] (x21) at (0,-1) {$x_4$};
\node[circle,draw,inner sep=\s] (x22) at (1,-1) {$x_4$};

\node[circle,draw,inner sep=\s] (x31) at (0.5,-2) {$x_5$};
\node[rectangle,draw,inner sep=\s] (s0) at (1.5,-2) {$0$};

\node[rectangle,draw,inner sep=\s] (s1) at (0,-3) {$1$};

\draw[-latex,dashed] (x11) -- (x21);
\draw[-latex] (x11) -- (x22);

\draw[-latex,dashed] (x21) -- (x31);
\draw[-latex] (x21) -- (s0);
\draw[-latex,dashed] (x22) -- (s0);
\draw[-latex] (x22) -- (x31);

\draw[-latex,dashed] (x31) -- (s1);

\node[rectangle,draw,inner sep=\s] (s0) at (1,-3) {$0$};
\draw[-latex] (x31) -- (s0);
\def\o{2.5};

\node[circle,draw,inner sep=\s,label={north:$s^3_1$}] (x11) at (\o+0,0) {$x_3$};

\node[circle,draw,inner sep=\s] (x21) at (\o+0,-1) {$x_4$};
\node[circle,draw,inner sep=\s] (x22) at (\o+1,-1) {$x_4$};

\node[circle,draw,inner sep=\s] (x31) at (\o+0.5,-2) {$x_5$};
\node[circle,draw,inner sep=\s] (x32) at (\o+1.5,-2) {$x_5$};

\node[rectangle,draw,inner sep=\s] (s1) at (\o+0,-3) {$0$};
\node[rectangle,draw,inner sep=\s] (s2) at (\o+1,-3) {$1$};
\node[rectangle,draw,inner sep=\s] (s3) at (\o+2,-3) {$0$};

\draw[-latex,dashed] (x11) -- (x21);
\draw[-latex] (x11) -- (x22);

\draw[-latex,dashed] (x21) -- (x31);
\draw[-latex] (x21) -- (x32);
\draw[-latex,dashed] (x22) -- (x32);
\draw[-latex] (x22) -- (x31);

\draw[-latex,dashed] (x31) -- (s1);
\draw[-latex] (x31) -- (s2);
\draw[-latex,dashed] (x32) -- (s2);
\draw[-latex] (x32) -- (s3);

\def\o{5};

\node[circle,draw,inner sep=\s,label={north:$s^3_1$}] (x11) at (\o+0,0) {$x_3$};

\node[circle,draw,inner sep=\s] (x21) at (\o+0,-1) {$x_4$};
\node[circle,draw,inner sep=\s] (x22) at (\o+1,-1) {$x_4$};

\node[rectangle,draw,inner sep=\s] (x31) at (\o+0.5,-2) {$0$};
\node[circle,draw,inner sep=\s] (x32) at (\o+1.5,-2) {$x_5$};

\node[rectangle,draw,inner sep=\s] (s2) at (\o+1,-3) {$0$};
\node[rectangle,draw,inner sep=\s] (s3) at (\o+2,-3) {$1$};

\draw[-latex,dashed] (x11) -- (x21);
\draw[-latex] (x11) -- (x22);

\draw[-latex,dashed] (x21) -- (x31);
\draw[-latex] (x21) -- (x32);
\draw[-latex,dashed] (x22) -- (x32);
\draw[-latex] (x22) -- (x31);

\draw[-latex,dashed] (x32) -- (s2);
\draw[-latex] (x32) -- (s3);

\end{tikzpicture}\hfill\text{}
\caption{The OBDD $B_{[3,5]}$, $B_{s^{3}_{1} \rightarrow s^{6}_{1}}$, $B_{s^{3}_{1} \rightarrow s^{6}_{2}}$ and $B_{s^{3}_{1} \rightarrow s^{6}_{3}}$}\label{figure:split_OBDD}
\end{subfigure}
\caption{}
\end{figure}

From $B_{[i,j]}$ we create a collection $\calD_{[i,j]} = \{ D_{s^i_{k} \rightarrow s^{j+1}_{\ell}} \mid k \in [w_i], \ell \in [w_{j+1}]\}$ of complete SDNNF circuits over $X_{[i,j]}$ all structured by $\tau$ and such that $D_{s^i_{k} \rightarrow s^{j+1}_{\ell}}$ accepts exactly the assignments to $X_{[i,j]}$ for which the OBDD rooted under the $k$th root of $B_{[i,j]}$ evaluates to $s^{j+1}_{\ell}$.

\begin{claim}\label{claim:split_OBDD}
For every $t \in S$, from $B_{[i_t,j_t]}$ we can construct $\calD_{[i_t,j_t]}$ such that every SDNNF circuit in $\calD_{[i_t,j_t]}$ respects $\tau_t$ and has width $O(w)$.
\end{claim}
\begin{proof}
To construct $D_{s^{i_t}_{k} \rightarrow s^{j_t+1}_{\ell}}$, we keep only the OBDD rooted under the $k$th root of $B_{[i_t,j_t]}$, next we replace by the $0$-sink all sinks different from $s^{j_t+1}_{\ell}$, next we replace the $s^{j_t+1}_{\ell}$-sink by a $1$-sink. This gives a coomplete $\pi$-OBDD $B_{s^{i_t}_{k} \rightarrow s^{j_t+1}_{\ell}}$ that accepts exactly the assignments to $X_{[i_t,j_t]}$ for which the OBDD under the $k$th root of $B_{[i_t,j_t]}$ evaluates to $s^{j_t+1}_{\ell}$. See Figure~\ref{figure:split_OBDD} for an example. Finally we use the classical translation from complete OBDD to complete SDNNF to obtain $D_{s^{i_t}_{k} \rightarrow s^{j_t+1}_{\ell}}$. We have that $\mathit{width}(D_{s^{i_t}_{k} \rightarrow s^{j_t+1}_{\ell}}) = O(\mathit{width}(B_{s^{i_t}_{k} \rightarrow s^{j_t+1}_{\ell}})) = O(\mathit{width}(B_{[i_t,j_t]})) = O(w)$.
\end{proof}

Call $\mathit{ancestors}(S)$ the set of strict ancestor nodes of $S$ in $\tau$. The construction of $D$ follows a process that visits in $S \cup \mathit{ancestors}(S)$ in depth-first order.
\begin{itemize}
\item[•] \textbf{Base case.} For $t \in S$, we construct $B_{[i_t,j_t]}$ and the corresponding $\calD_{[i_t,j_t]}$ as in Claim~\ref{claim:split_OBDD}.
\item[•] \textbf{Inductive case.} For $t \in \mathit{ancestors}(S)$ with children $a$ and $b$ such that we have already constructed $\calD_{[i_a,j_a]}$ and $\calD_{[i_b,j_b]}$, by definition of $\pi$ we have that $i_t = i_a$, $i_b = j_a + 1$, and $j_t = j_b$. For every $k \in [w_{i_t}]$ and $\ell \in [w_{j_t+1}]$, we construct $D_{s^{i_t}_{k} \rightarrow s^{j_t+1}_{\ell}}$ as
\begin{equation}\label{eq:induction_obdd_to_sdnnf}
D_{s^{i_t}_{k} \rightarrow s^{j_t+1}_{\ell}} = \bigvee_{h \in [w_{j_a+1}]} (D_{s^{i_t}_{k} \rightarrow s^{j_a+1}_{h}}) \land (D_{s^{i_b}_{h} \rightarrow s^{j_t+1}_{\ell}})
\end{equation}
By induction hypothesis, all $D_{s^{i_t}_{k} \rightarrow s^{j_a+1}_{h}}$ and $D_{s^{i_b}_{h} \rightarrow s^{j_t+1}_{\ell}}$ are complete SDNNF circuits structured by $\tau_a$ and $\tau_b$, respectively, so each new $\land$-gate is decomposable and respects the vtree $\tau_t$. In addition the $\lor$-gates created with the $\bigvee_{h \in [w_{j_a + 1}]}$ have the same variable sets because, by completeness assumption, $var(D_{s^{i_t}_{k} \rightarrow s^{j_a+1}_{h}}) = var(D_{s^{i_t}_{k} \rightarrow s^{j_a+1}_{h'}})$ and $var(D_{s^{i_b}_{h} \rightarrow s^{j_t+1}_{\ell}}) = var(D_{s^{i_b}_{h'} \rightarrow s^{j_t+1}_{\ell}})$for all $h \neq h'$.  Thus $D_{s^{i_t}_{k} \rightarrow s^{j_t+1}_{\ell}}$ is a complete SDNNF respecting $\tau_t$. Given the functions computed by the $D_{s^{i_t}_{k} \rightarrow s^{j_a+1}_{h}}$-s and the $D_{s^{i_b}_{h} \rightarrow s^{j_t+1}_{\ell}}$-s, we have that $D_{s^{i_t}_{k} \rightarrow s^{j_t+1}_{\ell}}$ accepts exactly the assignments to $X_{[i_t,j_t]}$ on which the OBDD rooted under the $k$th root of $B_{[i_t,j_t]}$ evaluates to $s^{j_t+1}_{\ell}$.
\end{itemize}

Th construction ends the root $r$ of $\tau$ is visited. Since $w_{[r]} = 1$, $i_r = 1$, $j_r = n$ and $X_{[i_r,j_r]} = X$, we end up with the set $\calD_{[1,n]} = \{D_{s^{1}_{k} \rightarrow s^{n+1}_{1}}, D_{s^{1}_{1} \rightarrow s^{n+1}_{2}}\} = \{D_{s^{1}_{1} \rightarrow 0}, D_{s^{1}_{1} \rightarrow 1}\}$. The inductive argument gives that $D_{s^{1}_{1} \rightarrow 1}$ is a complete SDNNF circuit respecting $T$ and that accepts exactly the assignments to $X$ on which the OBDD $B$ evaluates to $1$. So $D := D_{s^{1}_{1} \rightarrow 1}$. Finally we look at the width of $D$. The number of gates in $D$ mapped to a given node below $T_t$ for some $t \in S$ is at most $\sum_{D \in \calD_{[i_t,j_t]}} \mathit{width}(D) \leq O(w)|\calD_{[i_t,j_t]}| \leq O(w^3)$. For a node $t \in \mathit{ancestors}(S)$, the gates mapped to it are exactly the $\lor$-gates and the $\land$-gates created in~(\ref{eq:induction_obdd_to_sdnnf}) for all possible $k$ and $\ell$. For a fixed $k$ and $\ell$, $D_{s^{i_t}_{k} \rightarrow s^{j_t+1}_{\ell}}$ contributes $O(w)$ gates, so for all possible $k$ and $\ell$ we end up with $O(w^3)$ gates of $D$ mapped to $t$. 
\end{proof}

\begin{lemma}
If a class $\calF$ of functions is $h$-slim for complete SDNNFs, then it is $2^{2^{O(h)}}$-slim for complete OBDDs.
\end{lemma}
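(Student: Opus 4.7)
The plan is to produce, for each $f \in \calF$ on $n$ variables and each total order $\pi : x_1 \prec \dots \prec x_n$, a complete $\pi$-OBDD of width at most $2^{2^{O(h)}}$ via a state-based construction driven by a complete SDNNF. Let $\tau$ be the right-linear vtree over $\var(f)$ whose leaves are $x_1, \dots, x_n$ in this left-to-right order, and for each $i \in [0,n-1]$ let $v_{i+1}$ be the vtree node with $\var(v_{i+1}) = \{x_{i+1}, \dots, x_n\}$ (so $v_1$ is the root and $v_n$ is the leaf $x_n$). By the $h$-slim hypothesis there is a complete SDNNF $D$ of width at most $h$ structured by $\tau$ computing $f$; set $G_{i+1} = \lambda^{-1}(v_{i+1})$, a set of at most $h$ gates.

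The key structural claim is that for every prefix assignment $\alpha$ to $\{x_1, \dots, x_i\}$ the function $f|_\alpha$ on $\{x_{i+1}, \dots, x_n\}$ depends on its input only through the tuple $(\gamma(\cdot))_{\gamma \in G_{i+1}}$; that is, there is a Boolean function $F_\alpha : \{0,1\}^{G_{i+1}} \to \{0,1\}$ with $f|_\alpha(y) = F_\alpha((\gamma(y))_{\gamma \in G_{i+1}})$ for every $y$. I would prove this by an induction on $D$ along the right spine of $\tau$, from $v_{i+1}$ upward to $v_1$, showing that every gate $g$ with $\lambda(g) \in \{v_1, \dots, v_{i+1}\}$ admits a decomposition $g = \bigvee_{\gamma \in G_{i+1}} L_\gamma^g \land \gamma$, where each $L_\gamma^g$ is a function of $\{x_j, \dots, x_i\}$ with $j$ the index of $\lambda(g)$. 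The base case $\lambda(g) = v_{i+1}$ is immediate (set $L_g^g = 1$ and $L_\gamma^g = 0$ for $\gamma \neq g$). The $\lor$-step takes the termwise disjunction of the two children's decompositions. The $\land$-step relies crucially on completeness: an $\land$-gate at $v_j$ must then have its children mapped to exactly $x_j$ and $v_{j+1}$, so one child is a literal of $x_j$ (absorbed into the $L_\gamma^g$) and the other, at $v_{j+1}$, decomposes by induction; distributing $\land$ over $\lor$ produces the stated form for $g$. Specializing to the output gate of $D$ (which lies at $v_1$ by completeness) yields the desired $F_\alpha$.

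With the claim in hand, I would assemble the OBDD level by level: at the level deciding $x_{i+1}$, create one node per distinct $F_\alpha$ arising from some $\alpha \in \{0,1\}^{\{x_1,\dots,x_i\}}$, and wire its $0$- and $1$-edges to the nodes at the next level corresponding to the restrictions $F_{\alpha \cup \{x_{i+1}=0\}}$ and $F_{\alpha \cup \{x_{i+1}=1\}}$. The result is a complete $\pi$-OBDD computing $f$, and its width at each level is bounded by the number of distinct $F_\alpha$, itself at most the total number of Boolean functions on $|G_{i+1}| \leq h$ inputs, namely $2^{2^h} = 2^{2^{O(h)}}$. The main obstacle is the $\land$-gate step of the induction: completeness is essential to force the right child of an $\land$-gate on the spine to be mapped to $v_{j+1}$ itself and not to some strict descendant of $v_{j+1}$, since otherwise that child could compute a function on a proper subset of $\{x_{i+1}, \dots, x_n\}$ that is not expressible over $G_{i+1}$, and the decomposition would break.
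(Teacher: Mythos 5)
Your proof is correct, but it takes a genuinely different route from the paper's. The paper's argument (given only as a sketch) is indirect: it Tseitin-encodes the width-$h$ complete SDNNF structured by the right-linear vtree into a CNF $F(X,Y)$ of incidence pathwidth $O(h)$, compiles that CNF into an OBDD of width $2^{O(h)}$ over $X\cup Y$, and then existentially projects out the auxiliary variables $Y$ using a lemma of Capelli and Mengel, which costs a second exponentiation---hence $2^{2^{O(h)}}$. Your argument is direct and self-contained: after fixing a prefix $\alpha$ of the variables, the subfunction $f|_\alpha$ factors through the at most $h$ gates mapped to the spine node $v_{i+1}$, so the number of distinct subfunctions per level---and hence the width of the level-by-level complete $\pi$-OBDD---is bounded by the number of possible $F_\alpha$. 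Your treatment of the $\land$-gates is the right crux, and completeness (every gate's subcircuit covering exactly $\var(\lambda(g))$, which forces the right child of a spine $\land$-gate to sit at $v_{j+1}$ itself rather than at a strict descendant) is indeed what makes the induction go through; this matches how the paper uses completeness elsewhere. Two small remarks. First, it is cleaner to put one OBDD node per distinct subfunction $f|_\alpha$ and use the $F_\alpha$'s only to \emph{count} them; merging nodes by $F_\alpha$ as you propose requires the extra check that $F_\alpha = F_{\alpha'}$ forces matching successors, which your write-up does not verify. Second, your own decomposition $f=\bigvee_{\gamma\in G_{i+1}}L_\gamma\land\gamma$ shows that $F_\alpha$ is always the disjunction of the subset $S_\alpha=\{\gamma \mid L_\gamma(\alpha)=1\}$ of $G_{i+1}$, so the number of distinct subfunctions at each level is in fact at most $2^{|G_{i+1}|}\leq 2^h$ rather than $2^{2^h}$: your approach actually yields a singly exponential bound, which is consistent with the paper's remark that the doubly exponential upper bound ``could be a bit loose'' and with the exponential lower bound of Lemma~\ref{lemma:slimSDNNF_do_not_give_slimOBDD}.
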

\begin{proof}[Proof sketch]
A \emph{right-linear} vtree over a set $X$ of variables is a vtree whose internal nodes all have a leaf (so a variable) for left-children. Orderings of $X$ are in bijection with the right-linear vtrees over $X$. We just have to show that every SDNNF circuit $D$ structured by a right-linear vtree corresponding to the variable ordering $\pi$ can be transformed into a $\pi$-OBDD of width $2^{2^{O(width(D))}}$. Let $w := width(D)$. The proof is the combinations of  three results: 
\begin{itemize}
\item[(1)] it is possible to encode $D(X)$ in a CNF $F(X,Y)$ with the Tseitin encoding such that the \emph{incidence pathwidth} of $F$ is $p = O(w)$;
\item[(2)] it is possible to compile the $F(X,Y)$ into an $\sigma$-OBDD $B(X,Y)$ of width $2^p$, where $\sigma$ is a variable ordering of $X\cup Y$ whose restriction to $X$ is $\pi$;  
\item[(3)] using~\cite[Lemma 1]{CapelliM19}, we construct an OBDD $B'(X)$ of width at most $2^{width(B)}$ that computes $\exists Y.B(X,Y) \equiv \exists Y. F(X,Y) \equiv D(X)$ and whose variable ordering on $X$ is consistent with $\sigma$ (so $B'$ is a $\pi$-OBDD). 
\end{itemize} 
\end{proof}

\noindent The doubly exponential upper bound could be a bit loose but cannot be decreased below exponential due to the following.

\begin{lemma}\label{lemma:slimSDNNF_do_not_give_slimOBDD}
There are functions computed by complete SDNNFs of width $O(n)$ for every vtree, but that are only computed by OBDDs and d-SDNNFs of width $2^{\Omega(n)}/n$.
\end{lemma}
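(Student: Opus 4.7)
The plan is to exhibit an explicit family $(f_n)_n$ of $n$-variable Boolean functions realizing the claimed separation. First, we would define $f_n$ so that its satisfying assignments are parametrized by a compact combinatorial object (for instance, a tree-structured algebraic identity, a hierarchical parity relation, or the characteristic function of a structured linear code). The essential design requirement is that, for every bipartition $X = A \sqcup B$ of the variables, the function $f_n$ has only $\poly(n)$ distinct projections onto $A$ when the variables of $B$ are assigned (so it is slim for complete SDNNFs over every vtree), while for every \emph{linear} ordering of the variables some cut separates $f_n$ into at least $2^{\Omega(n)}/n$ distinct subfunctions (which precludes small OBDDs).

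For the SDNNF upper bound, given any vtree $\tau$, we would construct a complete SDNNF respecting $\tau$ whose gates at each node $t$ enumerate the $O(n)$ distinct projections of $f_n$ to $\var(t)$. The construction follows the inductive pattern used in the proof of Lemma~\ref{lemma:slimOBDDgiveNiceSDNNF}: at each internal node of $\tau$ we combine the SDNNFs of the two children through a disjunction over matching intermediate ``states''. Thanks to the bound on distinct projections, the width is $O(n)$ at every node of $\tau$.

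For the OBDD lower bound, a standard counting/fooling-set argument on the middle cut of any ordering $\pi$ would show that $f_n$ induces $2^{\Omega(n)}/n$ distinct subfunctions, giving the desired bound. For the d-SDNNF lower bound, we would adapt the rectangle cover technique of~\cite{BovaCMS15}: a d-SDNNF structured by $\tau$ induces a partition of $f_n^{-1}(1)$ into disjoint combinatorial rectangles along a cut of $\tau$, and a combinatorial analysis of $f_n$ shows that at least $2^{\Omega(n)}/n$ rectangles are necessary.

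The main obstacle is to pick $f_n$ enjoying both the SDNNF upper bound and the d-SDNNF lower bound simultaneously. This precludes constructions like XORs or cardinality constraints (which are slim even for d-SDNNF) and random Boolean functions (which are not slim for SDNNF). The right candidate would likely combine a hierarchical ``skeleton'' that enables decomposability along any vtree with an embedded ``dense'' subproblem that prevents any deterministic partitioning into few rectangles, so that non-determinism in the $\lor$-gates is provably essential.
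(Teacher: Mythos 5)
Your proposal never actually exhibits the separating function, and the design criterion you place at its core is self-defeating. You require that for every bipartition $X = A \sqcup B$ the function $f_n$ have only $\poly(n)$ distinct projections onto $A$ under assignments to $B$. But the width of the minimal complete $\pi$-OBDD at level $i$ is exactly the number of distinct subfunctions obtained by assigning the first $i-1$ variables of $\pi$; since every prefix cut of every ordering is such a bipartition, your requirement forces $f_n$ to be $\poly(n)$-slim for complete OBDDs, contradicting the $2^{\Omega(n)}/n$ lower bound you want in the same breath. Likewise, the ``disjunction over matching intermediate states'' construction you borrow from the proof of Lemma~\ref{lemma:slimOBDDgiveNiceSDNNF} produces a \emph{deterministic} SDNNF, so if it achieved width $O(n)$ the d-SDNNF lower bound would also collapse. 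The quantity that controls non-deterministic SDNNF width across a cut is not the number of distinct subfunctions (a partition-type measure) but the number of possibly overlapping rectangles needed to \emph{cover} $f_n^{-1}(1)$, and these two measures can be exponentially far apart. You gesture at this tension in your final paragraph but leave it unresolved, and that is exactly where the proof has to happen.

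The paper resolves it by taking $f_n$ to be a DNF with $O(n)$ terms: each term is a conjunction of literals, hence a rectangle for every partition and an SDNNF of width $O(1)$ for every vtree, so the disjunction of the terms is a (non-deterministic) complete SDNNF of width $O(n)$ for every vtree---no bound on the number of subfunctions is needed or true. The lower bound is then imported: there exist CNFs with $O(n)$ clauses whose DNNF representations require $2^{\Omega(n)}$ gates~\cite{BovaCMS14}; since an OBDD can be negated in constant time and rewritten as a DNNF in linear time, the negation of such a CNF---a DNF with $O(n)$ terms---admits no OBDD of subexponential size, hence none of width below $2^{\Omega(n)}/n$ (the d-SDNNF part uses the lower bounds of~\cite{AmarilliCMS20} for specific monotone DNFs). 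If you want to salvage your plan, replace ``few distinct projections for every bipartition'' by ``small rectangle \emph{cover} for every bipartition together with a large number of distinct subfunctions across some cut of every ordering,'' and observe that linear-size DNFs give you the first property for free.
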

\begin{proof}
Every conjunction of literals can be turned into an SDNNF circuit of width $O(1)$ for any vtree, thus every DNF formulas $F$ can be turned into an SDNNF circuit of width $O(|F|)$. There exist CNF formulas over $n$ variables that comprise $O(n)$ clauses and whose DNNF representations all contain $2^{\Omega(n)}$ gates~\cite{BovaCMS14}. Every OBDD can be transformed in DNNF in linear time and OBDDs can be negated in constant time (inverting the two sinks). Thus $F$ can only be represented by OBDD of size $2^{\Omega(n)}$ and thus of width $2^{\Omega(n)}/n$.
\end{proof}

Let us now restate our main result, i.e., Theorem~\ref{theorem:intro_thm}, using constraints that belong to families of functions $O(1)$-slim for complete SDNNFs. Recall that our constraints are global. We will need the following assumption. 

\begin{assumption}\label{assumption}
  For every type of constraint appearing in our system, we have a
  polynomial-time algorithm to compile every constraint of this type
  into a minimal-width complete SDNNF for any given vtree.
\end{assumption} 

\begin{theorem}\label{theorem:main_fpt_compile}
Let $w \in \mathbb{N}$ be a fixed constant. Under assumption~\ref{assumption}, there is an algorithm that, given a system $F$ of constraints that are all $w$-slim for complete SDNNFs, constructs in time $2^{O(w\cdot \tw_i(F))}poly(|F|+|\var(F)|+w)$ a circuit in d-SDNNF that computes $F$.
\end{theorem}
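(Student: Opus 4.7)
The plan is to reduce Theorem~\ref{theorem:main_fpt_compile} to the CNF case, Theorem~\ref{theorem:cnf_fpt_compile}, by constructing a compact CNF encoding $F^\star$ of $F$ whose incidence treewidth is $O(w \cdot \tw_i(F))$. First, I would compute a nice tree decomposition $(T,b)$ of $G_F$ of width $k := \tw_i(F)$ via Bodlaender's FPT algorithm. For every constraint $c \in F$, the bags containing $c$ form a connected subtree $T_c$, which induces a natural forget-order $\pi_c$ on $\var(c)$. Let $\tau_c$ be the right-linear vtree induced by $\pi_c$; by Assumption~\ref{assumption}, compile $c$ into a complete SDNNF $D_c$ structured by $\tau_c$ of width at most $w$ and size $O(w|\var(c)|)$. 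Tseitin-encode $D_c$ into a CNF $F_c$ over $\var(c) \cup Y_c$ with $|Y_c|,|F_c| = O(w|\var(c)|)$, clauses of constant size, and the \emph{functional} property: every satisfying assignment of $c$ extends to a \emph{unique} satisfying assignment of $F_c$. Set $F^\star := \bigcup_{c \in F} F_c$.

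The core technical step is to show $\tw_i(F^\star) = O(wk)$. Since $D_c$ is structured by a right-linear vtree of width $\le w$, the CNF $F_c$ admits a path decomposition $(P_c, b_c)$ of width $O(w)$ aligned with $\pi_c$: walking along $P_c$, the variables of $\var(c)$ appear in a contiguous window consistent with $\pi_c$. Because $\pi_c$ is exactly the order in which $\var(c)$ is forgotten in $T_c$, one can thread $(P_c, b_c)$ through $T_c$: at each node $t \in T_c$, substitute the vertex $c$ in $b(t)$ by the current slice of $(P_c, b_c)$, comprising $O(w)$ auxiliary variables of $Y_c$ and constant-size clauses. Since $|b(t)| \le k+1$ and each constraint in $b(t)$ contributes $O(w)$ new elements, the resulting bag has size $O(wk)$. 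The main obstacle is verifying the three tree-decomposition axioms for the resulting $(T^\star, b^\star)$, most delicately the connectedness condition for each $y \in Y_c$; this is precisely where the alignment between $\tau_c$ and the forget-order along $T_c$ is essential, as each auxiliary variable must occupy a contiguous range of slices and hence a connected subtree of $T^\star$.

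Applying Theorem~\ref{theorem:cnf_fpt_compile} to $F^\star$ then yields a d-SDNNF $D^\star$ of size $2^{O(wk)} \poly(|F|+|\var(F)|+w)$ computing $F^\star$. Because the Tseitin encoding is functional, the satisfying assignments of $F^\star$ project bijectively onto those of $F$, so $D^\star$ serves as a d-SDNNF for $F$ over the extended variable set $\var(F) \cup \bigcup_c Y_c$, with the $Y_c$ acting as internal witness variables; model counting and the other standard d-SDNNF queries on $F$ are therefore supported directly. If a d-SDNNF strictly over $\var(F)$ is desired, the auxiliary variables can be eliminated using their functional determinism. The overall running time — tree decomposition, per-constraint compilation via Assumption~\ref{assumption}, Tseitin encoding, and Theorem~\ref{theorem:cnf_fpt_compile} — fits within the claimed $2^{O(w\cdot \tw_i(F))}\poly(|F|+|\var(F)|+w)$ bound.
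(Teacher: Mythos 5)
Your high-level architecture matches the paper's: encode each constraint as a small structured circuit, Tseitin-encode, control the incidence treewidth of the resulting CNF, apply Theorem~\ref{theorem:cnf_fpt_compile}, and project out the auxiliary variables. However, there is a genuine gap at the central step, and it is exactly the step you flag as ``the main obstacle'' without resolving it. You choose for each constraint $c$ a \emph{right-linear} vtree $\tau_c$ induced by a ``natural forget-order'' $\pi_c$ along $T_c$, and then propose to thread the resulting \emph{path} decomposition $(P_c,b_c)$ through $T_c$ by substituting $c$ in each bag by ``the current slice.'' But $T_c$ is a subtree, not a path: variables of $c$ are in general forgotten at pairwise incomparable nodes, so there is no canonical linear forget-order, and at a branching node of $T_c$ there is no single ``current slice.'' Concretely, if $x_i$ and $x_{i+1}$ are consecutive in $\pi_c$ but forgotten in different branches, the level-$i$ interface variables of $Y_c$ occur in clauses placed near both $t_{x_i}$ and $t_{x_{i+1}}$, so connectivity forces them into every bag on the $T$-path through their least common ancestor --- bags whose ``current slice'' under your scheme is a different one entirely. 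Your closing claim that ``each auxiliary variable must occupy a contiguous range of slices and hence a connected subtree'' is a non sequitur: a contiguous range in a linear order need not map to a connected subtree of a branching $T$. The argument can likely be repaired (with a DFS order and a nice decomposition, only $O(1)$ interfaces of $O(w)$ variables cross each node per constraint, so bags stay $O(wk)$), but that accounting is the whole content of the treewidth bound and is absent from your proposal.

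It is worth noting that the paper sidesteps this difficulty by \emph{not} using linear vtrees: Lemma~\ref{lemma:stepOne} builds for each $c$ a \emph{branching} vtree $\tau_c$ whose shape mirrors the tree decomposition of $G_F$, so that $\tau_c$ itself has a width-$3$ tree decomposition over the \emph{same} tree $T$; the bag-wise union of Lemma~\ref{lemma:mergeTD} and the substitution of Lemma~\ref{lemma:stepThree} then give the $O(w\cdot\tw_i(F))$ bound with no threading argument. This is also why the theorem's hypothesis is $w$-slimness for complete SDNNFs over \emph{all} vtrees rather than just all variable orders. Finally, your last step is too quick: a d-SDNNF over $X\cup\bigcup_c Y_c$ does not ``compute $F$'' as the theorem requires, and eliminating the $Y$-variables is not routine --- forgetting generally destroys determinism (and is intractable in general); the paper needs Lemma~\ref{lemma:forget_Tseitin_variables}, resting on the result of~\cite{KieselE23}, to show that $\textsc{Exist}$ applied to Tseitin variables preserves both determinism and structuredness.
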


In Table~\ref{table:constraints}, we give several families of constraints along with upper bounds on the smallest $h$ functions for which they are $h$-slim for complete OBDDs and SDNNFs. The proof of the correctness of these values for $h$ appear in the long version of the paper. Assumption~\ref{assumption} is reasonable for most constraint types of the table~\ref{table:constraints}, in particular for clauses, XORs, sums modulo, cardinality and threshold constraints.

\subsection{Commutative State Transitions Systems}\label{section:STS}

\begin{definition}
    A \emph{state transition system (STS)} $\mathcal{A}=(S, f_0, f_1, T)$ consists of a set of states $S$, two transition functions $f_0,f_1:S\rightarrow S$, a starting state $s_0\in S$ and a set of accepting states $T\subseteq S$. The associated \emph{extended transition function} $\delta_{\mathcal{A}}:\{0,1\}^*\rightarrow S$ is defined as:
    \[\delta_{\mathcal{D}}(\epsilon):=s_0,\quad\delta_{\mathcal{D}}(l0):=f_0(\delta_{\mathcal{D}}(l)),\quad\delta_{\mathcal{D}}(l1):=f_1(\delta_{\mathcal{D}}(l)).\]
    When the associated STS is clear from context, we also omit the subscript when writing the extended transition function. An STS is \emph{finite} if the set of states $S$ is finite, and \emph{commutative} if $f_0(f_1(s)) = f_1(f_0(s))$ for all $s\in S$. We write \emph{CSTS} for commutative STS.
\end{definition}

\begin{lemma}
Let $\mathcal{A}=(S, f_0, f_1, T)$ be an STS. Then for any $l_1,l_1',l_2,l_2'\in \{0,1\}^*$, if $\delta(l_1)=\delta(l_1')$ and $\delta(l_2)=\delta(l_2')$, then $\delta(l_1l_2) = \delta(l_1'l_2')$.
\end{lemma}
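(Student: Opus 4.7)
The plan is to exploit commutativity $f_0 \circ f_1 = f_1 \circ f_0$ (which, although the statement says ``STS'', is clearly intended since the lemma lives in the CSTS subsection; the claim is actually false for general STS, as witnessed by a three-state example where $\delta(0) = \delta(1)$ but $\delta(00) \neq \delta(11)$). The argument will be built from two elementary ingredients, a concatenation identity and a swap lemma, and then combined.

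First, I introduce the natural generalization $\delta_s$ of the extended transition function starting from state $s$ instead of $s_0$, defined by $\delta_s(\epsilon) = s$, $\delta_s(l0) = f_0(\delta_s(l))$, $\delta_s(l1) = f_1(\delta_s(l))$. A routine induction on $|v|$ yields the concatenation identity $\delta(uv) = \delta_{\delta(u)}(v)$ for all $u,v \in \{0,1\}^*$. This step needs no commutativity.

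Second, I prove the swap lemma: for every $u,v \in \{0,1\}^*$, $\delta(uv) = \delta(vu)$. The argument is by induction on $|v|$; the inductive step peels off the last letter $a$ of $v$ and uses $f_0 \circ f_1 = f_1 \circ f_0$ to migrate it leftward past each letter of $u$ one adjacent swap at a time, after which the inductive hypothesis on the shorter prefix finishes the job. This is the only place commutativity is used, and it is the main obstacle: everything else is bookkeeping.

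Third, I chain the identities. From the concatenation identity and $\delta(l_1) = \delta(l_1')$, I get $\delta(l_1 l_2) = \delta_{\delta(l_1)}(l_2) = \delta_{\delta(l_1')}(l_2) = \delta(l_1' l_2)$. By the swap lemma this equals $\delta(l_2 l_1')$, which by the concatenation identity and $\delta(l_2) = \delta(l_2')$ equals $\delta_{\delta(l_2)}(l_1') = \delta_{\delta(l_2')}(l_1') = \delta(l_2' l_1')$; a final application of the swap lemma gives $\delta(l_1' l_2')$, completing the proof.
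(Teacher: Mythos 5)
Your proof is correct and takes essentially the same route as the paper's: the paper's entire proof is the bare chain $\delta(l_1l_2) = \delta(l_1'l_2) = \delta(l_2l_1') = \delta(l_2'l_1') = \delta(l_1'l_2')$, which is exactly your final step, with your two ingredients (the concatenation identity $\delta(uv)=\delta_{\delta(u)}(v)$ and the commutativity-based swap $\delta(uv)=\delta(vu)$) left implicit. Your remark that commutativity is genuinely needed --- the statement says ``STS'' but the second and fourth equalities fail for non-commutative systems --- is also correct.
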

\begin{proof}
\[\delta(l_1l_2) = \delta(l_1'l_2) = \delta(l_2l_1') = \delta(l_2'l_1') = \delta(l_1'l_2')\]
\end{proof}


\begin{definition}
Given a function $f:\{0,1\}^n\rightarrow\{0,1\}$, we say an STS \emph{describes $f$} if, for every $l\in \{0,1\}^n$, $\delta(l) \in T$ if and only if $f(l)=1$. Let $f,g:\{0,1\}^n\rightarrow\{0,1\}$ be two functions. We say that a CSTS  describes $f$ \emph{modulo literal-flipping} if it describes $g$ and there exists a literal-flipping function $\phi$ that sends $x_i$ to either $x_i$ or $\bar{x}_i$ such that $f=g \circ \phi$. We call the minimum number of states of a CSTS that describes~$f$ modulo literal-flipping the \emph{state size} of $f$. 
\end{definition}

\begin{figure}[H]
\centering
\begin{tikzpicture}[scale=1,main node/.style={circle,draw},end node/.style={red,thick,circle,draw}]
\node[main node](A) at (-1*0.6,0) {$s_0$};
\node[end node](B) at (1*0.6,0) {$s_1$};
\node(a) at (-1,1) {};
\draw[->,auto]
    (A) edge [bend left] node {1} (B)
    (B) edge [bend left]node {1} (A)
    (A) edge [out=135, in=225, loop] node {0} (A)
    (B) edge [out=45, in=-45, loop]node [left]{0} (B);
\draw[->] (a) edge node {} (A);
\end{tikzpicture}
\hspace{-9pt}
\begin{tikzpicture}[scale=1,main node/.style={circle,draw,font=\sffamily\bfseries},end node/.style={red,thick,circle,draw,font=\sffamily\bfseries}]
\node[main node](A) at (-1*0.6,0) {$s_0$};
\node[end node](B) at (1*0.6,0) {$s_1$};
\node(a) at (-1,1) {};
\draw[->,auto]
    (A) edge node {1} (B)
    (A) edge [out=135, in=225, loop] node {0} (A)
    (B) edge [out=45, in=-45, loop]node [left]{0,1} (B);
\draw[->] (a) edge node {} (A);
\end{tikzpicture}
\hspace{-8pt}
\begin{tikzpicture}[scale=1,main node/.style={circle,draw,font=\sffamily\bfseries},end node/.style={red,thick,circle,draw,font=\sffamily\bfseries}]
\node[main node](A) at (-1*0.6,0) {$s_0$};
\node[main node](B) at (1*0.6,0) {$s_1$};
\node[end node](C) at (3*0.6,0) {$s_2$};
\node(a) at (-1,1) {};
\draw[->,auto]
    (A) edge node {1} (B)
    (B) edge node {1} (C)
    (A) edge [out=135, in=225, loop] node {0} (A)
    (B) edge [loop] node {0} (B)
    (C) edge [loop] node {0,1} (C);
\draw[->] (a) edge node {} (A);
\end{tikzpicture}
\caption{CSTS for XOR, OR and ``$x_1 + \dots + x_n \geq 2$'' constraints.}
\label{STS-example}
\end{figure}
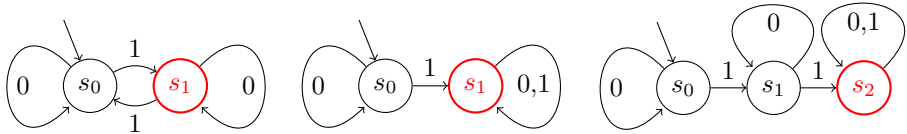

Every complete OBDD with $n$ variables can be reinterpreted as an STS $(\bigcup(S_i)_{1\leq i\leq n}, f_0, f_1)$, with the two sinks from the OBDD denoted as $\mathbf{0}_\mathcal{A}$ and $\mathbf{1}_{\mathcal{A}}$, or simply, $\mathbf{0}$ and $\mathbf{1}$, where $S_i$ is the set of decision nodes at the $i$th layer, $f_0(s),f_1(s)\in S_{i+1}$ for $s\in S_i, i<n$ and $f_0(s),f_1(s)\in \{\mathbf{0}, \mathbf{1}\}$ for $s\in S_n$. The concept of being commutative and the extended transition function $\delta$ transfer naturally.
\begin{lemma}\label{lemma:commutative}
Any symmetric function described by an OBDD of width $w$ can be described by an commutative OBDD of width at most $w$.
\end{lemma}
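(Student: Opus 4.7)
The plan is to construct the target commutative OBDD $B'$ canonically from $f$ and then use $B$ only to bound its width. For $0 \leq i \leq n$ and $0 \leq k \leq i$, let $f^{(i)}_k$ denote the restriction of $f$ to assignments whose first $i$ variables have Hamming weight exactly $k$; symmetry of $f$ makes this a well-defined function of $x_{i+1},\dots,x_n$, independent of which $k$ of the first $i$ variables are chosen to be set.

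First, I would define $B'$ by taking as its level-$i$ nodes the equivalence classes of $\{0,\dots,i\}$ under the relation $k \sim_i k'$ iff $f^{(i)}_k = f^{(i)}_{k'}$; denote the class of $k$ by $[k]_i$. The transitions are $f_0([k]_i) := [k]_{i+1}$ and $f_1([k]_i) := [k+1]_{i+1}$, the root is $[0]_0$, and the accepting sinks at level $n$ are those $[k]_n$ with $f^{(n)}_k \equiv 1$. Well-definedness of the transitions on classes follows from the cofactor identities $f^{(i)}_k(0,\cdot) = f^{(i+1)}_k$ and $f^{(i)}_k(1,\cdot) = f^{(i+1)}_{k+1}$, so $k \sim_i k'$ entails both $k \sim_{i+1} k'$ and $k+1 \sim_{i+1} k'+1$. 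Correctness is immediate because a weight-$k$ input reaches $[k]_n$, and commutativity follows from $f_0(f_1([k]_i)) = [k+1]_{i+2} = f_1(f_0([k]_i))$.

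The main step is bounding the width of $B'$. For each node $s$ at level $i$ of $B$, let $f_s$ denote the sub-function computed by the sub-OBDD rooted at $s$. Because $B$ correctly computes the symmetric function $f$, every weight-$k$ prefix $\alpha \in \{0,1\}^i$ drives $B$ to some node $s_\alpha$ with $f_{s_\alpha} = f^{(i)}_k$. Hence the map $k \mapsto f^{(i)}_k$ factors through level-$i$ nodes of $B$, so $\{f^{(i)}_k : 0 \leq k \leq i\}$ embeds into the set of sub-functions attached to level-$i$ nodes of $B$, of cardinality at most $w$. Since the level-$i$ nodes of $B'$ are in bijection with the distinct $f^{(i)}_k$'s, each level of $B'$ contains at most $w$ nodes.

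The subtlety I expect to be the most delicate is aligning the notion of ``level'' with $B$, since commutativity is phrased via the STS interpretation that presumes complete OBDDs, whereas naive completion can inflate width by a linear factor. I would sidestep this by working throughout with $B$ in its complete form only for the purpose of indexing nodes by level and reading off the bound $|S_i(B)| \leq w$ per level; the construction of $B'$ never references $B$ directly, so the width-at-most-$w$ conclusion is about the canonical complete commutative OBDD $B'$ derived from $f$ itself.
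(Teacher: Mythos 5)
Your proof is correct, and it takes a genuinely different route from the paper's. The paper works \emph{syntactically} on the given diagram: it defines, level by level, an equivalence relation on the nodes of the given OBDD whose generators include the pairs $(f_0(f_1(x)),f_1(f_0(y)))$ for equivalent $x,y$, and then quotients; commutativity is forced by the generators, the width bound is immediate from quotienting, but the claims that the quotient is well defined and still computes $f$ are left as ``easy to check'' (they rely on the fact that every reachable node's subfunction is a symmetric restriction of $f$, so reading $01$ and $10$ from equivalent nodes yields the same subfunction). You instead build the \emph{canonical} commutative OBDD semantically from the weight-indexed cofactors $f^{(i)}_k$, which makes well-definedness, correctness, and commutativity immediate by construction, and isolates the only use of the hypothesis in a clean counting step: every weight-$k$ prefix drives the given width-$w$ OBDD to a level-$i$ node computing $f^{(i)}_k$, so there are at most $w$ distinct cofactors per level. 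That counting step is the same pigeonhole fact that underlies the paper's bound, but your packaging is more transparent. One caveat on your final paragraph: the lemma (and the paper's own proof) must be read as being about \emph{complete} OBDDs --- this is the setting of the surrounding STS discussion --- because for an incomplete OBDD of width $w$ the completed form can have more than $w$ nodes per level, and then the bound $|S_i(B)|\leq w$ you ``read off'' would fail; your construction does not repair that, it just inherits the same implicit completeness hypothesis as the paper. Under that reading your argument is sound.
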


\begin{proof}
$\mathcal{D}=((S_i)_{1\leq i\leq n}, f_0, f_1, \mathbf{0}, \mathbf{1})$ be a commutative OBDD and let $r$ be its root. Suppose $n\leq 2$, then the property of being commutative follows vacuously. So it is safe to assume that $n\geq 3$. Define an equivalence relation $\sim_i$ over $S_i$ for each $1\leq i\leq n$. Define $\sim_1:=\{(r,r)\}$ and $\sim_2:=\{(f_0(r),f_0(r)),(f_1(r),f_1(r))\}$. Suppose $\sim_i$ and $\sim_{i+1}$ have already been defined. Define $\sim_{i+2}$ to be the transitive closure of $\{(f_0(x),f_0(y))\,|\,x\sim_{i+1}y\}\cup\{(f_1(x),f_1(y))\,|\,x\sim_{i+1}y\}\cup\{(f_0(f_1(x)),f_1(f_0(y)))\,|\,x\sim_iy\}$. Define $\widehat{\mathcal{D}}=((S_i/\sim_i)_{1\leq i\leq n}, \widehat{f_0}, \widehat{f_1},\mathbf{0}, \mathbf{1})$ by $\widehat{f_0}([x]):=[f_0(x)]$ and $\widehat{f_1}([x]):=[f_1(x)]$. It is easy to check that $\widehat{\mathcal{D}}$ indeed is a well-defined commutative OBDD that describes that same function as $\mathcal{D}$, and since $|S_i/\sim_i| \leq |S_i|$, the width of $\widehat{\mathcal{D}}$ is at most $w$.
\end{proof}

\begin{lemma}
\label{moduloPattern}
Let $f$ with $n$ inputs be a symmetric function described by a commutative OBDD of width $w$. Then there exist $0\leq a, b, m$ with $a+m+b=w$ such that $\delta(1^x0^y)=\delta(1^{x'}0^{y'})$ for any $0\leq x,x',y,y'$ such that $x+y=x'+y'=n$, $x,x'\geq a$, $y,y'\geq b$ and $x\equiv x'\text{ mod }m$.
\end{lemma}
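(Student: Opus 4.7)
The plan is to exploit commutativity and the width bound via pigeonhole at a carefully chosen layer. By commutativity, the state $q(x,y) := \delta(1^x 0^y)$ depends only on the pair $(x,y)$, and by symmetry of $f$, the sub-OBDD rooted at $q(x,y)$ computes a symmetric function whose value on inputs of Hamming weight $k$ equals $F(x+k)$, where $F:\{0,\ldots,n\}\to\{0,1\}$ is the one-dimensional table of $f$. So any collision $q(x,\ell-x) = q(x',\ell-x')$ at a common layer $\ell$ immediately translates into equality of the sequences $F(x+\cdot)$ and $F(x'+\cdot)$ on the $n-\ell+1$ positions where they are defined.

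I would handle the main case $w\le n$ by working at the layer reached after reading exactly $w$ variables. The $w+1$ positions $x\in\{0,\ldots,w\}$ for the number of ones read so far must collapse into at most $w$ nodes at this layer, so pigeonhole produces $0\le a < a+m \le w$ with $q(a,w-a)=q(a+m,w-a-m)$. Setting $b := w-a-m\ge 0$ yields $a+m+b = w$ and, by the previous paragraph, $F(y)=F(y+m)$ for all $y \in [a, n-b-m]$. Iterating this identity (each intermediate value remains in the valid range) then promotes the one-step shift to the full congruence: for any $x,x'\in[a,n-b]$ with $x\equiv x' \pmod m$, one chains to obtain $F(x)=F(x')$, i.e., $\delta(1^x 0^{n-x}) = \delta(1^{x'} 0^{n-x'})$.

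The degenerate case $w > n$ I would dispose of trivially by taking $a=0$, $b=0$, $m=w$: the constraint $x\equiv x'\pmod m$ with $x,x'\in[0,n]$ forces $x=x'$, so the conclusion holds vacuously. The only real subtlety is calibrating the layer so that $a+m+b$ equals $w$ exactly rather than merely $\le w$; picking the layer to be $\ell = w$ does this automatically, since the residual budget $b = w-a-m$ then sits naturally in $[0,w]$. Otherwise the proof is essentially pigeonhole plus bookkeeping.
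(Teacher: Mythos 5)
Your proposal is correct and follows essentially the same route as the paper: both apply the pigeonhole principle to the $w+1$ states $\delta(1^a0^{w-a})$, $a\in\{0,\dots,w\}$, at depth $w$ to extract the triple $(a,m,b)$ with $a+m+b=w$, and then iterate the resulting period-$m$ identity. The only (immaterial) difference is in how the collision is propagated: the paper works at the level of states, using commutativity and the composition property of $\delta$ to get $\delta(1^x0^y)=\delta(1^{x-m}0^{y+m})$, whereas you pass to the value table $F$ of the residual symmetric function computed below the collided node --- equivalent here because the lemma's conclusion only concerns fully read words, i.e., sinks.
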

\begin{proof}
Let $\mathcal{D}$ be a commutative OBBD of width $w$ that describes $f$. Then $n\geq w$. If $n=w$, then the claim is vacuously true with $m=0$ and any $0\leq a,b$, $a+b=w$. So we can assume $n>w$. By the pigeonhole principle, there exists $0\leq a,m,b$, $a+b+m=w$ such that $\delta(1^a0^m0^b)=\delta(1^a1^m0^b)$. Let $0\leq x,y$ such that $x+y=n$, $x\geq a + m$ and $y\geq b$. By commutativity we have 
$\delta(1^x0^y)=\delta(1^{a+(x-m-a)+m}0^{b+(y-b)})=\delta(1^a1^m0^b)\cdot\delta(1^{x-m-a}0^{y-b}) = \delta(1^a0^m0^b)\cdot\delta(1^{x-m-a}0^{y-b}) = \delta(1^{x-m}0^{y+m})$. Now the claim follows.
\end{proof}

\begin{lemma}
Let $f$ be a symmetric (respectively, literal-symmetric) function. If $f$ admits a complete OBDD representation of width $w$, then $f$ is described (described modulo literal-flipping) by a CSTS with at most $(w+1)^2/4$ states. 
\label{OBDDtoSTS}
\end{lemma}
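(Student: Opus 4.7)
The plan is to apply the two preparatory lemmas in sequence and then exhibit an explicit CSTS whose size meets the bound. For the literal-symmetric case, write $f = g \circ \phi$ for a literal-flipping $\phi$ and a symmetric $g$; since literal-flipping preserves OBDD width, $g$ also has a complete OBDD of width $w$, and any CSTS describing $g$ automatically describes $f$ modulo literal-flipping. So I focus on the symmetric case.

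First I would use Lemma~\ref{lemma:commutative} to replace the given width-$w$ OBDD by a commutative one of width at most $w$, then apply Lemma~\ref{moduloPattern} (running its pigeonhole argument at level $w$, which is valid since $n \ge w$) to obtain non-negative integers $a, m, b$ with $a + m + b = w$ and $m \ge 1$ such that the acceptance function $F(x) := f(1^x 0^{n-x})$ is constant on each residue class modulo $m$ within $[a, n-b]$.

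Next I would build a CSTS $\mathcal{A}$ with state set $S = \{(u, v) : 0 \le u < a + m,\ 0 \le v \le b\}$, so that $|S| = (a + m)(b + 1)$. The coordinate $u$ encodes the $1$s-count $x$ by $u = x$ when $x < a$ and $u = a + ((x - a) \bmod m)$ when $x \ge a$; the coordinate $v$ encodes the $0$s-count $y$ by $v = \min(y, b)$. Transitions act on the respective coordinates: $f_1$ maps $u$ to $u + 1$ and wraps $a + m - 1$ back to $a$, while $f_0$ maps $v$ to $\min(v + 1, b)$. The starting state is $(0, 0)$, and a state $(u, v)$ lies in $T$ iff some $x$ encoded to $(u, v)$ at level $n$ satisfies $F(x) = 1$ (otherwise it is placed outside $T$). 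Writing $p = a + m$, we have $p + (b + 1) = w + 1$, so AM-GM gives $|S| = p(b+1) \le \bigl((w+1)/2\bigr)^2 = (w+1)^2/4$.

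The main subtlety will be checking that the $T$-labeling is well-defined. A single state $(u, v)$ can be reached at level~$n$ by several pairs $(x, n-x)$ only in the regime $v = b$ and $u \ge a$, in which case every eligible $x$ lies in $[a, n - b]$ and is congruent to $u - a$ modulo $m$, and Lemma~\ref{moduloPattern} forces $F$ to agree on all of them. In the other regimes the pair $(x, n-x)$ is determined by $(u, v)$ (either through $u = x < a$ or through $y = v < b$), so no conflict arises; states never reached at level $n$ can be labeled arbitrarily. Commutativity of $\mathcal{A}$ is immediate from the product form of the transitions, and correctness follows because the trajectory of $\mathcal{A}$ from $(0, 0)$ on any length-$n$ input tracks precisely the encoded $(x, n-x)$, whose $T$-label is exactly $F(x)$.
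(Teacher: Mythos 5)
Your proposal takes essentially the same route as the paper's proof: reduce the literal-symmetric case to the symmetric one via the flipping $\phi$, invoke Lemma~\ref{lemma:commutative} and Lemma~\ref{moduloPattern} to obtain $a,m,b$ with $a+m+b=w$, and build the product CSTS on $[0,a+m-1]\times[0,b]$ (ones counted modulo $m$ above $a$, zeros capped at $b$) with the bound $(a+m)(b+1)\le (w+1)^2/4$. Your write-up is in fact slightly more careful than the paper's, since you explicitly verify that the acceptance labeling is well-defined via Lemma~\ref{moduloPattern}, a point the paper leaves to the reader.
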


\begin{proof}
Let $\mathcal{D}$ be a commutative OBDD and $0\leq a,m,b$ be such that $a+m+b=w$ and $\delta(1^x0^y)=\delta(1^{x'}0^{y'})$ for any $0\leq x,x',y,y'$ with $x+y=x'+y'=n$, $x,x'\geq a$, $y,y'\geq b$ and $x\equiv x'\text{ mod }m$. Define an STS $\mathcal{S}:=(S,f_0,f_1,s_0,T)$ where $S:=[0, b]\times[0, a+m-1]$, $f_0(i,j):=(i+1,j)$ if $i< b$, $f_0(b,j)=(b,j)$, $f_1(i,j):=(i,j+1)$ if $j< a+m-1$, $f_1(i,a+m-1):=(i,a)$ and $(i,j)\in T$ if and only if $\delta(1^i0^j)=\mathcal{1}$. It is easy to check that $\mathcal{S}$ and $\mathcal{D}$ indeed describes the same function. Note that since $a+m+b=w$, if follows that $(a+m)\cdot(b+1)\leq\frac{(w+1)^2}{4}$.

Let $f$ be a literal-symmetric function. Then there exists a literal-flipping $\phi$ such that $f=g\circ \phi$ for some symmetric function $g$. We have shown that there exists a CSTS with at most $\frac{(w+1)^2}{4}$ states that describes $g$. By definition, there exists a CSTS with at most $\frac{(w+1)^2}{4}$ states that describes $f$ modulo literal-flipping.
\end{proof}

\begin{figure*}
\centering
\[
\begin{matrix}
\text{functions with constant-} \\
\text{size one-sided CSTS } \\
\text{modulo literal-flipping}
\end{matrix}
\ \subset\ 
\begin{matrix}
\text{functions with} \\
\text{constant-size CSTS } \\
\text{modulo literal-flipping}
\end{matrix}
\ =\  
\begin{matrix}
\text{literal-symmetric} \\
\text{functions } O(1)\text{-slim}\\
\text{for OBDDs}
\end{matrix}
\ \subset\ 
\begin{matrix}
\text{functions } \\ 
O(1)\text{-slim}\\
\text{for OBDDs}
\end{matrix}
\begin{matrix}
\text{functions } \\ 
O(1)\text{-slim}\\
\text{for SDNNFs}
\end{matrix}
\]
\end{figure*}

\begin{table*}\def\arraystretch{1.5}

\begin{tabular}{ |c|c|c|c| } 
 \hline
 Name & Formal expression & $h$-slim for OBDDs & $h$-slim for SDNNFs\\
 \hline 
 \hline
 \tikzmark{C}Clauses & $x_1 \lor \dots \lor \bar x_n$ & $h = 2$ & $h = O(1)$\tikzmark{A} \\ 
 \hline
 XORs & $x_1 \oplus \dots \oplus x_n = 0$ or $1$ & $h = 2$ & $h = O(1)$ \\
 \hline 
 Sum modulo & $x_1 + \dots + x_n = c \mod k$  & $h = k$ & $h = O(k^2)$ \\ 
 \hline
 Cardinality & $x_1 + \dots + x_n \geq k$  & $h = O(\min(n-k,k))$ & $h = O(\min(n-k,k)^2)$ \tikzmark{B} \\ 
 \hline
 \makecell{Small scope \\ constraints} & $f(x_1,\dots,x_k)$ & $h = 2^k$ & $h = O(2^k)$ \\ 
 \hline
 \tikzmark{D}Threshold  & \makecell{$w_1x_1 + \dots + w_n x_n \geq \theta$ \\ with $k := |w_1| + \dots + |w_n|$} & $h = 2k+1$ & $h = O(k^2)$ \\ 
 \hline
 \makecell{Symmetric  \\ functions}  & \makecell{$f(x_1 + \dots + x_n)$ \\ for $f : \mathbb{N} \rightarrow \{0,1\}$} & $h = \poly(n)$ & $h = \poly(n)$ \\ 
 \hline
 \makecell{Literal-sym. \\ functions}  & \makecell{$f(\ell_1 + \dots + \ell_n)$ \\ for $f : \mathbb{N} \rightarrow \{0,1\}$ \\ and $\ell_i \in \{x_i, \bar x_i\}$} & $h = \poly(n)$ & $h = \poly(n)$ \\ 
 \hline
\end{tabular}
\caption{Families of constraints}\label{table:constraints}
\end{table*}

\section{FPT Compilation of Systems of $O(1)$-Slim Constraints}\label{section:fpt_compilation}

In this section, we explain the proof of Theorem~\ref{theorem:main_fpt_compile}. The proof is in three steps:
(1)~find a CNF encoding $H(X,Z)$ of $F(X)$ whose incidence treewidth is at most $O(w\cdot tw_i(F))$,
(2)~compile $H(X,Z)$ to d-SDNNF using
Theorem~\ref{theorem:cnf_fpt_compile}, and
(3)~existentially forget the auxiliary variables $Z$ from the resulting d-SDNNF circuit.
We do not have to take care of (2). For (3), \emph{forgetting} the
variables $Z$ from a d-SDNNF circuit $D(X,Z)$ means finding another
d-SDNNF circuit computing $\exists Z.D(X,Z)$. Forgetting many variables from a d-SDNNF circuit is generally
intractable~\cite{PipatsrisawatD08}. However the operation is tractable when the  $Z$-variables are completely defined in terms of the $X$-variables. We make sure to be in this situation by using only Tseitin CNF encodings to generate $H(X,Z)$.

\subsection{The Tseitin Encoding}\label{section:tseitin_encoding}

Let $D(X)$ be a Boolean circuit whose internal gates are binary $\lor$-gates and binary $\land$-gates. Let $gates^*(D)$ be the set of its internal gates. Every gate $g$ is associated with a variable $z_g$. Let $Z = \{ z_g \mid g \in \gates^*(D)\}$. The Tseitin encoding of~$D$ is the CNF
\[
H^\tseitin_D(X,Z) := \bigwedge\nolimits_{g \in \gates^*(D)} \tseitin(g)
\] where, for every gate $g$, if $g = g_1 \land g_2$ then
$
\tseitin(g) := (\bar z_g \lor z_{g_1}) \land (\bar z_g \lor z_{g_2}) \land (\bar z_{g_1} \lor \bar z_{g_2} \lor z_g)
$, if $g = g_1 \lor g_2$ then
$
\tseitin(g) := (\bar z_g \lor z_{g_1} \lor z_{g_2}) \land (\bar z_{g_1} \lor z_g) \land (\bar z_{g_2} \lor g)
$, and if $g$ is an input gate, so if $g = x,\bar x,0$ or $1$, then $z_g = g$. The formula $H^\tseitin_D$ is a CNF encoding of $D$ in the sense that 
\[
\exists Z. H^\tseitin_D(X,Z) \equiv D(X).
\]
\noindent We often drop the $\tseitin$ superscript in the rest of the paper. $\exists Z.H_D(X,Z)$ is the Boolean function over $X$ that maps to $1$ exactly the assignments to $X$ that can be extended over $Z$ to satisfying assignments of $H_D$. For $C$ a DNNF circuit and $Z$ a subset of $var(C)$, $\textsc{Exist}(C,Z)$ is the circuit $C$ where all literals in $C$ for $Z$ are replaced by $1$. Clearly, computing $\textsc{Exist}(C,Z)$ is a linear-time procedure.
 $\textsc{Exist}(C,Z)$ is in DNNF and is equivalent to $\exists Z.C$~\cite{DarwicheM02}. It is known that, when $C$ is a d-DNNF circuit representing a Tseitin encoding $H_D(X,Z)$, with $Z$ the Tseitin variables, then $\textsc{Exist}(C,Z)$ is not just in DNNF but in d-DNNF. This is not straightforward since the $\textsc{Exist}$ procedure generally breaks determinism for $\lor$-gates~\cite{KieselE23}. One can go one step further and prove that if $C$ is in SDNNF, then $\textsc{Exist}(C,Z)$ is also in SDNNF.

\begin{lemma}\label{lemma:forget_Tseitin_variables}
Let $f(X)$ be a Boolean function and $D_1(X_1),\dots,D_m(X_m)$ be Boolean circuits. For $D$ an d-SDNNF circuit computing $f(X) \land H^\tseitin_{D_1}(X_1,Z_1) \land \dots \land H^\tseitin_{D_m}(X_1,Z_m)$, where the sets $Z_1,\dots,Z_m$ are pairwise disjoint, the linear-time procedure $\textsc{Exist}(D, Z_1 \cup \dot \cup Z_m)$ returns a d-SDNNF circuit computing $f(X) \land D_1(X_1) \land \dots \land D_m(X_m)$.
\end{lemma}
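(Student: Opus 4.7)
The plan is to prove the lemma in three stages: semantic equivalence, preservation of the SDNNF shape, and preservation of determinism, where the last step is the substantive one.

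\textbf{Step 1 (Equivalence).} I would begin by recalling that $\textsc{Exist}(D,Z)$ computes $\exists Z.D(X,Z)$, which is the standard behaviour recalled just above the lemma. Since $Z = Z_1 \cup \dots \cup Z_m$ with the $Z_i$'s pairwise disjoint and $f$ independent of $Z$, the existential distributes:
\[
\exists Z.\bigl(f(X) \wedge \textstyle\bigwedge_i H^{\tseitin}_{D_i}(X_i,Z_i)\bigr) \;\equiv\; f(X) \wedge \bigwedge_i \exists Z_i.\,H^{\tseitin}_{D_i}(X_i,Z_i) \;\equiv\; f(X) \wedge \bigwedge_i D_i(X_i),
\]
where the last equivalence is the Tseitin correctness identity stated in Section~\ref{section:tseitin_encoding}.

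\textbf{Step 2 (SDNNF shape is preserved).} I would then note that $\textsc{Exist}$ leaves the gate graph of $D$ untouched, only replacing $Z$-literal inputs by the constant $1$. Decomposability of the $\wedge$-gates then survives automatically because variable sets can only shrink. For structuredness, I would take the vtree $\tau$ of $D$ (over $X \cup Z$), delete all leaves labelled by $Z$-variables, and contract any internal node left with a single child; call the result $\tau'$. The mapping $\lambda'$ sending each gate $g$ to the image of $\lambda(g)$ in $\tau'$ still satisfies $\var(g) \cap X \subseteq \var(\lambda'(g))$, and the vtree conditions on $\wedge$- and $\vee$-gates lift directly from the originals.

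\textbf{Step 3 (Determinism, the main obstacle).} The delicate part is that $\textsc{Exist}$ generally destroys determinism at $\vee$-gates. The excerpt recalls that for a single Tseitin encoding in d-DNNF, Kiesel and Eisner have shown this does not happen. I would build on their argument in two ways. First, the engine of their proof is the definitional property of a Tseitin encoding: for every $X$-assignment $\alpha$ there is a \emph{unique} extension $\beta^*(\alpha)$ of $\alpha$ to $Z$ satisfying $\bigwedge_i H^{\tseitin}_{D_i}$; because the $Z_i$'s are pairwise disjoint and $f$ does not mention $Z$, this uniqueness lifts verbatim to the whole function computed by $D$, so the move from a single encoding to $f \wedge \bigwedge_i H^{\tseitin}_{D_i}$ needs no new idea. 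Second, to promote the conclusion from d-DNNF to d-SDNNF, I would use the vtree $\tau$ to argue by bottom-up induction along $\tau$ that every subcircuit $D_g$ computes a function whose $Z$-part (restricted to $\var(D_g) \cap Z$) is functionally determined by its $X$-part. The $\wedge$-case is immediate from the structured decomposition, which cleanly splits the variables at a vtree node among its children; the $\vee$-case follows because if $\alpha$ were projected from models of both $\phi_{g_1}$ and $\phi_{g_2}$ at $g = g_1 \vee g_2$, then by the inductive invariant both models must carry the same $Z$-extension, making them equal and contradicting the determinism of $g$ before projection. The main technical difficulty I anticipate is stating the invariant in a way that goes through cleanly at the $\vee$-gates — this is exactly where determinism before projection must be used as a hypothesis, mirroring Kiesel and Eisner's delicate bookkeeping, but made smoother by the structuredness of $D$.
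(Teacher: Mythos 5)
Your Step 1 matches the paper. On Step 2, the paper is simpler than you are: instead of pruning the $Z$-leaves from $\tau$ and contracting, it keeps the original pair $(\tau,\lambda)$ unchanged, observing that the definition of SDNNF only requires the vtree to be over a \emph{superset} of the circuit's variables, and that $\var(D'_g) \subseteq \var(D_g) \subseteq \var(\lambda(g))$. Your contraction also has an edge case: if every leaf below one child $t_1$ of $\lambda(g)$ is a $Z$-variable, the subtree under $t_1$ disappears and the node $\lambda(g)$ gets contracted, yet $g$ is still a binary $\land$-gate of $\textsc{Exist}(D,Z)$ and must be mapped to a vtree node with two children hosting $\lambda'(g_1)$ and $\lambda'(g_2)$ separately. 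Keeping $\tau$ as is avoids this entirely.

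The genuine gap is in Step 3. The paper does not reprove determinism; it cites Kiesel and Eisner for the fact that $\textsc{Exist}$ applied to a d-DNNF computing a Tseitin encoding stays deterministic, and its only additional work is the structuredness observation above. Your proposed induction fails at the $\lor$-gates. The invariant ``every model of $D_g$ has its $Z$-part determined by its $X$-part'' is not inherited by $g = g_1 \lor g_2$ from its children: each disjunct may determine the $Z$-part by a \emph{different} function of the $X$-part, so $D_g$ can accept both $(\alpha,\beta_1)$ and $(\alpha,\beta_2)$ with $\beta_1 \neq \beta_2$. Determinism of $g$ before projection only forbids a common \emph{full} model, i.e., one pair $(\alpha,\beta)$ satisfying both disjuncts; it does not forbid $(\alpha,\beta_1)\models D_{g_1}$ and $(\alpha,\beta_2)\models D_{g_2}$ with $\beta_1\neq\beta_2$, which is exactly the configuration that destroys determinism after projection. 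Nor does the uniqueness of the Tseitin extension ``lift verbatim'' downward: it constrains only models of the \emph{root} function, whereas determinism must be certified at every internal $\lor$-gate, whose subcircuit may accept assignments that never extend to models of the root and on which nothing forces the $Z$-part to be determined. This is precisely why the paper flags the claim as not straightforward and outsources it; a self-contained proof needs the actual argument of Kiesel and Eisner, not this invariant.
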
 
\begin{proof}
Let $Z = Z_1 \cup \dots \cup Z_m$. We have that $\textsc{Exist}(D,Z) \equiv \exists Z. D(X,Z)$ by definition of $\textsc{Exists}$, and $\exists Z.D(X,Z) \equiv f(X) \land \bigwedge_{i = 1}^m \exists Z_i. H^\tseitin_{D_i}(X_i,Z_i)$ because the sets $X$, $Z_1$, $\dots$, $Z_m$ are pairwise disjoint, and therefore $\textsc{Exist}(D,Z) \equiv f(X) \land \bigwedge_{i  = 1}^m D_i(X_i)$ by property of the Tseitin encoding.

Let $D' = \textsc{Exist}(D,Z)$. The proof that $D'$ is in d-DNNF can be found in~\cite{KieselE23}. So we just have to show that this d-DNNF circuit is structured by a vtree. Let $(\tau,\lambda)$ be a (vtree,mapping) pair that structures $D$. Then $D'$ is also structured by $(\tau,\lambda)$. Indeed, the definition of an SDNNF does not require that $\tau$ is exactly over the variables of the circuit it structures. $\tau$ can structure $D'$ as long as it is over a superset of $var(D')$. Recall that there are three conditions for $(\tau,\lambda)$ to structure $D'$. The two conditions on the $\land$-gates and $\lor$-gates are satisfied because the gates of $D'$ are the same as the gates of $D$. The condition that, for every $g \in gates(D')$, $var(D'_g) \subseteq var(\lambda(g))$ is also satisfied because $var(D'_g) \subseteq var(D_g)$ and $var(D_g) \subseteq var(\lambda(g))$. So $D'$ is structured by $(\tau,\lambda)$.
\end{proof}

\subsection{Proof of Theorem~\ref{theorem:main_fpt_compile}}\label{section:proof_fpt_compile}

The only real hurdle in the proof of Theorem~\ref{theorem:main_fpt_compile} is~(1), that is, encoding the system of constraints $F(X) = c_1(X_1) \land \dots \land c_m(X_m)$ into the CNF $H(X,Z)$ while controlling the incidence treewidth. For every $i \leq m$ we do the following. First, using a certain tree decomposition of $F$'s incidence graph, we construct a vtree $\tau_{c_i}$ over $X_i$. Second, we construct an SDNNF circuit $D_i(X_i)$ structured by $\tau_{c_i}$ that computes $c_i(X_i)$. Next, we encode $D_i$ into a CNF formula $H_i(X_i,Z_i)$ using the Tseitin encoding and a set $Z_i$ of fresh auxiliary variables. The CNF encoding of $F$ is then $H(X,Z) := H_1(X_1,Z_1) \land \dots \land H_m(X_m,Z_m)$. Given the vtrees $\tau_{c_i}$, it should be quite clear that $H(X,Z)$ can be constructed in polynomial time under Assumption~\ref{assumption}, and that the $Z$ variables can be forgotten using Lemma~\ref{lemma:forget_Tseitin_variables}. So we only need to justify that the $\tau_{c_i}$ can be found efficiently, and that the incidence treewidth of $H(X,Z)$ can be controlled.

\subsubsection{Constructing the Vtrees} To construct the vtree $\tau_c$ for $c \in F$, we need a tree decomposition (t.d.) of $G_F$ with specific properties. We do not give the details here but, basically, we use nice t.d.~in which a few bags are cloned. Roughly put, $\tau_c$ shows how the variables of $c$ appear relative to each other in the t.d. For instance, in the t.d.~shown Figure~\ref{figure:td_G_F} (not of the type used in the proof, but sufficient for the example) for the incidence graph of Figure~\ref{figure:system}, we have $x_2,x_3$ and $x_4,x_5$ are introduced in a different branches, and this yields the vtree $\tau_{c_2}$ shown Figure~\ref{figure:SDNNF}. $\tau_c$ is found using the following lemma.

\begin{lemma}\label{lemma:stepOne}
Every tree decomposition (t.d.)~of $G_F$ can be transformed in polynomial time into another t.d.~$(T,b)$ of $G_F$, that has the same width, and that can be used to find in linear time a vtree $\tau_c$ over $var(c)$ for every $c \in F$. Each vtree $\tau_c$ has a t.d.~$(T,b_c)$ of width $3$ such that, for all $t \in V(T)$,
\begin{itemize}
\item[$i.$] if $c \not\in b(t)$ then $b_c(t) = \emptyset$,
\item[$ii.$] if $c \in b(t)$ then $b_c(t) \cap b(t) \subseteq b(t) \cap var(c)$.
\end{itemize}
\end{lemma}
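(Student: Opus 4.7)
The plan is to prove the lemma constructively by refining a nice tree decomposition of $G_F$ and reading off the vtrees from it. First, I would transform the given t.d.\ into a nice one $(T_0,b_0)$ of the same width by the standard procedure, and then clone certain bags: between consecutive nodes of $T_0$ where some constraint remains in the bag, insert a small constant number of duplicate-bag nodes. Cloning only adds nodes carrying bags already present, so the resulting $(T,b)$ has the same width and remains a valid t.d.\ of $G_F$. These clones give spare ``slots'' along $T$ that will be used to spread the internal vtree nodes of each $\tau_c$ without overcrowding any single bag.

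For each $c \in F$, let $T_c$ be the connected subtree of $T$ induced by $\{\,t : c \in b(t)\,\}$. I would construct $\tau_c$ by a bottom-up traversal of $T_c$, maintaining at each visited node a partial vtree whose leaves are exactly the variables of $c$ forgotten below that point. At a forget node in $T_c$ for $x \in \var(c)$, create a new internal node of $\tau_c$ whose children are the leaf $x$ and the partial-vtree root passed up from the child. At a join node in $T_c$ with two children in $T_c$, create a new internal node whose two children are the partial-vtree roots received from the two sides. At all other nodes of $T_c$, pass the partial vtree upward unchanged. Since every variable of $c$ is forgotten exactly once along $T_c$, this terminates in a vtree $\tau_c$ over $\var(c)$. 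Define $b_c(t) = \emptyset$ whenever $c \notin b(t)$, which immediately satisfies~(i); otherwise let $b_c(t)$ contain the vtree node created at $t$ (if any), together with the partial-vtree roots propagated from $t$'s children in $T_c$.

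The three t.d.\ axioms for $(T,b_c)$ then follow from the construction: every vtree node is created at a unique $t$ and then carried upward along a connected path of $T_c$ until its parent is created, and every edge of $\tau_c$ has both endpoints in the bag where its upper endpoint is created. Condition~(ii) holds because a leaf $x$ of $\tau_c$ in $b_c(t) \cap b(t)$ is, by construction, a variable in $\var(c)$, so it lies in $b(t) \cap \var(c)$. The main obstacle is keeping $|b_c(t)| \le 4$ so that $(T,b_c)$ has width $3$: at a join node and at a forget of a $\var(c)$-variable, the bag simultaneously holds two child roots and the new parent (three elements), and if several such events collide at a single $t$, the count can exceed $4$. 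This is exactly what the cloning step fixes: by spreading events across dedicated duplicate bags, each node of $T$ hosts at most one vtree operation and hence at most four vtree nodes. The detailed bookkeeping to verify this bound, together with the checks that cloning preserves the width of $(T,b)$ and adds only polynomially many nodes, are where the main technical effort lies.
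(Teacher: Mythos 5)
Your high-level strategy---pass to a nice tree decomposition, clone some bags, read $\tau_c$ off the tree, and reuse $T$ itself as the tree of the decomposition $(T,b_c)$---is the same as the paper's. However, your construction of $\tau_c$ has a genuine gap: you only create a leaf for $x \in \var(c)$ at a forget node for $x$ that lies \emph{inside} $T_c$, i.e., at a node whose bag still contains $c$. A variable of $c$ need not be forgotten while $c$ is still in the bag: if $x$ also occurs in constraints handled higher up in $T$, then $x$ persists in bags above the node where $c$ is forgotten, and its forget node lies outside $T_c$ (consider $c_1 = x\lor y$ and $c_2 = x\lor z$ with a path decomposition whose bags are $\{x,y,c_1\}$, $\{x,c_1,c_2\}$, $\{x,z,c_2\}$, made nice: $c_1$ is forgotten before $x$ is). So your claim that ``every variable of $c$ is forgotten exactly once along $T_c$'' is false, and the resulting $\tau_c$ is a vtree over a proper subset of $\var(c)$. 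The paper handles precisely this case by singling out $S = \var(c)\cap b'(t_c)$, the variables of $c$ still present in the bag at the topmost node $t_c$ of $c$'s subtree, and inserting one clone parent $t^x_c$ of $t_c$ per $x\in S$ to host the vtree node for $x$; your cloning step serves only to de-collide events at a single node and does not create leaves for these surviving variables. The fix is to graft the variables of $S$ onto the partial vtree at (clones of) the top of $T_c$, exactly as the paper does.

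There are also smaller repairs needed: the degenerate cases where the partial vtree arriving at a forget or join event is empty (the first forgotten variable along a branch must become a bare leaf, not the child of a unary internal node, and a join with one empty side should just pass the other side through); and your $b_c(t)$, as defined, contains only the newly created internal node and the propagated roots, so neither the leaf $x$ nor the edge from the new internal node to $x$ is covered by any bag---you must add the leaf to $b_c(t)$ at the event where it is consumed (this is harmless for condition $ii.$ since $x\notin b(t)$ at a forget node for $x$). With these repairs, your forget/join anchoring becomes essentially the paper's construction, which instead anchors the vtree node for each $x\in\var(c)$ at the highest node $t^x_c$ of $T$ containing $x$ within $c$'s subtree and contracts $T$ onto these nodes.
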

\begin{proof}
First, we turn the initial tree decomposition into a nice tree decomposition $(T',b')$, with same width, in polynomial time. We do not describe this step, it is well-known. 

In a tree, We say that we \emph{bypass} a node $t$ that has a single child $t'$ when we remove $t$ if it has no parent (so if $t$ is the root), or, if $t$ has a parent $t''$, when we remove $t$ and make $t''$ the parent of $t'$. We say that we \emph{insert a parent before} $t$ when, if $t$ is the root we create a new root whose unique child is $t$, and otherwise we disconnect $t$ from its parent $t'$ and create a new node $t''$ whose parent is $t'$ and whose unique child is $t$. We \emph{insert a clone parent} in a tree decomposition $(T,b)$ when we insert a parent $t''$ before a node $t$ and set $b(t'') = b(t)$. 

For every $x \in var(F)$, if the highest node $t$ of $T'$ such that $x \in b'(t)$ is a join node, then we insert a clone parent before $x$. Once this is done, for every $x$, let $t_x$ be the highest node of $T'$ such that $x \in b(t)$. The properties of nice tree decomposition guarantee that $t_x \neq t_y$ holds for every $y \neq x$ and the clone insertions guarantee that $t_x$ has zero or a single child.

Let $c \in F$ and let $S = var(c) \cap b'(t_c)$. For every $x \in S$, insert a clone parent before $t_c$. Call $t^x_c$ this new parent. For every $x \in var(c) \setminus S$, we have that $t_x$ is a descendant of $t_c$ and that $c \in b'(t_x)$. For every such $x$, $t^x_c$ refers to $t_x$. Repeat the construction for every $c \in F$. The resulting tree decomposition is $(T,b)$. It is clear that this is still a tree decomposition of $G_F$, that it has the same width as the decomposition we started with, and that it is constructed in polynomial time. Note that for every $x,y \in var(c)$, $t^x_c \neq t^y_c$ and that $t^x_c$ has zero or a single child. Now we explain how to construct $\tau_c$ for any $c \in F$.
\begin{itemize}
\item[•] Remove every node of $T$ that neither belongs to $\{t^x_c \mid x \in var(c)\}$, nor has a descendant in $\{t^x_c \mid x \in var(c)\}$.
\item[•] Remove every node of $T$ that neither belongs to $\{t^x_c \mid x \in var(c)\}$, nor has an ancestor in $\{t^x_c \mid x \in var(c)\}$.
\item[•] At that point, $T$ has been transformed into $T_1$ Repeatedly bypass all nodes not in $\{t^x_c \mid x \in var(c)\}$ that have a single child until reaching a fix point.
\item[•] At that point, $T$ has been transformed into $T_2$. For every $t^x_c$, if it is a leaf then label it by $x$; if it has one child $t'$ then give $t^x_c$ another child labeled by $x$. Since $t^x_c$ has zero or a single child in $T$, this is also the case in $T_ 2$.
\end{itemize}
The resulting tree is $\tau_c$. Let $V_c = \{t^x_c \mid x \in var(c)\}$. Note that $V_c \subseteq V(T_2) \subseteq V(T_1) \subseteq V(T)$. We will see the nodes of $\tau_c$ as distinct from $T$'s nodes though. Formaly, we use a function $\sigma$ defined as follows: if $t = t^x_c$ is a leaf of $T_2$, then $\sigma(t) = x$ is, and if $t$ is an internal node of $T_2$ then the corresponding node in $\tau_c$ is $\sigma(t)$ (note that in this case $\sigma(t) \neq x$ even when $t = t^x_c$).

The bag function $b_c$ is defined following the construction: every node $t$ removed during the fist two steps gets $b_c(t) = \emptyset$. These are all $t \in V(T) \setminus V(T_1)$. In particular every node $t$ such that $c \not\in b(t)$ is in $V(T) \setminus V(T_1)$ so $c \not\in b(t) \Rightarrow b_c(t) = \emptyset$ holds. By construction the root of $T_1$ and the leaves of $T_1$ belong to $V_c$. Every node $t$ in $V(T_1)$ is either in $V(T_2)$ of has a unique ancestor $a_t$ in $V(T_2)$ and a unique descendant $d_t$ in $V(T_2)$, in which case $a_t$ is the parent of $d_t$ in $T_2$. For $t \in V(T_1) \setminus V(T_2)$ we let $b_c(t) = \{\sigma(a_t),\sigma(d_t)\}$. For $t \in V(T_2) \cap V_c$ with parent $t'$ and such that $t = t^x_c$, we set $b_c(t) = \{\sigma(t),x,\sigma(t')\}$ (where $\sigma(t)$ may be $x$ if $t$ is a leaf of $T_2$). For $t \in V(T_2) \setminus V_c$ with parent $t'$, we set    $b_c(t) = \{\sigma(t),\sigma(t')\}$. 

One can verify that $(T,b_c)$ is a tree decomposition of $\tau_c$. Its width is at most $3$, and the bags $b_c(t)$ intersect $b(t)$ only on $var(c)$. So $b_c(t) \cap b(t) \subseteq b(t) \cap var(c)$ holds. 
\end{proof}

It may seem bizarre that Lemma~\ref{lemma:stepOne} looks at t.d.~for vtrees. The key point here is that the vtrees have width-$2$ t.d.~with \emph{the same underlying tree} $T$ as the t.d.~of $G_F$. For instance, a t.d.~$(T,b_{c_2})$ for $\tau_{c_2}$ from Figure~\ref{figure:SDNNF} using the same tree $T$ as the t.d.~from Figure~\ref{figure:td_G_F} could use $b_{c_2}(t_1) = \{\circled{1}{green}\}$, $b_{c_2}(t_2) = \{\circled{1}{green},\circled{3}{blue}\}$, $b_{c_2}(t_3) = \{\circled{3}{blue},x_4\}$, $b_{c_2}(t_4) = \{\circled{3}{blue},x_5\}$, $b_{c_2}(t_7) = \{\circled{1}{green},\circled{2}{red}\}$, $b_{c_2}(t_8) = \{\circled{2}{red},x_3\}$, $b_{c_2}(t_9) = \{\circled{2}{red},x_2\}$ and all the other bags empty.

\subsubsection{Controlling the Treewidth of the Encoding} Thanks to the properties stated in points $i.$ and $ii.$ of Lemma~\ref{lemma:stepOne}, we can \emph{merge} the t.d.~$(T,b_c)$ of $\tau_c$ with the t.d.~$(T,b)$ of $G_F$ with a simple bag-wise union to get a t.d.~of $G_F \cup \tau_c$ (the graph with vertex set $V(G_F) \cup V(\tau_c)$ and with edge set $E(G_F) \cup E(\tau_c)$).

\begin{lemma}\label{lemma:mergeTD}
Let $G$ and $G'$ be two graphs and let $V_s := V(G) \cap V(G')$.
Let $(T,b)$ and $(T',b')$ tree decompositions of $G$ and $G'$, respectively. If $T = T'$ and if for all $t \in V(T)$ we have $b(t) \cap V_s \subseteq b'(t) \cap V_s$ then $(T, b \cup b')$ is a tree decomposition of $G \cup G'$.
\end{lemma}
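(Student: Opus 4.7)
The plan is to verify directly that $(T, b \cup b')$, where $(b \cup b')(t) := b(t) \cup b'(t)$, satisfies the three defining properties of a tree decomposition of $G \cup G'$: coverage of vertices, coverage of edges, and connectivity of each vertex's bag-set in $T$. The first two properties will follow almost immediately by splitting into the two graphs; the only delicate point is connectivity for vertices in $V_s$, and this is precisely where the hypothesis $b(t) \cap V_s \subseteq b'(t) \cap V_s$ is needed.

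First I would dispatch the easy axioms. Vertex coverage is immediate: $V(G \cup G') = V(G) \cup V(G') = \bigcup_t b(t) \cup \bigcup_t b'(t) = \bigcup_t (b(t)\cup b'(t))$. Edge coverage also splits: every edge of $G \cup G'$ is either an edge of $G$ or an edge of $G'$, so it is already contained in some $b(t)$ or some $b'(t)$, hence in $(b \cup b')(t)$. No hypothesis on $V_s$ is needed for these parts.

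The main step is connectivity: for every vertex $v \in V(G) \cup V(G')$, the subset $T_v := \{t \in V(T) \mid v \in b(t) \cup b'(t)\}$ must induce a connected subtree of $T$. I would split into three cases. If $v \in V(G) \setminus V_s$, then $v \notin V(G')$, so $v \notin b'(t)$ for any $t$, and $T_v = \{t \mid v \in b(t)\}$, which is connected since $(T,b)$ is a tree decomposition. The case $v \in V(G') \setminus V_s$ is symmetric. The only nontrivial case is $v \in V_s$: here $T_v = \{t \mid v \in b(t)\} \cup \{t \mid v \in b'(t)\}$. The hypothesis $b(t) \cap V_s \subseteq b'(t) \cap V_s$ applied at each $t$ says exactly that $v \in b(t)$ implies $v \in b'(t)$, hence $\{t \mid v \in b(t)\} \subseteq \{t \mid v \in b'(t)\}$, and thus $T_v = \{t \mid v \in b'(t)\}$, which is connected by the tree-decomposition property of $(T,b')$.

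The hard part — in the sense of being the load-bearing step — is recognizing that the asymmetric hypothesis $b(t) \cap V_s \subseteq b'(t) \cap V_s$ forces the $b$-occurrences of each shared vertex to be a subset of its $b'$-occurrences, collapsing the potentially disconnected union of two subtrees into a single already-connected subtree. Without this inclusion one could glue together two disjoint subtrees of $T$, both separately connected, whose union is disconnected, breaking axiom (3). Once this observation is in place, the rest of the proof is a routine case split, and no further bookkeeping is required.
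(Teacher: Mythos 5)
Your proof is correct and follows essentially the same route as the paper: verify the three tree-decomposition axioms, with the only substantive work in the connectivity of shared vertices. The one (harmless) difference is in that final step: you observe the stronger fact that $\{t \mid v \in b(t)\} \subseteq \{t \mid v \in b'(t)\}$ so the union collapses to an already-connected subtree, whereas the paper only notes that the two subtrees share a node and hence their union is connected --- both arguments are valid.
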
 
\begin{proof}
We have that $\bigcup_{t  \in V(T)} (b\cup b')(t) = \bigcup_{t \in V(T)} b(t) \cup \bigcup_{t \in V(T)} b'(t) = V(G) \cup V(G') = V(G \cup G')$. Every edge of $G \cup G'$ is an edge of $G$ or an edge $G'$, so it is contained in $b(t)$ or $b'(t)$ for some $t \in V(T)$, a fortiori it is contained in $(b \cup b')(t)$. Now, for the connectivity condition, if $v \in V(G) \setminus V(G')$ then we have that $v \in (b \cup b')(t)$ if and only if $v \in b(t)$, so $T[ t \mid v \in (b \cup b')(t)] = T[ t \mid v \in b(t)]$ is a connected tree. The argument is similar for $v \in V(G') \setminus V(G)$. Finally, if $v \in V_s$ then there is a node $t$ such that $v \in b(t)$, but then $v \in b'(t)$ by assumption. It follows that the two trees $T[t \mid v \in b(t)]$ and $T[t \mid v \in b'(t)]$ share a vertex and therefore $T[t \mid v \in (b \cup b')(t)]$ is connected.
\end{proof}

If we merge $(T,b)$ with all the t.d.~$(T,b_{c_1}),\dots,(T,b_{c_m})$
obtained from Lemma~\ref{lemma:stepOne} then we obtain a
t.d.~$(T,b^*)$ of $G_F \cup \tau_{c_1} \cup \dots \cup \tau_{c_m}$ of
width at most $3\cdot \textit{width}(T,b)$ because, by property $i.$, new elements are added to a bag $b(t)$ only from the vtrees $\tau_c$ such that $c \in b(t)$, and each vtree contributes at most $3$ elements.

Now suppose we have a complete DNNF circuit $D_i$ computing $c_i$ and respecting the vtree $\tau_{c_i}$. Every gate of $D_i$ and, by extension, every clause and every auxiliary variable of its Tseitin encoding, is mapped to a node of $\tau_{c_i}$. We claim that, if for every bag $b^*(t)$ and every node $s$ of $\tau_{c_i}$ in this bag, one replaces $s$ by the clauses $\clauses(\tseitin(g))$ and the auxiliary variables used in these clauses for all gates $g$ of $D_i$ mapped to $s$, then the result is a t.d.~$(T,b^+)$ of $G_F \cup G_{H_1} \cup \dots \cup G_{H_m} = G_F \cup G_H$. One then just have to remove the initial constraints of $F$ from every bag to obtain a t.d.~of $G_H$. Going back to our example, if $D_2$ is the SDNNF circuit shown in Figure~\ref{figure:SDNNF}, one would replace all occurrences of the vtree node \circled{2}{red} in $b^*(t)$ by the clauses and variables used in the Tseitin encoding of the nodes squared in red in Figure~\ref{figure:SDNNF}: each such gate $g$ with input $g_1$ and $g_2$ will contribute the clauses of $\tseitin(g)$ plus the variables $z_g$, $z_{g_1}$ and $z_{g_2}$ when they are not input variables of $D_2$.

Since every $c_i$ is $w$-slim for complete SDNNFs, at most $w$ gates of $D_i$ is mapped to $s$, and the Tseitin encoding generates a constant number of clauses and auxiliary variables per gate of $D_i$. So we will have that $|b^+(t)| \in O(w|b^*(t)|) \in O(w|b(t)|)$. So, if we started from a t.d.~of $G_F$ of width $O(tw_i(F))$ (computable in time FPT parameterized by $\tw_i(F)$~\cite{Bodlaender96}), then we end up with an CNF encoding $H$ whose incidence treewidth is $O(w \cdot \tw_i(F))$. 

In the detailed proof, available in the long version of the paper, we directly merge the tree decompositions of the CNF encodings $H_i$, whose incidence treewidth we control using the following lemma for $D = D_i$ and $\tau = \tau_{c_i}$. This is equivalent to what we have just explained.

\begin{lemma}\label{lemma:stepThree}
Let $D(Y)$ be a complete SDNNF circuit respecting $(\tau,\lambda)$. Let $\calT = (T,b)$ be a tree decomposition of $\tau$ and let $\phi$ be defined as $\phi(x) = \{x\}$ for every variable $x$ and as $\phi(s) = \bigcup_{g:\lambda(g) = s} var(\tseitin(g)) \cup \clauses(\tseitin(g))$ for every internal node $s \in \tau$. Let $b'(t) = \bigcup_{s \in b(t)}\phi(s)$. Then $(T,b')$ is a tree decomposition of $H^\tseitin_D$'s incidence graph and has width $O(\mathit{width}(D)\mathit{width}(\calT))$.
\end{lemma}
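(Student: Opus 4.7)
The plan is to verify the three tree-decomposition axioms for $(T,b')$ on the incidence graph $G_{H^\tseitin_D}$ and then to bound the width. I would start with coverage and edge-coverage, which are essentially immediate from the definition of $\phi$: every clause of $\tseitin(g)$ and every auxiliary variable $z_g$ lies in $\phi(\lambda(g))$, and every input variable $x$ of $D$ lies in $\phi(\ell_x) = \{x\}$ at the corresponding leaf $\ell_x$ of $\tau$; each incidence edge joins a clause of $\tseitin(g)$ to one of its variables, and these are co-located in $\phi(\lambda(g))$ and hence sit in $b'(t)$ for every $t$ with $\lambda(g) \in b(t)$.

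The main obstacle is the connectivity axiom. I would reduce it to the statement that, for every vertex $v$ of the incidence graph, the set $S_v := \{s \in V(\tau) : v \in \phi(s)\}$ induces a connected subtree of $\tau$. Granted that, the set $\{t : v \in b'(t)\} = \{t : b(t) \cap S_v \neq \emptyset\}$ is connected in $T$ by a routine argument: each subtree $\{t : s \in b(t)\}$ is connected by axiom (T3) for $\calT$, and for every edge $ss'$ of the subtree induced by $S_v$ in $\tau$ axiom (T2) yields a node $t$ with $\{s,s'\} \subseteq b(t)$, so gluing these along the bridges traverses $S_v$ and connects the union. Clauses $c \in \clauses(\tseitin(g))$ give the trivial singleton $S_c = \{\lambda(g)\}$. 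For an auxiliary variable $z_g$, the \emph{completeness} of $D$ is essential: it forces $\var(\lambda(h)) = \var(D_h)$ for every gate $h$, which combined with structured decomposability implies that for an $\land$-gate $h = h_1 \land h_2$ the nodes $\lambda(h_1), \lambda(h_2)$ are exactly the vtree children of $\lambda(h)$. Applied to any parent $g'$ of $g$, this yields that $\lambda(g')$ is the parent of $\lambda(g)$ in $\tau$ when $g'$ is an $\land$-parent, and $\lambda(g') = \lambda(g)$ when $g'$ is a $\lor$-parent; therefore $S_{z_g}$ is contained in $\{\lambda(g), \text{parent of } \lambda(g) \text{ in } \tau\}$ and is connected. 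An analogous argument shows that any internal gate that introduces an input variable $x$ into its Tseitin encoding must satisfy $\lambda(g) = \text{parent of } \ell_x \text{ in } \tau$, so $S_x \subseteq \{\ell_x, \text{parent of } \ell_x\}$ is again connected.

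Finally, for the width bound, I would estimate $|b'(t)| \leq \sum_{s \in b(t)} |\phi(s)|$. On an internal node $s$, the set $\phi(s)$ collects the at most four variables and three clauses of $\tseitin(g)$ across the $|\lambda^{-1}(s)| \leq \mathit{width}(D)$ gates mapped to $s$, so $|\phi(s)| = O(\mathit{width}(D))$; leaves contribute one. Summing over the at most $\mathit{width}(\calT) + 1$ nodes in $b(t)$ delivers the announced $O(\mathit{width}(D) \cdot \mathit{width}(\calT))$ bound. The hardest step is the connectivity analysis, where without completeness one cannot rule out parents of a gate whose $\lambda$-images skip levels in $\tau$ and break connectivity of $S_{z_g}$; everything else is a direct unfolding of the Tseitin encoding and the axioms of $\calT$.
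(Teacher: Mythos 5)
Your proposal is correct and takes essentially the same route as the paper's proof: coverage and edge-coverage are read off from the definition of $\phi$; connectivity is handled by showing, via completeness of $D$, that each clause lives only at $\lambda(g)$ and each variable $z$ lives only at $\lambda(z)$ and possibly its parent in $\tau$, so the corresponding connected subtrees of $T$ glue along a bag witnessing that vtree edge; and the width bound is the same product estimate $|b'(t)| \leq |b(t)| \cdot \max_s |\phi(s)| = O(\mathit{width}(\calT)\,\mathit{width}(D))$. No gaps.
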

\begin{proof}
We verify that the three requirements for $(T,b')$ to be a tree decomposition of $G_{H_D}$ are met.
\begin{itemize}
\item[•]First, we show that every clause and variable of $H_D$ appears in some bag $b'(t)$. Since $\bigcup_{t \in V(t)} b(t) = V(\tau)$ we have  
$$
\bigcup_{t \in V(t)} b'(t) = \bigcup_{t \in V(t)}\bigcup_{s \in b(t)}\phi(s) = \bigcup_{s \in V(\tau)} \phi(s) = \bigcup_{s \in V(\tau)} \bigcup_{g:\lambda(g) = s} var(\tseitin(g)) \cup \clauses(\tseitin(g))
$$
Since every gate of $H_D$ is mapped to some node of $\tau$ through $\lambda$. It follows that 
$$
\bigcup_{t \in V(t)} b'(t) = \bigcup_{g \text{ a gate of } D} var(\tseitin(g)) \cup \clauses(\tseitin(g)) = var(H_D) \cup \clauses(H_D)
$$
\item[•] Second, we look at the edges of $G_{H_D}$. Every edge in $G_{H_D}$ connects a variable from $var(\tseitin(g))$ to a clause from $\clauses(\tseitin(g))$. For every gate $g$, every $t \in V(T)$, every $c \in \clauses(\tseitin(g))$ and every $z \in var(\tseitin(g))$, we have that $c \in b'(t) \Rightarrow \clauses(\tseitin(g)) \cup var(\tseitin(g)) \subseteq b'(t) \Rightarrow z \in b'(t)$. So, for every edge of $G_{H_D}$, there is a bag that contains its two endpoints.
\item[•] Third, we look at the connectivity requirement. For the vertices that correspond to clauses this is easy: all clauses in $\clauses(\tseitin(g))$ appear precisely in the bags $b'(t)$ such that $\lambda(g) \in b(t)$, so we just have to use that $T[t \mid \lambda(g) \in b(t)]$ is a tree and we are done. Now let $z$ be a variable of $H_D$ such that $s = \lambda(z)$ is not the root of $\tau$, and let $s'$ be the parent of $s$ in $\tau$. By completeness of $D$, $z$ and $\neg z$ can only be inputs to gates $g$ such that $\lambda(g) = s'$. So, if $z \in var(\tseitin(g))$ then either $\lambda(g) = s$ or $\lambda(g) = s'$. It follows that $T[t \mid z \in b'(t)]$ is the union of $T[t \mid s \in b(t)]$ and $T[t \mid s' \in b(t)]$. Since $s$ and $s'$ are connected in $\tau$ there must be a node $t \in V(T)$ such that both $s \in b(t)$ and $s' \in b(t)$ hold, and therefore $T[t \mid z \in b'(t)]$ is connected. It remains the case where $z$ is a variable of $H_D$ such that $\lambda(z) = s$ is the root of $\tau$. Such variables appear in $var(\tseitin(g))$ if and only if $\lambda(g) = s$, so $T[t\mid z \in b'(t)]$ is exactly the tree $T[ t \mid s \in b(t)]$. 
\end{itemize}
So we have shown that $(T,b')$ is indeed a tree decomposition of $H_D$'s incidence graph. The bound on its width is clear:
\[
\begin{aligned}
\max_{t} |b'(t)| &= \max_t \Big\vert \bigcup_{s \in b(t)}\phi(s) \Big\vert \leq \max_t |b(t)|\cdot\max_{s \in b(t)} |\phi(s)| \leq \max_t |b(t)| \cdot \max_{s \in V(\tau)} |\phi(s)| \\ &\leq \mathit{width}(\calT)\cdot O(\mathit{width}(D))
\end{aligned}
\]
\end{proof}

\begin{figure*}
\centering
\begin{subfigure}[t]{0.5\textwidth}
\centering
\begin{tikzpicture}[xscale=0.72]
\node[inner sep=2pt] (a) at (1,0) {$c_1$};
\node[inner sep=2pt] (b) at (2.5,0) {$c_2$};
\node[inner sep=2pt] (c) at (4,0) {$c_3$};

\node[inner sep=2pt] (x1) at (0,1) {$x_1$};
\node[inner sep=2pt] (x2) at (1,1) {$x_2$};
\node[inner sep=2pt] (x3) at (2,1) {$x_3$};
\node[inner sep=2pt] (x4) at (3,1) {$x_4$};
\node[inner sep=2pt] (x5) at (4,1) {$x_5$};
\node[inner sep=2pt] (x6) at (5,1) {$x_6$};
\node[inner sep=2pt] (x7) at (6,1) {$x_7$};

\draw (a) -- (x1);
\draw (a) -- (x2);
\draw (a) -- (x3);

\draw (b) -- (x2);
\draw (b) -- (x4);
\draw (b) -- (x5);
\draw (b) -- (x3);

\draw (c) -- (x5);
\draw (c) -- (x4);
\draw (c) -- (x6);
\draw (c) -- (x7);

\node[font=\small] (text) at (3,2.25) {$
F = 
\begin{cases} 
c_1 : x_1 \lor x_2 \lor x_3
\\
c_2 : x_2 \oplus x_3 \oplus x_4 \oplus x_5 = 1
\\
c_3 : x_4 + x_5 + x_6 + x_7 \geq 2
\end{cases}
$};
\end{tikzpicture}
\caption{The incidence graph of $F$}\label{figure:system}
\end{subfigure}\begin{subfigure}[t]{0.5\textwidth}
\centering
\begin{tikzpicture}[yscale=0.75, xscale=0.5]
\def\s{1}
\node[thick,draw=green,inner sep=2*\s] (o1) at (4.5,4) {$\lor$};

\node[thick,draw=green,inner sep=2*\s]  (a1) at (2.5,3) {$\land$};
\node[thick,draw=green,inner sep=2*\s]  (a2) at (6.5,3) {$\land$};

\node[thick,draw=red,inner sep=2*\s] (o2) at (1.5,2) {$\lor$};
\node[thick,draw=red,inner sep=2*\s] (o3) at (3.5,2) {$\lor$};
\node[thick,draw=blue,inner sep=2*\s] (o4) at (5.5,2) {$\lor$};
\node[thick,draw=blue,inner sep=2*\s] (o5) at (7.5,2) {$\lor$};

\node[thick,draw=red,inner sep=2*\s] (a3) at (1,1) {$\land$};
\node[thick,draw=red,inner sep=2*\s] (a4) at (2,1) {$\land$};
\node[thick,draw=red,inner sep=2*\s] (a5) at (3,1) {$\land$};
\node[thick,draw=red,inner sep=2*\s] (a6) at (4,1) {$\land$};
\node[thick,draw=blue,inner sep=2*\s] (a7) at (5,1) {$\land$};
\node[thick,draw=blue,inner sep=2*\s] (a8) at (6,1) {$\land$};
\node[thick,draw=blue,inner sep=2*\s] (a9) at (7,1) {$\land$};
\node[thick,draw=blue,inner sep=2*\s] (a10) at (8,1) {$\land$};

\node[inner sep=\s] (x3) at (1,0) {$\,\,x_2$};
\node[inner sep=\s] (nx3) at (2,0) {$\,\,\bar x_2$};
\node[inner sep=\s] (x4) at (3,0) {$\,\,x_3$};
\node[inner sep=\s] (nx4) at (4,0) {$\,\,\bar x_3$};
\node[inner sep=\s] (x5) at (5,0) {$\,\,x_4$};
\node[inner sep=\s] (nx5) at (6,0) {$\,\,\bar x_4$};
\node[inner sep=\s] (x6) at (7,0) {$\,\,x_5$};
\node[inner sep=\s] (nx6) at (8,0) {$\,\,\bar x_5$};

\draw (a1) -- (o1) -- (a2);
\draw (a3) -- (o2) -- (a4);
\draw (a5) -- (o3) -- (a6);
\draw (a7) -- (o4) -- (a8);
\draw (a9) -- (o5) -- (a10);

\draw (o2) -- (a1) -- (o4);
\draw (o3) -- (a2) -- (o5);

\draw (x3.north) -- (a3); \draw (a3.south east) -- (x4.north);
\draw (nx3.north) -- (a4); \draw (a4.south east) -- (nx4.north);
\draw (x3.north) -- (a5.south west); \draw (a5) -- (nx4.north);
\draw (nx3.north) -- (a6.south west); \draw (a6.south) -- (x4.north);
\draw (x5.north) -- (a7); \draw (a7.south east) -- (x6.north);
\draw (nx5.north) -- (a8); \draw (a8.south east) -- (nx6.north);
\draw (x5.north) -- (a9.south west); \draw (a9) -- (nx6.north);
\draw (nx5.north) -- (a10.south west); \draw (a10.south) -- (x6.north);

\def\ox{10};
\def\oy{4};

\node[circle,draw=green,thick,inner sep=1pt,font=\small] (1) at (\ox,\oy) {1};
\node[circle,draw=red,thick,inner sep=1pt,font=\small] (2) at (\ox-1,\oy-0.75) {2};
\node[circle,draw=blue,thick,inner sep=1pt,font=\small] (3) at (\ox+1,\oy-0.75) {3};
\node[inner sep=1pt] (x2) at (\ox-1.5,\oy-1.5) {$x_2$};
\node[inner sep=1pt] (x3) at (\ox-0.5,\oy-1.5) {$x_3$};
\node[inner sep=1pt] (x4) at (\ox+0.5,\oy-1.5) {$x_4$};
\node[inner sep=1pt] (x5) at (\ox+1.5,\oy-1.5) {$x_5$};
\draw (2) -- (1) -- (3);
\draw (x2) -- (2) -- (x3);
\draw (x4) -- (3) -- (x5);
\node[font=\small] (label) at (\ox,\oy-2) {vtree $\tau_{c_2}$};
\end{tikzpicture}
\caption{An SDNNF circuit for $c_2$}\label{figure:SDNNF}
\end{subfigure}
\begin{subfigure}[t]{\textwidth}
\centering
\begin{tikzpicture}[xscale=0.9]

\node[label={below:\textit{root}},draw,rounded corners=0.1cm,inner sep=2.5]  (3) at (1.5,0) {$c_2\begin{matrix}\strut \\ \strut \end{matrix}$};
\node[draw,rounded corners=0.1cm,inner sep=2.5] (4) at (2.3,0) {$\begin{matrix}c_2 \\ c_3\end{matrix}$};
\draw (3) -- (4);

\node[draw,rounded corners=0.1cm,inner sep=2.5] (0) at (3.2,-0.75) {$\begin{matrix}c_2 \\ c_3\end{matrix}$};
\node[draw,rounded corners=0.1cm,inner sep=2.5] (5) at (4.25,-0.75) {$\begin{matrix}c_2\,x_3 \\ c_3 \end{matrix}$};
\node[draw,rounded corners=0.1cm,inner sep=2.5] (6) at (5.5,-0.75) {$\begin{matrix}c_2\,x_3 \\ x_2 \end{matrix}$};
\node[draw,rounded corners=0.1cm,inner sep=2.5] (7) at (6.75,-0.75) {$\begin{matrix}c_1\,x_3 \\ x_2  \end{matrix}$};
\node[draw,rounded corners=0.1cm,inner sep=2.5] (8) at (7.8,-0.75) {$\begin{matrix}c_1 \\ x_1 \end{matrix}$};
\draw[draw,rounded corners=0.1cm,inner sep=2.5] (4) -- (0) -- (5) -- (6) -- (7) -- (8);

\node[fill=white,text=gray,circle,inner sep=0,font=\small] (b) at (1.765,0.5) {$t_0$};
\node[fill=white,text=gray,circle,inner sep=0,font=\small] (b) at (2.57,0.5) {$t_1$};

\node[fill=white,text=gray,circle,inner sep=0,font=\small] (b) at (3.5,-.25) {$t_7$};
\node[fill=white,text=gray,circle,inner sep=0,font=\small] (b) at (4.75,-.25) {$t_8$};
\node[fill=white,text=gray,circle,inner sep=0,font=\small] (b) at (6,-.25) {$t_9$};
\node[fill=white,text=gray,circle,inner sep=0,font=\small,text width=2.5mm] (b) at (7.2,-.25) {$t_{10}$};
\node[fill=white,text=gray,circle,inner sep=0,font=\small] (b) at (8.1,-.25) {$t_{11}$};

\node[draw,rounded corners=0.1cm,inner sep=2.5] (0) at (3.2,+0.75)  {$\begin{matrix}c_2 \\ c_3\end{matrix}$};
\node[draw,rounded corners=0.1cm,inner sep=2.5] (5) at (4.35,+0.75)  {$\begin{matrix}c_2\,x_4 \\ c_3 \end{matrix}$};
\node[draw,rounded corners=0.1cm,inner sep=2.5] (6) at (5.7,+0.75)  {$\begin{matrix}c_2\,x_5 \\ c_3 \end{matrix}$};
\node[draw,rounded corners=0.1cm,inner sep=2.5] (7) at (6.825,+0.75) {$\begin{matrix}c_3 \\ x_6 \end{matrix}$};
\node[draw,rounded corners=0.1cm,inner sep=2.5] (8) at (7.7,+0.75) {$\begin{matrix}c_3 \\ x_7 \end{matrix}$};
\draw (4) -- (0) -- (5) -- (6) -- (7) -- (8); 

\node[fill=white,text=gray,circle,inner sep=0,font=\small] (b) at (3.5,1.25) {$t_2$};
\node[fill=white,text=gray,circle,inner sep=0,font=\small] (b) at (4.85,1.25) {$t_3$};
\node[fill=white,text=gray,circle,inner sep=0,font=\small] (b) at (6.18,1.25) {$t_4$};
\node[fill=white,text=gray,circle,inner sep=0,font=\small] (b) at (7.12,1.25) {$t_5$};
\node[fill=white,text=gray,circle,inner sep=0,font=\small] (b) at (8,1.25) {$t_6$};
\end{tikzpicture}
\caption{A tree decomposition of $G_F$}~\label{figure:td_G_F}
\end{subfigure}
\caption{}
\end{figure*}

\section{Lower Bounds for Systems of Slim Constraints}\label{section:lower_bounds}

Theorem~\ref{theorem:main_fpt_compile} requires a compilation time with a $2^{O(w\cdot k)}$ component, with $k = \tw_i(F)$ and $w$ an upper bound on the width of complete SDNNF circuits representing the constraints. Can we get rid of $w$ in the exponent? The
answer is negative due to
Lemma~\ref{lemma:slimSDNNF_do_not_give_slimOBDD}. One can use the hard
functions from this lemma as constraints and show that
$w$ cannot be dropped from the exponent in the compilation time simply by considering systems made of a single constraint (so of incidence treewidth $1$). The hard functions of
Lemma~\ref{lemma:slimSDNNF_do_not_give_slimOBDD} are specific monotone
DNF formulas and the proof uses d-SDNNF lower bounds shown by~\cite{AmarilliCMS20}. But DNF formulas can be converted in linear time in SDNNF circuits, so the hard functions for  Lemma~\ref{lemma:slimSDNNF_do_not_give_slimOBDD} admit SDNNF circuits of width polynomial in the number $n$ of variables, whereas they only have OBDD representations of width exponential in $n$. Thus it could be possible that the best running time of an FPT compilation to d-SDNNF is of the form $f(w+k)poly(n+w)$ when $w$ is
the SDNNF-width, but also of the form $f(k)poly(n+w)$ when $w$ is the
OBDD-width. We can prove that, even for classes of functions that
are $w$-slim for OBDDs, we cannot remove $w$ from $f$'s
argument.

\begin{theorem}\label{theorem:lowerBound}
For every $k$, there exist systems of constraints over $n$ variables and of incidence treewidth $O(k)$ whose d-SDNNF representations all have size $(n/k)^{\Omega(k)}$, whereas these systems only comprise constraints that are $O(nk)$-slim for OBDDs. 
\end{theorem}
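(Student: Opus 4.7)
The plan is to lift the OBDD-vs-d-SDNNF separation of Lemma~\ref{lemma:slimSDNNF_do_not_give_slimOBDD} into a bounded-treewidth setting by combining $k$ appropriately sized copies of the hard function, while adding a modest coupling that both pushes the incidence treewidth up to $\Theta(k)$ and makes the d-SDNNF lower bound multiplicative.

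First, I would choose the building block. Lemma~\ref{lemma:slimSDNNF_do_not_give_slimOBDD} provides, for every $N$, a Boolean function $f_N$ on $N$ variables with d-SDNNF (and OBDD) size at least $2^{\Omega(N)}/N$, yet which is computable by SDNNF circuits of width $O(N)$ for any vtree. Setting $N := \lceil \log(n/k) \rceil$ gives OBDD-width at most $O((n/k)/\log(n/k)) = O(n/k) \leq O(nk)$, so $f_N$ is $O(nk)$-slim for OBDDs, while its d-SDNNF size is $(n/k)^{\Omega(1)}$.

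Next, I would form the system $F$ as the conjunction of $k$ copies $f_N^{(1)},\dots,f_N^{(k)}$ on pairwise-disjoint variable sets, arranged along a chain. Between consecutive copies I would insert a small ``connector'' constraint sharing one distinguished boundary variable of each side (for instance, an equality or a short clause); each connector has constant arity and is $O(1)$-slim for OBDDs, so the global slimness bound is preserved. The chain structure ensures the incidence graph admits a path decomposition whose bags each contain one copy plus its two adjacent connectors, yielding incidence treewidth $\Theta(k)$. Any remaining variables needed to reach exactly $n$ are added as isolated vertices.

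Finally, I would prove the d-SDNNF lower bound via a rectangle-cover/communication-complexity argument in the spirit of \cite{BovaCMS14}, the same technique that underlies Lemma~\ref{lemma:slimSDNNF_do_not_give_slimOBDD}. For any vtree used by a hypothetical d-SDNNF of $F$, a balanced-cut argument produces a vtree node $t$ whose induced partition splits each of the $k$ copies roughly evenly. Each copy contributes an independent rectangle-cover lower bound of $(n/k)^{\Omega(1)}$ to the models of $F$, and because the copies live on disjoint variables these contributions multiply, giving $(n/k)^{\Omega(k)}$ on the number of gates of the d-SDNNF.

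The main obstacle is the tensorization in the last step. A purely disjoint conjunction would admit a d-SDNNF of size $O(k\cdot(n/k)^{O(1)})$, so the connectors must be strong enough to rule out trivially ``factored'' representations yet weak enough that the individual hardness of each $f_N^{(i)}$ is not destroyed. Concretely, the difficulty is to verify that the rectangle-cover bound genuinely multiplies across the $k$ blocks: this calls for choosing the connector constraints so that for \emph{every} vtree there is a single cut that simultaneously separates the two halves of every copy in a way compatible with the product structure of its satisfying assignments.
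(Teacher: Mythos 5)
Your construction cannot establish the stated lower bound, and the obstacle you flag at the end is not a technical detail to be worked out later but a fatal defect. A conjunction of $k$ constraints on essentially disjoint variable blocks, coupled only by constant-arity connectors through $O(k)$ shared boundary variables, always admits a small d-SDNNF: Shannon-expand on the boundary variables to get a deterministic disjunction over at most $2^{O(k)}$ mutually exclusive boundary assignments, under each of which the residual function factorizes into $k$ independent blocks, each of d-SDNNF size $(n/k)^{O(1)}$ by your own choice of $N=\lceil\log(n/k)\rceil$. The total is $2^{O(k)}\cdot k\cdot(n/k)^{O(1)}$, far below $(n/k)^{\Omega(k)}$ once $n/k$ is large, so no choice of constant-arity connectors can make the per-block rectangle bounds multiply. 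Two further slips: the chain of constraint-stars you describe has incidence treewidth $O(1)$, not $\Theta(k)$; and you read the $2^{\Omega(N)}/N$ width bound of Lemma~\ref{lemma:slimSDNNF_do_not_give_slimOBDD} as an upper bound when arguing $O(nk)$-slimness --- the slimness of a function on $\log(n/k)$ variables follows only from the trivial $2^N$ bound, and at that scale the functions of that lemma exhibit no OBDD-versus-SDNNF separation at all, which was the only reason for choosing them.

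The paper takes an entirely different route that avoids tensorization. It encodes $k$-\textsc{Clique} on the complete graph as a system $F_{G,k}$ over variables $x_i[uv]$, with one constraint $\chi_i$ per clique position (whose scope contains the position-$i$ variables of \emph{all} edges) and one constraint $\chi_{uv}$ per edge (whose scope contains the variables of $uv$ at \emph{all} positions); the constraints are globally entangled rather than locally chained. The dual graph is complete bipartite between the $k$ constraints $\chi_i$ and the $\chi_{uv}$, giving incidence treewidth at most $k+1$ (Lemma~\ref{lemma:tw_at_most_k}); each $\chi_{uv}$ is $8$-slim and each $\chi_i$ is $(nk+2)$-slim for complete OBDDs (Lemmas~\ref{lemma:chi_uv_are_slim_for_complete_OBDD} and~\ref{lemma:chi_i_are_slim_for_complete_OBDD}). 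The lower bound then comes from a counting argument over an adversarial rectangle cover (Theorem~\ref{theorem:adversarialLowerBound}): the models of $F_{K_n,k}$ are in bijection with the $\binom{n}{k}k!$ ordered $k$-cliques, and Lemma~\ref{lemma:models_of_rectanlge} shows that any rectangle respecting a balanced cut of the vtree covers at most $\binom{n-\ell}{k-\ell}(k-\ell)!+\binom{n-k+\ell}{\ell}\ell!+k!$ of them, forcing $(n/k)^{\Omega(k)}$ rectangles. If you want a product-style argument to work, you need constraint scopes that overlap globally in this fashion, not local connectors between otherwise independent blocks.
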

The
hard systems are CSP (constraints satisfaction problems) encodings of
W[1]-hard problems. It seems unlikely that there exist FPT-reductions parameterized by incidence treewidth of W[1]-hard problems to CSP where all constraints are
$O(1)$-slim for complete OBDDs. Indeed, using Theorem~\ref{theorem:main_fpt_compile} plus the fact that deciding the satisfiability of a d-SDNNF circuit is straightforward, the existence of such reductions would imply FPT = W[1]. So, to find hard systems of constraints for the theorem, we started from a W[1]-hard problem with parameter $k$ and reduced it to a CSP of incidence treewidth $O(k)$ and whose constraints are all $O(nk)$-slim for OBDDs (and not
$O(1)$-slim). We have used the problautem $k$-\textsc{Clique}. The reduction is inspired by that used by \cite[Theorem 6]{SamerS10}for $k$-\textsc{IndependentSet}.

\subsection{The Hard Systems of Constraints}

The system of constraints is an encoding of a W[1]-hard problem: $k$-\textsc{Clique}.\medskip

\noindent\fbox{\parbox{0.98\linewidth}{$k$-\textsc{Clique}\\
\strut$\quad$ Input: a graph $G$ \\
\strut$\quad$ Parameter: an integer $k$ \\
\strut$\quad$ Output: \textit{yes} if $G$ has a clique of size $k$, \textit{no} otherwise
}}\\

The hard systems of constraints take two parameters: a graph $G$ and an integer $k$. For every $i \in \{1,\dots,k\}$ and every $\{u,v\} \in E(G)$, we introduce two Boolean variables $x_i[uv]$ and $x_i[vu]$. For conciseness, we write $[k] = \{1,\dots,k\}$. For $u \in V(G)$, we denote by $N(u)$ its set of neighbors in $G$. An \emph{ordered $k$-clique} of $G$ is a sequence $k$-clique of $G$ whose vertices are numbered from $1$ to $k$. Formally, it is represented as a sequence of distinct vertices $S = (u_1,\dots,u_k)$. We write $\elements(S) = \{u_1,\dots,u_k\}$. For $1 \leq \ell \leq h \leq k$, we denote by $S[\ell,h]$ the subsequence $(u_\ell,\dots,u_h)$. We will also denote by $[\ell,h]$ the set $\{\ell,\ell+1,\dots,h\}$.

Utimately, the satisfying assignments of the systems of constraint $F_{G,k}$ will be in bijection with the ordered $k$-cliques of $G$. The satisfying assignment $\alpha$ for a given ordered $k$-clique will be such that $\alpha(x_i[uv]) = 1$ if and only if $u$ is the $i$th vertex of a $k$-clique of $G$ and $v \in N(u)$ is also in the clique. We define two types of constraints.\\

The constraint $\chi_i$, $1 \leq i \leq k$, states that there is a vertex $u \in V(G)$ that is the $i$th vertex of the ordered $k$-clique.
\[
\chi_i = \bigvee_{u \in V(G)} \Big(\big(\sum_{v \in N(u)} x_i[uv] = k-1\big) 
 \land \big(\sum_{vw \in E(G)} x_i[vw]+x_i[wv] = k-1 \big)\Big)
\]
The constraint $\chi_{uv}$, $\{u,v\} \in E(G)$, states that  either $uv$ is contained in the ordered $k$-clique, in which case either $x_i[uv]$ or $x_i[vu]$ is set to $1$ for a unique $i$, or $uv$ is not contained in the clique and both $x_i[uv]$ and $x_i[vu]$ are set to $0$.
\[
\chi_{uv} = \Big(\big(\sum_{i \in [k]}x_i[uv] = 1\big)\land  \big(\sum_{i \in [k]} x_i[vu] = 1\big)\Big) \lor \big(\bigwedge_{i \in [k]} (x_i[uv] = x_i[vu] = 0)\big)
\]

Let 
\[
F_{G,k} = \bigwedge_{i \in [k]} \chi_i \land \bigwedge_{\{u,v\} \in E(G)} \chi_{uv}
\]

\begin{lemma}\label{lemma:models_are_ordered_k_cliques}
Let $S$ be an ordered $k$-clique and $\alpha_S$ be the assignment defined by $\alpha(x_i[uv]) = 1$ if and only if $u$ is the $i$th vertex of $S$ and $v \in N(u) \cap S$. An assignment to $\{x_i[uv] \mid i \in [k], \{u,v\} \in E(G)\}$ satisfies $F_{G,k}$ if and only if it is $\alpha_S$ for some ordered $k$-clique of $G$.
\end{lemma}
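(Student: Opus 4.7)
The plan is to prove the iff by verifying $\alpha_S$ satisfies all constraints in the forward direction, and by reconstructing an ordered $k$-clique from any satisfying assignment in the converse. For the forward direction, fix an ordered $k$-clique $S = (u_1, \ldots, u_k)$. For each $\chi_i$, I witness the outer disjunction with $u := u_i$: since $S$ is a clique, $\alpha_S(x_i[u_i v]) = 1$ iff $v \in N(u_i) \cap \elements(S) = \{u_j : j \neq i\}$, supplying exactly $k-1$ ones among the $x_i[u_i \cdot]$ variables, and $\alpha_S(x_i[ab]) = 0$ whenever $a \neq u_i$ then makes the global $i$-layer sum equal $k-1$ as well. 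For each $\chi_{uv}$, I split on how many endpoints lie in $\elements(S)$: if both, say $u = u_i$ and $v = u_j$ with $i \neq j$, the only layer variables set on this edge are $x_i[uv] = x_j[vu] = 1$, satisfying the first disjunct; otherwise every layer variable on this edge is $0$, satisfying the second.

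For the converse, suppose $\alpha \models F_{G,k}$. I first extract from each $\chi_i$ a unique ``owner'' vertex $u_i$. Any witness $u$ satisfies $\sum_{v \in N(u)} x_i[uv] = k-1$, which combined with $\sum_{vw \in E(G)} x_i[vw] + x_i[wv] = k-1$ forces $x_i[ab] = 0$ whenever $a \neq u$. Two distinct witnesses would therefore contribute $2(k-1)$ pairwise disjoint ones to the global $i$-layer sum, contradicting its value $k-1$ (assuming $k \geq 2$; otherwise the statement is trivial). Let $N_i := \{v : x_i[u_i v] = 1\}$, so $|N_i| = k-1$. I then couple the layers through the edge constraints: for every $v \in N_i$, since $x_i[u_i v] = 1$, $\chi_{u_i v}$ must be satisfied via its first disjunct, giving a unique index $j(v)$ with $x_{j(v)}[v u_i] = 1$; the layer-$j(v)$ characterization just established then forces $v = u_{j(v)}$. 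The map $v \mapsto j(v)$ from $N_i$ into $[k]$ is injective (since $v = u_{j(v)}$) and avoids $i$ (since $u_i \notin N(u_i)$), so by cardinality it is a bijection onto $[k] \setminus \{i\}$. Hence $N_i = \{u_j : j \neq i\}$: the $u_j$'s are pairwise distinct, every $u_j$ with $j \neq i$ lies in $N(u_i)$, so $S := (u_1, \ldots, u_k)$ is an ordered $k$-clique, and $\alpha(x_i[uv]) = 1$ precisely when $u = u_i$ and $v \in N(u_i) \cap \elements(S)$, i.e., $\alpha = \alpha_S$.

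The main obstacle is the layer-coupling step: each $\chi_i$ pins down an owner $u_i$ in isolation, but nothing in $\chi_i$ alone prevents the $u_i$'s from coinciding or from being non-adjacent; it is only the $\chi_{uv}$ constraints that, through the first-disjunct identity $\sum_j x_j[v u_i] = 1$, tie each neighbor $v \in N_i$ back to a distinct layer-owner $u_{j(v)}$ and thereby force the owners to form a clique. The remainder of the argument is routine case analysis on the two disjuncts of each constraint.
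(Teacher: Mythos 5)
Your proof is correct and follows essentially the same route as the paper's: extract a unique owner $u_i$ per layer from the mutual inconsistency of the disjuncts of $\chi_i$, then use the first disjunct of $\chi_{u_i v}$ to tie each neighbor with $x_i[u_iv]=1$ to a distinct layer owner, forcing the owners to form a clique. You are somewhat more careful than the paper (explicit forward direction, the $k\geq 2$ caveat, and the counting argument for why $j(v)\neq i$), but there is no substantive difference in approach.
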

\begin{proof}
It is not too hard to verify that $\alpha_S$ indeed satisfies all functions $\chi_i$ and $\chi_{uv}$ so we focus on proving that the models of $F_{G,k}$ are all of the form $\alpha_S$ for some ordered clique $S$.

Let $X := \var(F_{G,k})$ and let $\alpha : X \rightarrow \{0,1\}$ be a model of $F_{G,k}$. Since the disjuncts of $\chi_i$ are pairwise inconsistent, we have that for every $i \in [k]$, there is a unique $u_i \in V(G)$ such that $\alpha(x_i[u_iv]) = 1$ for some $v$. But for every $v \in N(u)$ such that $\alpha(x_i[u_iv]) = 1$, $\alpha$ can only satisfy $\chi_{u_iv}$ if $\sum_{i \in [k]}\alpha(x_i[u_iv]) = 1$ and $\sum_{i \in [k]}\alpha(x_i[vu_i]) = 1$. So if there is a $j \neq i$ such that $\alpha(x_j[vu_i]) = 1$, so if $v = u_j$. Since $\alpha$ satisfies $\chi_i$, there must $k-1$ distinct such $v$ in $N(u_i)$, so the $u_j$ for $j \neq i$ are all neighbors of $u_i$ and are pairwise distinct. Applying this reasoning for every $i \in [k]$, we derive that the set $\{u_1,\dots,u_k\}$ is a $k$-clique of $G$, so $\alpha = \alpha_{(u_1,\dots,u_k)}$.
\end{proof}

We prove that the $\chi_i$ and $\chi_{uv}$ are $O(nk)$-slim for complete OBDD and that the incidence treewidth of $F_{G,k}$ is at most $k$.

\begin{lemma}\label{lemma:chi_uv_are_slim_for_complete_OBDD}
$\chi_{uv}$ can be represented by a complete OBDD of width at most $8$ for every variable ordering.
\end{lemma}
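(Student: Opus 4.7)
The plan is to fix an arbitrary total order $\pi$ on the $2k$ variables of $\chi_{uv}$ and to bound the number of Shannon restrictions of $\chi_{uv}$ along prefixes of $\pi$. Since the width of the (reduced) complete $\pi$-OBDD equals the maximum, over levels, of the number of distinct restrictions obtained by assigning the prefix, a constant upper bound on that number directly yields a constant-width complete OBDD.

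The first observation is that $\chi_{uv}$ is doubly symmetric: it is invariant under permutations of $\{x_i[uv] : i \in [k]\}$ among themselves and under permutations of $\{x_i[vu] : i \in [k]\}$ among themselves. Hence the restriction of $\chi_{uv}$ by an assignment $\alpha$ to any prefix of $\pi$ depends on $\alpha$ only through the pair of counts $(a,b)$, where $a$ is the number of $x_i[uv]$'s in the prefix set to~$1$ and $b$ is the analogous count for the $x_i[vu]$'s. The second observation is that only the truncated counts $\min(a,2)$ and $\min(b,2)$ matter: the first disjunct of $\chi_{uv}$ forces each total sum to equal exactly~$1$, so once $a \geq 2$ or $b \geq 2$ the first disjunct is violated on every completion of the prefix; the second disjunct (``all variables zero'') is already violated as soon as $a \geq 1$ or $b \geq 1$. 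Therefore every prefix with $a \geq 2$ or $b \geq 2$ restricts $\chi_{uv}$ to the constant-$0$ function, and all such prefixes are equivalent.

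Combining the two observations, at any level of the $\pi$-OBDD there are at most five equivalence classes of prefixes: the four ``live'' states $(0,0),(0,1),(1,0),(1,1)$ and a single ``dead'' state absorbing every prefix with $a \geq 2$ or $b \geq 2$. After merging equivalent prefixes the reduced complete $\pi$-OBDD has width at most $5 \leq 8$, as claimed. I do not expect a real obstacle: the argument does not depend on whether $\pi$ interleaves the two blocks of variables or keeps them separated, because only the counts in the prefix, and not the identities of the variables, enter the equivalence relation. The only minor bookkeeping is to verify that the resulting merged diagram is a well-formed OBDD (each node has two children labelled by the next variable in $\pi$), which follows immediately from the fact that the transition $(a,b) \mapsto (\min(a+1,2),b)$ or $(a,b) \mapsto (a,\min(b+1,2))$ under assigning the next variable to $1$, and $(a,b) \mapsto (a,b)$ under assigning it to $0$, is well defined on the five-state set.
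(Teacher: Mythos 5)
Your proof is correct, but it takes a genuinely different route from the paper's. The paper invokes the standard product (apply) construction: it writes $\chi_{uv}$ as $(A \land B) \lor C$ with $A:\sum_{i}x_i[uv]=1$, $B:\sum_{i}x_i[vu]=1$, $C:\sum_{i}(x_i[uv]+x_i[vu])=0$, notes that each factor admits a narrow complete $\pi$-OBDD for every order $\pi$, and multiplies the widths to obtain $8$. You instead bound the width directly by counting distinct subfunctions induced by prefixes of $\pi$, exploiting the invariance of $\chi_{uv}$ under independent permutations within the two blocks $\{x_i[uv]\}_{i}$ and $\{x_i[vu]\}_{i}$, so that a prefix only matters through the truncated counts $(\min(a,2),\min(b,2))$ and every state with $a\ge 2$ or $b\ge 2$ collapses into a single absorbing dead state. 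Your route is self-contained and yields the sharper bound of $5$; it also sidesteps a slip in the paper's argument, since in a \emph{complete} OBDD the exactly-one constraints $A$ and $B$ require width $3$ (states ``sum $=0$'', ``sum $=1$'', ``sum $\ge 2$''), not $2$, so the multiplicative bound as written actually gives $3\cdot 3\cdot 2=18$ rather than $8$. The only step you should state explicitly is the characterization you lean on---that the minimal complete $\pi$-OBDD has, at each level, one node per distinct subfunction obtained by assigning the prefix---but this is standard, and your check that the five-state transition system is well defined already amounts to exhibiting the diagram.
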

\begin{proof}
Two OBDD $B_1$, $B_2$ respecting the same variable ordering $\pi$ can be used to create another OBDD respecting $\pi$ and representing $B_1 \land B_2$ (resp. $B_1 \lor B_2$) and whose width is at most that of $B_1$ times that of $B_2$~\cite[Chapter 3]{Wegener00}. Using this, since the three constraints $\sum_{i \in [k]}x_i[uv] = 1$, $\sum_{i \in [k]}x_i[vu] = 1$ and $\sum_{i \in [k]}x_i[uv]+x_i[vu] = 0$ admit representations as complete OBDD of width $2$, for every variable ordering, it follows that $\chi_{uv}$ admit representations as complete OBDD of width at most $8$, for every variable ordering.
\end{proof}

\begin{lemma}\label{lemma:chi_i_are_slim_for_complete_OBDD}
$\chi_{i}$ can be represented by a complete OBDD of width at most $nk+2$.
\end{lemma}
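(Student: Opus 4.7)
The plan is to first simplify $\chi_i$ into a disjoint disjunction $\bigvee_{u \in V(G)} \chi_i^u$, and then carry out a Myhill--Nerode analysis to bound the width of a complete $\pi$-OBDD for an arbitrary variable ordering~$\pi$.

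First, I would observe that $\sum_{v \in N(u)} x_i[uv]$ is a sub-sum of $\sum_{\{v,w\} \in E(G)}(x_i[vw] + x_i[wv])$ (it counts exactly the variables whose first coordinate is $u$). So requiring both sums to equal $k-1$ forces every $x_i[pq]$ with $p \neq u$ to be $0$. Thus $\chi_i$ equals $\bigvee_{u \in V(G)} \chi_i^u$, where $\chi_i^u$ asks that exactly $k-1$ of $\{x_i[uv] : v \in N(u)\}$ are $1$ and every $x_i[pq]$ with $p \neq u$ is $0$; the disjuncts are pairwise inconsistent as soon as $k \geq 2$.

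Next, I would fix any variable ordering $\pi$ and bound the number of distinct residual functions after any prefix of $\pi$ by classifying partial assignments into: (i)~a \emph{reject} class, whose extensions never satisfy $\chi_i$; (ii)~the \emph{zero} class, where every read variable is $0$; or (iii)~a class $(u, j)$ with $u \in V(G)$ and $j \in \{1, \dots, k-1\}$, meaning exactly $j$ read variables are set to $1$, all of the form $x_i[u\cdot]$, and every other read variable is~$0$. The decomposition from step~1 guarantees that any other pattern of $1$'s (two $1$'s with different first coordinates, or more than $k-1$ ones) falls into (i). The residuals distinguish the remaining classes: $(u, j)$ requires ``exactly $k-1-j$ of the remaining $u$-variables are $1$ and all other remaining variables are $0$'', while the zero class has residual $\bigvee_u \chi_i^u$ restricted to the remaining variables.

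This yields at most $1 + 1 + n(k-1) \leq nk + 2$ equivalence classes per level, so the complete $\pi$-OBDD built from these classes has width at most $nk + 2$, independently of $\pi$. The main delicate step is the decomposition in step~1, which collapses the two sum constraints into ``commit to a single $u$ plus a counter up to $k-1$''; once it is justified, the state counting is routine.
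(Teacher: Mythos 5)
Your proof is correct and takes essentially the same route as the paper's: fix an arbitrary ordering $\pi$ and bound the number of residual functions at each level by one all-zeros state, one reject state, and one state per pair $(u,\text{count})$. The only differences are cosmetic --- you make the decomposition $\chi_i = \bigvee_u \chi_i^u$ explicit (the paper uses it implicitly) and fold the ``count exceeds $k-1$'' states into the single reject state, which even gives the marginally tighter bound $n(k-1)+2$.
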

\begin{proof}
Let $\pi$ be an ordering of $X := \var(F_{G,k})$. Consider the $j$th level of a complete $\pi$-OBDD computing $\chi_i$ (level $1$ is the level containing the source node). Let $X_{j-1}$ be the $j-1$ first variables in $\pi$ and let $X_{u} := \{x_i[uv] \mid v \in N(u)\}$. The OBDD only requires $nk+2$ nodes at level $j$: one node corresponding to the assignments that satisfy $\sum_{x \in X_{j-1}} x = 0$, another node corresponding to the assignments that satisfy $x_i[uv] = x_i[w\omega] = 1$ for some $u \neq w$ (these assignments already falsify $\chi_i$) and the remaining $nk$ nodes correspond to the situations where ``only variables $x_i[u\ast]$ have been assigned to $1$ for some $u \in V(G)$, and $(\sum_{x \in X_{j-1} \cap X_u}x)$ equals $1$, $\dots$, $k-1$, or is $> k-1$''. So the width of the OBDD is at most $nk+2$.
\end{proof}

\begin{lemma}\label{lemma:tw_at_most_k}
The treewidth of the incidence graph of $F_{G,k}$ is at most $k$.
\end{lemma}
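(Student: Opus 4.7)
The plan is to exhibit an explicit tree decomposition of $G_{F_{G,k}}$ of width at most $k$. The key observation guiding the construction is that every variable $x_i[uv]$ has exactly two neighbours in the incidence graph, namely $\chi_i$ and $\chi_{uv}$. Hence, to cover all edges of $G_{F_{G,k}}$ incident to $x_i[uv]$ (and to $x_i[vu]$) it suffices to place $\chi_i$, $\chi_{uv}$, $x_i[uv]$ and $x_i[vu]$ together in a single ``leaf'' bag. The real work is designing the rest of the tree so that the high-degree constraint vertices ($\chi_i$, adjacent to $2|E(G)|$ variables, and $\chi_{uv}$, adjacent to $2k$ variables) occupy connected subtrees while no bag ever exceeds size $k+1$.

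Concretely, I would let $T$ be the tree consisting of a root $r$, one child $r_{uv}$ for every edge $\{u,v\}\in E(G)$, and for each $r_{uv}$ a further child $r^{uv}_i$ for every $i\in[k]$. The bag function would be
\[
b(r)=\{\chi_1,\dots,\chi_k\},\quad b(r_{uv})=\{\chi_1,\dots,\chi_k,\chi_{uv}\},\quad b(r^{uv}_i)=\{\chi_i,\chi_{uv},x_i[uv],x_i[vu]\},
\]
so the bag sizes are $k$, $k+1$ and $4$ respectively, yielding width $\max(k,3)$; for $k\geq 3$ this equals~$k$.

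Verifying the three tree-decomposition axioms is then routine. Vertex coverage is immediate from the construction. For edges: every edge of $G_{F_{G,k}}$ has either the form $\chi_i$--$x_i[uv]$ (or the $x_i[vu]$ variant) or $\chi_{uv}$--$x_i[uv]$ (or the $x_i[vu]$ variant), and in all four cases the bag $b(r^{uv}_i)$ contains both endpoints. For connectivity: each $\chi_i$ occupies $b(r)$, every $b(r_{uv})$, and every bag $b(r^{uv}_i)$, which are all connected through the path $r^{uv}_i$--$r_{uv}$--$r$; each $\chi_{uv}$ occupies $b(r_{uv})$ together with its $k$ children $b(r^{uv}_i)$; and each $x_i[uv]$ or $x_i[vu]$ occupies only the single leaf $b(r^{uv}_i)$. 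I expect no genuine obstacle beyond conceiving the tree itself---the critical design choice is to isolate each variable in its own leaf bag sharing only $\chi_i$ and $\chi_{uv}$, which avoids paying the full $2|E(G)|$ or $2k$ cost of the constraints' degrees.
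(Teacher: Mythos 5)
Your decomposition is correct, and it takes a genuinely different (and more self-contained) route than the paper. The paper does not build a tree decomposition of the incidence graph at all: it observes that the dual graph of $F_{G,k}$ is the complete bipartite graph on $(\{\chi_i \mid i\in[k]\},\{\chi_{uv}\mid uv\in E(G)\})$, exhibits a path decomposition of that dual graph with bags $\{\chi_{uv},\chi_1,\dots,\chi_k\}$, and then invokes the cited fact that incidence treewidth is at most dual treewidth plus one. Your construction works directly on the incidence graph, essentially taking those same bags as the spine and hanging a leaf bag $\{\chi_i,\chi_{uv},x_i[uv],x_i[vu]\}$ off each one to absorb the variables; the key observation that each variable has exactly the two neighbours $\chi_i$ and $\chi_{uv}$ is exactly what makes this work, and your verification of the three axioms is sound. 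What each approach buys: the paper's argument is shorter and delegates the bookkeeping to a known lemma, but taken literally its chain of inequalities only yields incidence treewidth at most $k+1$; your direct decomposition avoids that translation loss and actually certifies the stated bound of $k$ (for $k\geq 3$ --- for $k\leq 2$ your bags of size $4$ give width $3$, a harmless edge case since the parameterized claim only needs $O(k)$).
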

\begin{proof}
We show that the \emph{dual} treewidth of $F_{G,k}$ is at most $k$ and then use that the incidence treewidth of a constraint system is at most its dual treewidth plus $1$~\cite{SamerS10}. The dual graph of $F_{G,k}$ is the graph whose vertices are its constraints and where two constraints are linked by an edge if and only if they share a variable. The variable sets of the $\chi_i$s are pairwise disjoint and the variable sets of the $\chi_{uv}$ are pairwise disjoint. But the scope of every $\chi_i$ intersect the scope of every $\chi_{uv}$, so the dual graph of $F_{G,k}$ is the complete bipartite graphe for the partition $(\{\chi_i \mid i \in [k]\}, \{\chi_{uv} \mid uv \in E(G) \})$. It is easy to see that a possible tree  decomposition of is a path comprising $|E(G)|$ nodes, one per edge $uv$ of $G$, with the bags $\{\chi_{uv}, \chi_1,\dots,\chi_k\}$. This tree decomposition has width $k+1$, so the dual treewidth of $F_{G,k}$ is at most $k$.
\end{proof}

\subsection{The DNNF lower bound}

To prove Theorem~\ref{theorem:lowerBound}, we show that, when $G$ is the complete graph on $n$ vertices, $F_{G,k}$ can only be computed by DNNF of size $(n/k)^{\Omega(k)}$. The whole subsection is dedicated to prove this. We use a lower bound technique based on the adversarial game construction of a rectangle covering of the $F_{G,k}$. There are a few notions to unpack here, starting with rectangles.

\begin{definition}[Rectangle]
A rectangle over a set $X$ of Boolean variables with respect to a bipartition $(X_1,X_2)$ of $X$ is a Boolean function $r(X) = \rho_1(X_1) \land \rho_2(X_2)$ where $\rho_1$ and $\rho_2$ are Boolean functions.
\end{definition}

\begin{definition}[Rectangle Cover]
A \emph{rectangle cover} of a Boolean function $f(X)$ is a collection $R = \{r_1(X),\dots,r_s(X)\}$ of rectangles, possibly w.r.t. different partitions, such that $f \equiv \bigvee_{i = 1}^s r_i(X)$.
\end{definition}

Rectangle covers exist for every Boolean function $f$. We are interested in the covers of $f$ obtained by an adversarial two-players game. In this game, Charlotte wants to minimize the size of the cover (i.e., the number of rectangles) while Adam tries to maximize it. A single round of the game unfold as follows:
\begin{itemize}
\item[1.] Charlotte chooses an assignment $\alpha \in f^{-1}(1)$ not already covered by $R$ and a vtree $\tau$ over $X$. 
\item[2.] Adam chooses a cut of $\tau$ that induces a partition $(X_1,X_2)$ of $X$.
\item[3.] Charlotte constructs a rectangle $r(X)$ with respect to $(X_1,X_2)$ such that $\alpha \in r^{-1}(1) \subseteq f^{-1}(1)$, and adds it to $R$.
\end{itemize}
Each round adds a rectangle to the cover. The game ends when all models of $f$ are covered. $R_a(f)$ is the size of the smallest cover obtained playing this game.

\begin{theorem}[\cite{deColnetM23}]\label{theorem:adversarialLowerBound}
Let $f$ be a Boolean function, the size of every complete DNNF circuit computing $f$ is at least $2^{R_a(f)}$.
\end{theorem}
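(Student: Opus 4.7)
The plan is to prove the contrapositive, $R_a(f) \leq \log_2 |D|$, by exhibiting a strategy for Charlotte that, using any complete DNNF $D$ computing $f$, terminates the game with at most $\log_2 |D|$ rectangles no matter how Adam plays; this gives $|D| \geq 2^{R_a(f)}$.

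The key structural ingredient is that every $\alpha \in f^{-1}(1)$ admits a \emph{proof tree} $P_\alpha \subseteq D$: a subtree rooted at the output that selects one satisfied child at each $\lor$-gate, keeps both children at each $\land$-gate, and ends at input literals consistent with $\alpha$. By decomposability plus completeness of $D$, the $\land$-gates of $P_\alpha$ impose a binary partition of $\var(f)$ into disjoint variable scopes, and every $\land$-gate $g = g_1 \land g_2$ in $P_\alpha$ naturally gives rise to a rectangle $D_{g_1}(X_1) \land D_{g_2}(X_2)$ with respect to the partition $(X_1, X_2) = (\var(D_{g_1}), \var(D_{g_2}))$, which lies in $f^{-1}(1)$ (since extending any pair of satisfying sub-assignments through the rest of $P_\alpha$ yields a model of $f$) and contains $\alpha$. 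In each round, Charlotte's play would thus be: pick an uncovered $\alpha$; build a vtree $\tau$ designed so that whichever cut Adam chooses corresponds to some $\land$-gate along $P_\alpha$; and output the rectangle derived from that $\land$-gate.

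To obtain the sharp $\log_2 |D|$ bound, the argument then runs a balanced-separator analysis: Charlotte would design $\tau$ via a recursive balanced decomposition of $P_\alpha$ (or of $D$ itself) so that any cut of $\tau$ corresponds to a roughly balanced $\land$-gate, and the chosen rectangle eliminates at least a constant fraction of a potential tracking the remaining DNNF-structure responsible for yet-uncovered models. The main obstacle is making this forced-balance property hold against arbitrary cuts of $\tau$, not just Charlotte's preferred one — in particular, verifying that every cut of the carefully constructed $\tau$ really does point to an $\land$-gate whose substitution back into $P_\alpha$ both contains $\alpha$, stays inside $f^{-1}(1)$, and kills a constant fraction of the potential. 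Once this combinatorial core is in place, iterating for $\lceil \log_2 |D| \rceil$ rounds drives the potential to zero, completing the cover and the bound.
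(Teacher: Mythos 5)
The paper never proves this statement---it is imported as a black box from \cite{deColnetM23}---so your attempt has to be measured against the argument in that source. Your first ingredient is the right one: Charlotte's strategy does come from a proof tree $P_\alpha$ of the uncovered model $\alpha$ in $D$; completeness guarantees that the leaves of $P_\alpha$ mention every variable exactly once, so $P_\alpha$ contracts to a vtree $\tau_\alpha$ that Charlotte plays, and every cut Adam can choose then lands on a gate $g$ of $P_\alpha$. Two caveats already at this stage: the rectangle is \emph{not} $D_{g_1}(X_1)\land D_{g_2}(X_2)$, which ignores the part of the circuit above $g$ and need not be contained in $f^{-1}(1)$; one must take $\rho_1$ to be the assignments admitting a certificate below $g$ and $\rho_2$ the assignments admitting a matching co-certificate through the rest of the circuit. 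And balance plays no role in Charlotte's construction---seeking balanced cuts is Adam's job (that is exactly what Lemma~\ref{lemma:goodCut} is for), not hers.

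The genuine gap is the second half. The inequality you set out to prove, $R_a(f)\le\log_2|D|$, is not the cited theorem and is in fact false: for the inner product function $\mathrm{IP}_n$ on $2n$ variables, every vtree admits a balanced cut, best-partition communication complexity gives $R_a(\mathrm{IP}_n)=2^{\Omega(n)}$, yet $\mathrm{IP}_n$ has a complete DNNF of size $2^{O(n)}$, so $\log_2|D|=O(n)$. Hence no potential-halving scheme of the kind you sketch can exist; the step you flag as ``the main obstacle'' is an impossibility, not a technicality. The ``$2^{R_a(f)}$'' in the displayed statement is evidently a typo for ``$R_a(f)$'': the paper's own application (Lemma~\ref{lemma:calculus} and the paragraph following it) passes from ``$(n/k)^{\Omega(k)}$ rectangles are required'' directly to the size bound $(n/k)^{\Omega(k)}$ of Theorem~\ref{theorem:lowerBound}, which is the linear bound $|D|\ge R_a(f)$ and would be wildly understated if the exponential version held. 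The actual counting in \cite{deColnetM23} is far simpler than what you attempt: each round's rectangle is charged to the gate $g$ of $D$ selected by Adam's cut, and once the rectangle at $g$ is in the cover the strategy never charges $g$ again, so the game ends within $|D|$ rounds. You should prove that charging claim instead of the halving claim.
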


Given a vtree $\tau$ over $X := var(F_{G,k})$ let $X_t$ be the set of variables on leaves below $t$ and $\overline{X}_t := X \setminus X_t$. A partition $(X_t,\overline{X}_t)$ (or $(\overline{X}_t,X_t)$) is said to be \emph{induced by $\tau$}. 

\begin{lemma}\label{lemma:goodCut}
Let $\tau$ be a vtree over $X$ and let $Y \subseteq X$ with $|Y| = k$. There is an partition $(X_1,X_2)$ of $X$ induced by $\tau$ such that $\frac{k}{3} \leq |X_1 \cap Y| \leq \frac{2k}{3}$ and $\frac{k}{3} \leq |X_2 \cap Y| \leq \frac{2k}{3}$.
\end{lemma}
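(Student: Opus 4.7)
My plan is a textbook balanced-separator walk on $\tau$. For each node $t$, set $f(t) := |X_t \cap Y|$, so $f(r) = k$ at the root $r$, $f(\ell) \in \{0,1\}$ at every leaf $\ell$, and $f(t) = f(t_L) + f(t_R)$ at every internal node with children $t_L, t_R$, since the leaves of $\tau$ are in bijection with $X$ and hence $X_t = X_{t_L} \sqcup X_{t_R}$.

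Starting from $t := r$, I would repeatedly replace $t$ with the child of $t$ maximizing $f$ (breaking ties arbitrarily), stopping the first time $f(t) \leq 2k/3$ holds. Call $t^*$ the node where the walk stops. The walk enters the loop at least once because $f(r) = k > 2k/3$, so $t^*$ has a parent $t'$ in $\tau$. The exit condition gives $f(t^*) \leq 2k/3$, while the ``heavier child'' rule gives $f(t^*) \geq f(t')/2 > (2k/3)/2 = k/3$. Setting $(X_1,X_2) := (X_{t^*}, \overline{X}_{t^*})$, the identity $|X_1 \cap Y| + |X_2 \cap Y| = k$ then yields $k/3 \leq |X_j \cap Y| \leq 2k/3$ for both $j \in \{1,2\}$, as required.

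The only thing to double-check is that the walk does not descend into a leaf where $f(t^*)$ could be too small. This is automatic for $k \geq 3$: from $f(t^*) > k/3 \geq 1$ we deduce that $t^*$ is internal, so the walk necessarily terminates before reaching a leaf. The corner cases $k \leq 2$ are degenerate — the stated bounds are either vacuously true or satisfied by any edge of $\tau$ separating off a leaf containing a $Y$-variable — and need no separate argument.

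The main ``obstacle'' is really just identifying the correct invariant. Descending into the \emph{heavier} of the two children is what guarantees $f(t^*) > k/3$ at the moment of first descent below $2k/3$; without this tie-break, one could pass from a value $> 2k/3$ in one child directly to a value $< k/3$ in the other, and the lemma would fail. Once this halving argument is in place, both bounds follow in a single line, so the rest is mechanical.
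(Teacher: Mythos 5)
Your proof is correct and follows essentially the same route as the paper's: both descend from the root while $|X_t \cap Y| > \frac{2k}{3}$, stepping into a child that still carries more than $\frac{k}{3}$ of $Y$ (the paper picks \emph{some} child with $|X_{t_i}\cap Y| > \frac{k}{3}$, which must exist since the two children's counts sum to more than $\frac{2k}{3}$; your ``heavier child'' rule is an equivalent way to certify the same invariant), and stop at the first node where the count drops to at most $\frac{2k}{3}$. The only quibble is your remark on corner cases: for $k=1$ the statement is not vacuous but actually unachievable (no integer lies in $[\frac13,\frac23]$), a degenerate regime the paper's proof silently shares and which is irrelevant to the lemma's use.
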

\begin{proof}
Consider a node $t$ of $\tau$ with two children $t_1$ and $t_2$ such that $|X_t \cap Y| > \frac{2k}{3}$ (so $|\overline{X}_t \cap Y| < \frac{k}{3}$). There must be an $i \in \{1,2\}$ such that $|X_{t_i} \cap Y| > \frac{k}{3}$ for otherwise $|X_t \cap Y| > \frac{2k}{3}$ would not hold. Now there are two cases. In case (1) we have $|X_{t_i} \cap Y| \leq \frac{2k}{3}$ and we are done: the partition $(X_{t_i}, X_{t_{3-i}}\cup \overline{X}_t)$ obtained by removing $\{t_i,t\}$ from $\tau$ matches the statement of the lemma. In case (2) we have $|X_{t_i} \cap Y | > \frac{2k}{3}$, then we repeat the argument below $t_i$. Clearly we have at least one node $t \in T$ such that $|X_t \cap Y| > \frac{2k}{3}$: the root of $T$. Since the vtree has finite depth, at some point we will not be able to go down the vtree so (1) will hold instead of (2).
\end{proof}

We make Charlotte and Adam play the game on $F_{G,k}$. Charlotte chooses a model $\alpha$ and a vtree $\tau$. By Lemma~\ref{lemma:models_are_ordered_k_cliques}, $\alpha = \alpha_S$ for some ordered $k$-clique $S = (s_1,\dots,s_k)$. Let $Y := \{x_i[s_i s_{i + 1}] \mid i \in [k-1]\} \cup \{x_k[s_ks_1]\}$. Adam chooses the cut from Lemma~\ref{lemma:goodCut}. Say, w.l.o.g., that $X_1 \cap Y = \{x_1[s_1 s_2],x_2[s_2s_3],\dots,x_\ell[s_\ell,s_{\ell+1}]\}$ for some $\ell$ between $\frac{k}{3}$ and $\frac{2k}{3}$. Now Charlotte must construct a rectangle $r(X) = \rho_1(X_1) \land \rho_2(X_2)$ accepting only satisfying assignments of $F_{G,k}$ and in particular accepting $\alpha_S$.

\begin{lemma}\label{lemma:models_of_rectanlge}
Let $r = \rho_1 \land \rho_2$ be a rectangle with respect to the partition $(X_1,X_2)$ in a rectangle cover of $F_{G,c}$. Suppose $r$ is satisfied by $\alpha_S$ and $\alpha_U$ for $U = (u_1,\dots,u_k)$ and $S = (s_1,\dots,s_k)$ two ordered $k$-cliques of $G$. If $U[1,\ell] \neq S[1,\ell]$ and $U[\ell+1,k] \neq S[\ell+1,k]$, then $\elements(U) = \elements(S)$.
\end{lemma}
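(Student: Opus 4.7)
The plan is to exploit the crossing property of rectangles. Since $\alpha_S$ and $\alpha_U$ both satisfy $r=\rho_1(X_1)\land\rho_2(X_2)$, the two hybrids $\beta:=\alpha_S|_{X_1}\cup\alpha_U|_{X_2}$ and $\gamma:=\alpha_U|_{X_1}\cup\alpha_S|_{X_2}$ also satisfy $r$, hence satisfy $F_{G,k}$, so by Lemma~\ref{lemma:models_are_ordered_k_cliques} they are of the form $\alpha_V$ and $\alpha_W$ for some ordered $k$-cliques $V=(v_1,\dots,v_k)$ and $W=(w_1,\dots,w_k)$. The 1-entries of $\alpha_V$ at any index $i$ all share first inner vertex $v_i$, so the pivot variables $x_i[s_i s_{i+1}]\in X_1\cap Y$ (set to $1$ by $\beta$) force $v_i=s_i$ for all $i\in[1,\ell]$; symmetrically $w_j=s_j$ for all $j\in[\ell+1,k]$.

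Using the hypothesis I would then pick indices $i^*\in[1,\ell]$ and $j^*\in[\ell+1,k]$ with $u_{i^*}\neq s_{i^*}=v_{i^*}$ and $u_{j^*}\neq s_{j^*}=w_{j^*}$. Inspecting $\beta$'s 1-entries at index $i^*$: they all have first inner vertex $s_{i^*}$, whereas the contribution from $\alpha_U|_{X_2}$ would have first inner vertex $u_{i^*}\neq s_{i^*}$, so $\alpha_U|_{X_2}$ must contribute nothing at $i^*$, i.e.\ $\{x_{i^*}[u_{i^*} u]:u\in\elements(U)\setminus\{u_{i^*}\}\}\subseteq X_1$. A simple counting-and-matching argument comparing $\alpha_S$'s $i^*$-block with $\alpha_V$'s $i^*$-block then yields $\{x_{i^*}[s_{i^*} s]:s\in\elements(S)\setminus\{s_{i^*}\}\}\subseteq X_1$ and $\elements(V)=\elements(S)$. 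The mirror argument on $\gamma$ at index $j^*$ analogously yields $\{x_{j^*}[u_{j^*} u]:u\in\elements(U)\setminus\{u_{j^*}\}\}\subseteq X_2$.

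To conclude, I would return to $\beta$ at index $j^*$: by the previous step, $\alpha_U|_{X_2}$ contributes all $k-1$ 1-entries of $\alpha_U$ at index $j^*$, and these must coincide with the $k-1$ 1-entries of $\alpha_V$ at $j^*$, which forces $v_{j^*}=u_{j^*}$ and $\elements(V)=\elements(U)$. Combining with $\elements(V)=\elements(S)$ gives $\elements(U)=\elements(S)$. The main subtlety is that \emph{both} hybrids and \emph{both} pivot indices are essential: the $i^*$-analysis on $\beta$ identifies $V$ with $S$, while the $\gamma$-analysis at $j^*$ is needed only to confine $\alpha_U$'s $j^*$-block to $X_2$, which is what lets the second pass on $\beta$ at $j^*$ identify $V$ with $U$.
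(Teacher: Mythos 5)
Your proof is correct and follows essentially the same route as the paper's: both construct the two hybrid assignments, use the pivot variables of $Y$ to pin down the relevant coordinates of the hybrid cliques, deduce that the level-$i^*$ and level-$j^*$ variable blocks of $\alpha_U$ and $\alpha_S$ are confined to one side of the partition, and read off $\elements(S)=\elements(V)=\elements(U)$ from the hybrid $\beta$. The only cosmetic difference is that you obtain $\elements(V)=\elements(S)$ by a counting argument on $\beta$ at level $i^*$, where the paper uses one more contradiction argument on the second hybrid.
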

\begin{proof}
Let $i \in [1,\ell]$ such that $u_i \neq s_i$ and let $j \in [\ell+1,k]$ such that $u_j \neq s_j$.  $i$ and $j$ are fixed in this proof. Note that it is in theory possible that $u_i = s_j$ and $s_i = u_j$. Write $\alpha_S = \alpha^1_S \alpha^2_S$ and $\alpha_{U} = \alpha^1_U\alpha^2_U$ to distinguish the parts of the assignments over $X_1$ and the parts over $X_2$. By definition $\rho_1$ accepts $\alpha^1_S$ and $\alpha^1_U$, and $\rho_2$ accepts $\alpha^2_S$ and $\alpha^2_U$. So $r$ accepts $\beta := \alpha^1_S\alpha^2_U$ and $\beta' := \alpha^1_U\alpha^2_S$. $r$ accepts only satisfying assignments of $F_{G,k}$ so we have $\beta = \alpha_W$ and $\beta' = \alpha_{W'}$ for some ordered $k$-cliques $W$ and $W'$ of $G$.  

Suppose $x_j[u_ju_h] \in X_1$ for some $h \in [k]$. Then $\beta'(x_j[u_ju_h]) = \alpha^1_U(x_j[u_ju_h]) = 1$ and thus $W'[j] = u_j$. But, by definition, $x_j[s_j,s_{j+1}]$ is in $X_2$ and $\alpha_S(x_j[s_j,s_{j+1}]) = 1$, so $\beta'(x_j[s_j,s_{j+1}]) = \alpha^2_S(x_j[s_j,s_{j+1}]) = 1$ and thus $W'[j] = s_j$. This is a contradiction since $u_j \neq s_j$. So the following holds
\begin{itemize}
\item[(1)] for every $h \in [k] \setminus \{j\}$, $x_j[u_ju_h]$ is in $X_2$
\end{itemize}
A similar argument shows that 
\begin{itemize}
\item[(2)] for every $h \in [k] \setminus \{i\}$, $x_i[u_iu_h]$ is in $X_1$
\end{itemize}
Now suppose $x_i[s_is_h] \in X_2$ for some $h \in [k]$. Then $\beta'(x_i[s_is_h]) = \alpha^2_S(x_i[s_is_h]) = 1$ and thus $W'[i] = s_i$. But by (2), we also have that $x_i[u_iu_p]$ is in $X_1$ for every $p \in [k]\setminus \{i\}$, so $\beta'(x_i[u_iu_p]) = \alpha^1_U(x_i[u_iu_p]) = 1$ and thus $W'[i] = u_i$. This is a contradiction since $u_i \neq s_i$. So the following holds
\begin{itemize}
\item[(3)] for every $h \in [k] \setminus \{i\}$, $x_i[s_is_h]$ is in $X_1$
\end{itemize}
Now let us recall that 
\begin{itemize}
\item[(4)] if $W$ is an ordered $k$-clique then for every $v,w \in V(G)$ and every $h$, $\alpha_W(x_h[vw]) 1 \Rightarrow v \in \elements(W) \text{ and } w \in \elements(W)$
\end{itemize}
So when we look at $\beta= \alpha^1_S\alpha^2_U = \alpha_W$, given (1) and (4) we obtain that $\elements(W) = \elements(U)$, and given (3) and (4) we obtain that $\elements(W) = \elements(S)$. Thus $\elements(S) = \elements(U)$.
\end{proof}

It follows from Lemma~\ref{lemma:models_of_rectanlge} that the number of assignments accepted by $r$ is at most ``number of ordered $k$-cliques of $G$ of the form $(s_1,\dots,s_\ell,\ast)$'' + ``number of ordered $k$-cliques of $G$ of the form $(\ast,s_{\ell+1},\dots,s_k)$'' + ``number of ordered $k$-cliques of $G$ whose elements are that of $S$''. Hence
\begin{equation}\label{equation:upper_bound_rectangle}
|r^{-1}(1)| \leq \binom{n - \ell}{k - \ell}(k-\ell)! + \binom{n - k + \ell}{\ell}\ell! + k!
\end{equation}

\begin{lemma}\label{lemma:calculus}
Every rectangle $r$ in a rectangle cover of $F_{G,k}$ with respect to a partition given by Lemma~\ref{lemma:goodCut} verifies $\binom{n}{k}k!/|r^{-1}| = (n/k)^{\Omega(k)}$.
\end{lemma}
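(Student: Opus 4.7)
The plan is to prove this by directly manipulating the three terms on the right-hand side of~(\ref{equation:upper_bound_rectangle}). Let $N := \binom{n}{k}k!$, and let $A$, $B$, $C$ denote the three summands $\binom{n-\ell}{k-\ell}(k-\ell)!$, $\binom{n-k+\ell}{\ell}\ell!$, and $k!$ respectively. Since $|r^{-1}(1)| \leq A + B + C \leq 3\max(A,B,C)$, it suffices to lower-bound each of $N/A$, $N/B$, $N/C$ individually by $(n/k)^{\Omega(k)}$.

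The key simplification is that both $N/A$ and $N/B$ telescope to falling factorials. Directly, $N/A = \binom{n}{k}k!/(\binom{n-\ell}{k-\ell}(k-\ell)!) = n!/(n-\ell)! = n(n-1)\cdots(n-\ell+1)$, a product of $\ell$ descending factors; an analogous calculation shows $N/B = n!/(n-k+\ell)! = n(n-1)\cdots(n-k+\ell+1)$, a product of $k - \ell$ factors. I would then use the elementary inequality $(n-i)/(k-i) \geq n/k$ (valid for $n \geq k$ and $0 \leq i < k$) to deduce that any such falling factorial with $m \leq k$ factors satisfies $n(n-1)\cdots(n-m+1) \geq (n/k)^m \cdot k(k-1)\cdots(k-m+1) \geq (n/k)^m$.

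Applying this with $m = \ell$ and $m = k-\ell$, together with the balanced cut guarantee from Lemma~\ref{lemma:goodCut} (which gives $k/3 \leq \ell \leq 2k/3$, so both $\ell$ and $k-\ell$ are at least $k/3$), yields $N/A \geq (n/k)^{k/3}$ and $N/B \geq (n/k)^{k/3}$. The third ratio, $N/C = \binom{n}{k} \geq (n/k)^k$, is an even stronger bound. Combining, $N/|r^{-1}(1)| \geq \tfrac{1}{3}(n/k)^{k/3} = (n/k)^{\Omega(k)}$, where the factor $1/3$ is harmlessly absorbed into the exponent in the regime where $n/k$ is bounded away from $1$ (the regime relevant to Theorem~\ref{theorem:lowerBound}, which takes $G$ to be the complete graph on $n$ vertices with $n$ much larger than $k$).

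There is no real technical obstacle; this is a routine calculation with binomial coefficients and falling factorials. The main point worth highlighting is \emph{why} the balanced cut from Lemma~\ref{lemma:goodCut} is essential: it ensures that both $\ell$ and $k-\ell$ are linear in $k$, so that both $N/A$ and $N/B$ (not merely one of them) simultaneously attain the $(n/k)^{\Omega(k)}$ bound. If $\ell$ were allowed to fall too close to $0$ or to $k$, the bound would collapse on one of these two terms, and the overall argument would fail.
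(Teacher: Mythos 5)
Your proof is correct and follows essentially the same route as the paper's: both bound the three-term sum from~(\ref{equation:upper_bound_rectangle}) by three times its maximum and then use the balanced-cut guarantee $\ell, k-\ell \geq k/3$ to drive each of the three ratios down to $(n/k)^{k/3}$, arriving at $\tfrac{1}{3}(n/k)^{\Omega(k)}$. The only difference is in the elementary bookkeeping --- you lower-bound the falling factorials $n!/(n-\ell)!$ and $n!/(n-k+\ell)!$ term by term via $(n-i)/(k-i)\geq n/k$, while the paper passes through the binomial coefficients $\binom{n}{k/3}$ and $\binom{n}{k}$ --- which changes nothing of substance.
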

\begin{proof}
Start from Equation~(\ref{equation:upper_bound_rectangle}).
Let $h := k - \ell$. Both $\ell$ and $h$ are greater than $k/3$ so $\binom{n}{k}k!/|r^{-1}(1)|$ is at least
\begin{multline*}
\frac{n!}{(n-\ell)! + (n-h)! + (n-k)!k!} 
 \geq \frac{n!}{2(n-\frac{k}{3})! + (n-k)!k!} 
\geq \frac{n!}{3\max((n-\frac{k}{3})!,(n-k)!k!)}
\\ = \frac{1}{3}\min\left(\frac{n!}{(n-\frac{k}{3})!}, \binom{n}{k}\right) \geq \frac{1}{3}\min\left(\binom{n}{k/3}, \binom{n}{k}\right) = \frac{1}{3}\binom{n}{k/3} \geq \frac{1}{3}\left(\frac{3n}{k}\right)^{k/3}
\end{multline*}
\end{proof}

When $G$ is the complete graph, $|F_{G,k}^{-1}(1)| = \binom{n}{k}k!$ so, if Adam always chooses cuts using Lemma~\ref{lemma:goodCut}, then the game takes $(n/k)^{\Omega(k)}$ rounds to complete, regardless of Charlotte's choices because, by Lemma~\ref{lemma:calculus}, $(n/k)^{\Omega(k)}$ rectangles are required to cover all satisfying assignments of $F_{G,k}$. We have shown with Lemmas~\ref{lemma:chi_i_are_slim_for_complete_OBDD},~\ref{lemma:chi_uv_are_slim_for_complete_OBDD} and~\ref{lemma:tw_at_most_k} that $F_{G,k}$ verifies the conditions of Theorem~\ref{theorem:lowerBound}. Using Theorem~\ref{theorem:adversarialLowerBound} finishes the proof of Theorem~\ref{theorem:lowerBound}.

\section{Faster Model Counting}\label{section:model_counting}

In this section, we show that model counting for three special cases
of systems of constraints that are $O(1)$-slim for OBDDs can be done
faster, compared to compiling the system as in Section~\ref{section:fpt_compilation} and counting from the compiled form. To this aim, we provide an FPT algorithm based on dynamic programming over a nice tree-decomposition of the incident graph of the system. We start with describing the algorithm for systems of literal-symmetric functions that are $O(1)$-slim for OBDDs, and then show that the speed of the algorithm can be improved when we consider even more particular cases, namely systems of one-sided constraints and systems of a mixture of CNF clauses and modulo constraints. All results in this section are stated in the \emph{unit-cost model}, where every arithmetic operation is counted as an elementary operation.

\subsection{Algorithm}
In Section \ref{section:STS}, we have shown that
literal-symmetric functions that are $O(1)$-slim for OBDDs are
precisely the functions described by a CSTS of constant size modulo literal-flipping. We use the latter representation and provide an FPT algorithm for model counting for a system of such constraints. The algorithm is based on dynamic programming over a nice tree decomposition of the incident graph of the system and is parameterized by the treewidth $k$ of the incident graph of the system and the maximum state size $w$ of its constraints. Throughout the entire section, we assume $w\geq 2$. The main theorem of this subsection is stated here, whose proof we will give later on in this chapter, after sufficient technical details and supporting lemmas have been provided.

\begin{theorem}\label{dynamic-brute-force-time}
Let $F$ be a system of constraints whose maximum state size is $w$, and let $\calT = (T, b)$ be a nice width-$k$ tree decomposition of the incidence graph $G_F$. Then, given $F$ and $\calT$, one can count the number of models of $F$ with \mbox{$O(w^{2k}\cdot |G_F|)$} elementary operations. 
\end{theorem}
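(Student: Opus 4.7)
My plan is a bottom-up dynamic program along the nice tree decomposition $\calT = (T,b)$. For each constraint $c$ I fix a minimum-size commutative STS $\calA_c = (S_c, f_0^c, f_1^c, s_0^c, T_c)$ with $|S_c| \le w$ describing $c$ modulo a literal-flipping $\phi_c$; this is exactly what ``state size $w$'' supplies. Write $V_t$ for the set of variables introduced in the subtree rooted at $t$. At each node $t$ I maintain a table $M_t$ indexed by pairs $(\sigma,\tau)$: $\sigma$ assigns a bit to every variable $x \in b(t)$, and $\tau$ assigns to every constraint $c \in b(t)$ a state $\tau(c) \in S_c$. The intended value $M_t[\sigma,\tau]$ is the number of assignments $\alpha:V_t \to \{0,1\}$ that (i) agree with $\sigma$ on the bag variables, (ii) satisfy every constraint that was introduced in the subtree and has already been forgotten, and (iii) for every $c \in b(t)$, drive $\calA_c$ from $s_0^c$ to $\tau(c)$ when fed the $\phi_c$-image of $\alpha$ restricted to $\var(c) \cap V_t$.

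The updates at leaves, introduces, and forgets are local. A leaf has $M_t[\emptyset,\emptyset] = 1$. An introduce-variable node for $x$ splits each child entry into $x=0$ and $x=1$ and, for every $c \in b(t)$ with $x \in \var(c)$, advances $\tau(c)$ by one application of $f^c_{v}$ with $v$ the $\phi_c$-adjusted value of $x$. An introduce-constraint node for $c$ initializes $\tau(c)$ by starting at $s_0^c$ and stepping through the $\sigma$-values of $\var(c) \cap b(t)$; a short argument on the connectedness of a vertex's active interval shows that, at this moment, no $x \in \var(c)$ has been forgotten yet, so $\var(c) \cap V_t = \var(c) \cap b(t)$. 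A forget-variable node sums the child entries over the two possible values of the forgotten variable. A forget-constraint node for $c$ retains only entries with $\tau(c) \in T_c$; by a similar argument, every $x \in \var(c)$ has been in a bag with $c$ and was therefore processed in $\calA_c$ exactly once, so condition (iii) legitimately collapses into ``$c$ is satisfied'' at that moment.

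The delicate step is the join node, where $b(t) = b(t_1) = b(t_2)$ and $V_t = V_{t_1} \sqcup V_{t_2}$ (each $x \in b(t)$ belongs to a unique subtree, determined by its introduce-node). The $\sigma$-parts of matching child entries must coincide; for each $c \in b(t)$ we need to combine $\tau_1(c)$ and $\tau_2(c)$ into the state reached after processing all of $\var(c) \cap V_t$. Writing $\pi_c(a,b) := (f_0^c)^a (f_1^c)^b(s_0^c)$, commutativity gives that the combined state is $\pi_c(a_1+a_2, b_1+b_2)$, where $(a_i,b_i)$ are the $\phi_c$-adjusted zero/one counts in $V_{t_i} \cap \var(c)$. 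Because $\calA_c$ is minimum-size, any two pairs $(a,b),(a',b')$ with $\pi_c(a,b)=\pi_c(a',b')$ are indistinguishable and hence stay equal under all further transitions, so $\pi_c(a_1+a_2, b_1+b_2)$ depends only on $\tau_1(c)=\pi_c(a_1,b_1)$ and $\tau_2(c)=\pi_c(a_2,b_2)$. This yields a well-defined binary operation $\oplus_c : S_c \times S_c \to S_c$, which I precompute once in $O(w^3)$ time per constraint by simulating $\calA_c$ on short strings. The join update is then $M_t[\sigma,\tau] = \sum_{\tau_1 \oplus \tau_2 = \tau} M_{t_1}[\sigma,\tau_1]\cdot M_{t_2}[\sigma,\tau_2]$, where $\oplus$ is applied component-wise over $b(t)$, evaluated by iterating over all pairs of compatible child entries.

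Correctness is an induction on $T$ using invariants (i)--(iii), and the answer is $M_{\text{root}}[\emptyset,\emptyset]$. Since $w \ge 2$, each vertex of a bag (variable or constraint) contributes at most $w$ options, so $M_t$ has at most $w^{|b(t)|} \le w^k$ entries. Introduce and forget nodes cost $O(w^k)$ and join nodes cost $O(w^{2k})$ elementary operations in the unit-cost model; a nice width-$k$ tree decomposition has $O(|G_F|)$ nodes, giving the claimed $O(w^{2k}\cdot |G_F|)$ bound after the one-off precomputation of the tables $\oplus_c$. The main obstacle is precisely the well-definedness of $\oplus_c$: it is what forces the use of a \emph{minimum-size} commutative CSTS, for which quotienting by behavioural equivalence turns state-composition into a genuine binary operation on $S_c$.
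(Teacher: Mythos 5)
Your overall strategy---dynamic programming over the nice tree decomposition, one CSTS state per bag constraint, and a commutative addition of states at join nodes---is the same as the paper's. But there is a genuine error in your choice of invariant, and it surfaces exactly at the join node. You let $\tau(c)$ record the state of $\mathcal{A}_c$ after feeding \emph{all} of $\var(c)\cap V_t$, including the bag variables, and you justify the join update by asserting $V_t = V_{t_1}\sqcup V_{t_2}$. That identity is false: since $b(t)=b(t_1)=b(t_2)$, every bag variable lies in both $V_{t_1}$ and $V_{t_2}$ (indeed $V_{t_1}\cap V_{t_2}=b(t)\cap\var(F)$ by connectivity), and in a nice tree decomposition a vertex may be \emph{introduced several times}, once in each branch below a join, so ``determined by its introduce-node'' does not select a unique subtree. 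Consequently the counts $(a_1,b_1)$ and $(a_2,b_2)$ both include the contributions of the variables in $\var(c)\cap b(t)$, and $\pi_c(a_1+a_2,b_1+b_2)$ processes each such variable twice. For an XOR constraint with a single bag variable set to $1$ this already gives the wrong parity, so the join formula does not compute the intended quantity.

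The fix is the paper's: weaken the invariant so that $\tau(c)$ accounts only for the variables of $c$ that have already been \emph{forgotten} below $t$ (the set $\widetilde{V_t}=V_t\setminus b(t)$). These sets are genuinely disjoint across the two children of a join, each variable is forgotten exactly once, and the bag variables' contribution is read off from $\sigma$ at the moment $c$ itself is forgotten, where the acceptance test becomes $\delta^c[\sigma]+\tau(c)\in T_c$ rather than $\tau(c)\in T_c$. With that change, variables are consumed at their unique forget node instead of at introduce nodes, the introduce-constraint step initializes $\tau(c)$ to $s_0^c$, and the rest of your argument, including the complexity analysis, goes through. A smaller point: well-definedness of the state addition does not require minimality of the CSTS---it follows from commutativity and determinism alone, since $\delta(l_1)=\delta(l_1')$ and $\delta(l_2)=\delta(l_2')$ imply $\delta(l_1l_2)=\delta(l_1'l_2)=\delta(l_2l_1')=\delta(l_2'l_1')=\delta(l_1'l_2')$; minimality is only needed to bound the number of states by $w$.
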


The idea for the algorithm is to perform dynamic programming over the nice tree decomposition. 
Let $F$, $\calT$ be as in Theorem \ref{dynamic-brute-force-time}. For each node $t$ of $T$, let $T_t$ denote the subtree of $T$ rooted at $t$, let $F_t$ and $V_t$ be the set constraints and the set of variables that appear in the bags of $T_t$, respectively. We write $b_F(t) = b(t) \cap F$, and $b_V(t) = b(t) \cap \var(F)$ for the set of constraints and variables in $b(t)$, respectively. For all $t\in T$, let $\widetilde{V_t}:=V_t \setminus b_V(t)$. 
For each $c\in b_F(t)$, let $(S^c,f_0^c,f_1^c,s_0^c,T^c)$ be a minimal CSTS that describes $c$ modulo literal-flipping and let $\phi^c$ be the literal-flipping that witnesses this. We fix this choice of CSTS before the start of the algorithm. Let $\delta^c$ be the corresponding extended transition function. For each $F' \subseteq F$, let $\mathcal{S}(F'):=\{R\subseteq\bigcup_{c\in F'}\{c\}\times S^c\,|\,\text{for each }c\in F'\text{ there is a unique }s\text{ s.t. }(c,s)\in R\}$.
Given $F'\subseteq F$, $c \in F'$ and $\bar{s}\in\mathcal{S}(F')$, we use $s^c$ to denote the unique $s$ such that $(c,s)\in\bar{s}$. Define a binary operation $+$ on $\mathcal{S}(F')$ by $\bar{s}_1+\bar{s}_2:=\bar{s}$ where $s^c=s_1^c+s_2^c$. 
For each $c \in F$, we use $c^+$ to denote the set of variables $x_i$ that occur in $c$ with $\phi^c(x_i)=x_i$, and $c^-$ the set of variables $x_i$ that occur in $c$ with $\phi^c(x_i)=\bar{x}_i$. For any partial assignment $\tau$ to the variables, let $q_0(\tau,c):=|\{v\in c^+\,|\,\tau(v)=0\}| + |\{v\in c^-\,|\,\tau(v)=1\}|$ and $q_1(\tau,c):=|\{v\in c^+\,|\,\tau(v)=1\}| + |\{v\in c^-\,|\,\tau(v)=0\}|$. Finally let $\delta^c[\tau]:=\delta^c(1^{q_1(\tau,c)}0^{q_0(\tau,c)})$.

\begin{definition}
For each assignment $\alpha: b_V(t) \rightarrow\{0, 1\}$ and $\bar{s}\in \mathcal{S}(b_F(t))$, we define $N(t, \alpha,\bar{s})$ as the set of assignments $\tau : V_t \rightarrow \{0, 1\}$ for which the following conditions hold:
\begin{enumerate}
\item $\tau(v) = \alpha(v)$ for all variables $v \in b_V(t)$.
\item For each $c\in b_F(t)$, $\delta^c[\tau|_{\widetilde{V_t}}]=s^c$.
\item For each $c\in F_t \setminus b_F(t)$,  $\delta^c[\tau]\in T^c$. 
\end{enumerate}
\label{N-def}
\end{definition}
We represent the values of $n(t, \alpha, \bar{s}) = |N(t, \alpha, \bar{s})|$ for all $\alpha: b_V(t) \rightarrow \{0, 1\}$ and $\bar{s}\in\mathcal{S}(b_F(t))$ by a table $M_t$ with $|b(t)| + 1$ columns and $2^{|b_V(t)|}\cdot|\mathcal{S}(b_F(t))| \leq 2^{|b_V(t)|}\cdot w^{|b_F(t)|} \leq w^k$ rows. The first $|b_V(t)|$ columns of $M_t$ contain Boolean values encoding $\alpha(v)$ for variables $v \in b_V(t)$, followed by $|b_F(t)|$ columns, one for each $c\in b_F(t)$ with the entry $s^c$. The last column contains the integer $n(t,\alpha, \bar{s})$. 

The following lemmas show how to compute the table $M_t$ for a node $t \in T$ depending on its type, assuming the tables of its children have already been computed.

\begin{lemma}\label{join-lemma}
    Let $t$ be a join node with children $t_1,t_2$. For each assignment $\alpha: b_V (t') \rightarrow \{0, 1\}$, and $\bar{s}\in\mathcal{S}(b_F(t))$, we have
    \[n(t,\alpha, \bar{s}) = \sum_{\bar{s}_1 + \bar{s}_2=\bar{s}} n(t_1,\alpha,\bar{s}_1)\ \cdot\ n(t_2,\alpha,\bar{s}_2).\]    
    \end{lemma}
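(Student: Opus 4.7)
The plan is to exhibit an explicit bijection from $N(t,\alpha,\bar s)$ onto the disjoint union $\bigsqcup_{\bar s_1+\bar s_2=\bar s} N(t_1,\alpha,\bar s_1)\times N(t_2,\alpha,\bar s_2)$, via the restriction map $\tau\mapsto(\tau|_{V_{t_1}},\tau|_{V_{t_2}})$. The first thing I would check is the structural decomposition forced by the join node. Since $b(t)=b(t_1)=b(t_2)$, the connectivity condition of the tree decomposition gives $V_{t_1}\cap V_{t_2}=b_V(t)$ and $\widetilde{V_t}=\widetilde{V_{t_1}}\sqcup\widetilde{V_{t_2}}$. The same connectivity argument applied to constraint-vertices yields $F_{t_1}\cap F_{t_2}=b_F(t)$ and shows that every $c\in F_t\setminus b_F(t)$ lies in exactly one $F_{t_i}\setminus b_F(t)$, with $\var(c)\subseteq V_{t_i}$ on that side.

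With this partition in hand, I would handle conditions~1 and~3 of Definition~\ref{N-def} quickly. Condition~1 just says $\tau_1,\tau_2$ both agree with $\alpha$ on $b_V(t)$, which is automatic from the overlap structure and matches condition~1 at $t_1$ and $t_2$. For condition~3, each $c\in F_t\setminus b_F(t)$ lies in a unique $F_{t_i}$ with $\var(c)\subseteq V_{t_i}$, so $\delta^c[\tau]=\delta^c[\tau_i]$; hence the acceptance requirement $\delta^c[\tau]\in T^c$ is equivalent to the corresponding requirement at node $t_i$.

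The main obstacle is condition~2 for $c\in b_F(t)$, where the states recorded in $\bar s,\bar s_1,\bar s_2$ live. Here I would lean on commutativity of the CSTS describing $c$. Because $\widetilde{V_t}=\widetilde{V_{t_1}}\sqcup\widetilde{V_{t_2}}$, the counts $q_0$ and $q_1$ of $\tau|_{\widetilde{V_t}}$ relative to $c$ split additively as $q_\epsilon(\tau|_{\widetilde{V_t}},c)=q_\epsilon(\tau_1|_{\widetilde{V_{t_1}}},c)+q_\epsilon(\tau_2|_{\widetilde{V_{t_2}}},c)$. Since the CSTS is commutative, $\delta^c(1^a0^b)$ depends only on the pair $(a,b)$, which justifies defining $s_1^c+s_2^c$ by picking any strings $l_i$ with $\delta^c(l_i)=s_i^c$ and setting $s_1^c+s_2^c:=\delta^c(l_1 l_2)$; the earlier lemma asserting $\delta(l_1 l_2)=\delta(l_1' l_2')$ when $\delta(l_i)=\delta(l_i')$ shows this is well-defined. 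Combining these observations yields $\delta^c[\tau|_{\widetilde{V_t}}]=\delta^c[\tau_1|_{\widetilde{V_{t_1}}}]+\delta^c[\tau_2|_{\widetilde{V_{t_2}}}]$, so the profile $\bar s$ produced by $\tau$ equals $\bar s_1+\bar s_2$.

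Finally I would verify surjectivity and disjointness. Given $\tau_1\in N(t_1,\alpha,\bar s_1)$ and $\tau_2\in N(t_2,\alpha,\bar s_2)$ with $\bar s_1+\bar s_2=\bar s$, both agree with $\alpha$ on the overlap $b_V(t)$, so their union is a well-defined $\tau:V_t\to\{0,1\}$; the decomposition of conditions~2 and~3 established above then puts $\tau$ in $N(t,\alpha,\bar s)$. Distinct choices of $(\bar s_1,\bar s_2)$ give sets $N(t_1,\alpha,\bar s_1)\times N(t_2,\alpha,\bar s_2)$ that are pairwise disjoint since $\bar s_i$ is determined by $\tau_i$. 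Taking cardinalities yields the stated summation.
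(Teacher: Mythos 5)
Your proposal is correct and follows essentially the same route as the paper's proof: both establish the bijection $\tau\mapsto(\tau|_{V_{t_1}},\tau|_{V_{t_2}})$ onto the disjoint union over pairs $(\bar s_1,\bar s_2)$ with $\bar s_1+\bar s_2=\bar s$, using the splitting $\widetilde{V_t}=\widetilde{V_{t_1}}\sqcup\widetilde{V_{t_2}}$ from the nice tree decomposition and commutativity of the CSTS to get $\delta^c[\tau|_{\widetilde{V_t}}]=\delta^c[\tau_1|_{\widetilde{V_{t_1}}}]+\delta^c[\tau_2|_{\widetilde{V_{t_2}}}]$. Your explicit justification that $+$ on states is well-defined via the earlier lemma on $\delta(l_1l_2)=\delta(l_1'l_2')$ is a minor (and welcome) addition, but not a different argument.
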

    
    \begin{proof}
    First, note that for each truth assignment $\alpha:b_V (t) \rightarrow \{0, 1\}$ and $i\in\{1,2\}$, if $\bar{s_1}\not=\bar{s_2}$, then $N(t_1,\alpha,\bar{s_1}) \cap N(t_2,\alpha,\bar{s_2}) = \varnothing$. This means that it suffice to show that the mapping
    \[f:N(t,\alpha,\bar{s})\rightarrow\bigcup\limits_{\forall c\in b_C(t).s_1^c\cdot s_2^c=s^c} N(t_1,\alpha,\bar{s_1})\ \times\ N(t_2,\alpha,\bar{s_2}),\]$\tau\mapsto(\tau|_{V_{t_1}},\tau|_{V_{t_2}})$
    is a bijection. The equality above follows then from this claim immediately. 
    
    To prove the claim, we first show that $f$ is well-defined. Let $\tau\in N(t,\alpha,\bar{s})$. Define $\bar{s_i}:=\{(c,s)\,|\,c\in b_C(t), s=\delta^c[\tau|_{V_{t_i}}/b_V(t)]\}$ for $i = 1,2$. To see that for all $c\in b_C(t)$ we have $s_1^c + s_2^c = s^c$, note that since $(T,b,r)$ is a nice tree decomposition, it follows that 
    $\widetilde{V_{t_1}}\cap\widetilde{V_{t_2}}=\varnothing$ and $\widetilde{V_{t_1}}\cup\widetilde{V_{t_2}}=\widetilde{V_{t}}$.
    This means that for any given $c\in b_C(t)$, 
    if $\delta^c[\tau|_{\widetilde{V_{t_1}}}]=\delta^c(1^{a_1}0^{b_1})$ and $\delta^c[\tau|_{\widetilde{V_{t_2}}}]=\delta^c(1^{a_2}0^{b_2})$, 
    then $\delta^c[\tau|_{\widetilde{V_{t}}}]=\delta^c(1^{a_1 + a_2}0^{b_1+b_2}) = \delta^c(1^{a_1}0^{b_1})+\delta^c(1^{a_2}0^{b_2})$. 
    
    Now, we show that $\tau|_{V_{t_1}}$ satisfies the three conditions in Definition \ref{N-def}. The argument for $\tau|_{V_{t_2}}$ is similar. The first two conditions are direct consequences of definition. Let $c\in C_t/b_C(t)$. Then $c$ is either forgotten in $T_{t_1}$ or in $T_{t_2}$. Without loss of generality, assume $c$ is forgotten at $t'\in T_{t_1}$. Since $(T,b,r)$ is a nice tree decomposition, it follows that $\text{var}(c)\subseteq V_{t'}\subseteq V_{t_1}$ and $V_{t_1}\cap\widetilde{V_{t_2}}= \varnothing$. Therefore, $\delta^{c'}[\tau|_{V_{t_1}}]=\delta^{c'}[\tau|_{V_{t'}}]$, $\text{var}(c)\cap\widetilde{V_{t_2}} = \varnothing$, and in turn, $\delta^{c'}[\tau|_{\widetilde{V_{t_2}}}]=\delta^{c'}(\epsilon)$. Now we conclude that $\delta^{c'}[\tau|_{V_{t_1}}]\in T^{c'}$ because 
    \begin{equation*}
    \delta^{c'}[\tau|_{V_{t_1}}]=\delta^{c'}[\tau|_{V_{t_1}}]+ s_0^{c'}=\delta^{c'}[\tau|_{V_{t_1}}]+\delta^{c'}(\epsilon)=\delta^{c'}[\tau|_{V_{t_1}}]+\delta^{c'}[\tau|_{\widetilde{V_{t_2}}}]=\delta^{c'}[\tau]\in T^{C'}.    
    \end{equation*}
    
    So far, we have shown that $f$ is well-defined. To see that $f$ is injective, suppose $\tau,\tau': V_{t}\rightarrow\{0,1\}$, $\tau|_{V_{t_1}} = \tau'|_{V_{t_1}}$ and $\tau|_{V_{t_2}} = \tau'|_{V_{t_2}}$. We distinguish two cases. If $v\in b_V(t_1)$, then $\tau(v)=\tau|_{V_{t_1}}(v) = \tau'|_{V_{t_1}}(v)=\tau'(v)$. Otherwise, $v\in b_V(t_2)$. 
    Then $\tau(v)=\tau|_{V_{t_2}}(v) = \tau'|_{V_{t_2}}(v)=\tau'(v)$. Finally, we show that $f$ is surjective. Let $\tau_1\in N(t_1,\alpha,\bar{s_1})$, $\tau_2\in N(t_2,\alpha,\bar{s_2})$ for some $\bar{s_1},\bar{s_2}\in\mathcal{S}(b_C(t))$ such that for all $c\in b_C(t)$ we have $s_1^c\cdot s^c_2=s^c$. 
    We show that $\tau:=\tau_1\cup\tau_2\in N(t,\alpha,\bar{s})$. The first condition of Definition \ref{N-def} is satisfied because for all $v\in b_V(T)$ we have $\tau(v) = \tau_1(v)=\tau_2(v)=\alpha(v)$. The second condition is satisfied because $\widetilde{V_{t_1}}\cap\widetilde{V_{t_2}}=\varnothing$ and therefore for all $c\in b_C(t)$, 
    \begin{equation*}
    \delta^c[\tau|_{\widetilde{V_{t}}}] = \delta^c[\tau|_{\widetilde{V_{t_1}}}]+\delta^c[\tau|_{\widetilde{V_{t_2}}}]= \delta^c[\tau_1|_{\widetilde{V_{t_1}}}]+\delta^c[\tau_2|_{\widetilde{V_{t_2}}}] = s_1^c+s_2^c = s^c    
    \end{equation*}
    Thus, $f$ is surjective.
\end{proof}

\begin{lemma}
    \label{introduceLemma}
    Let $t$ be an introduce node with child $t'$. For each truth assignment $\alpha: b_V (t') \rightarrow \{0, 1\}$, and $\bar{s}\in\mathcal{S}(b_F(t'))$, we have the following equalities depending on whether it is a variable or a constraint that is introduced at $t$.
    \begin{enumerate}
        \item[\normalfont{1.}] A variable $v$ is introduced. We have
        \[n(t,\alpha\cup\{(v,0)\},\bar{s})=n(t,\alpha\cup\{(v,1)\},\bar{s})=n(t',\alpha,\bar{s}).\]
        
        \item[\normalfont{2.}] A constraint $c$ is introduced. We have
         \[n(t,\alpha,\bar{s}\cup\{(c,s)\})=\begin{cases} n(t',\alpha,\bar{s}),&\text{if }s=s_0^{c};\\
        0, & otherwise.
        \end{cases}\]
    \end{enumerate}
    \end{lemma}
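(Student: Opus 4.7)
The plan is to handle the two cases separately, each by exhibiting an explicit bijection analogous to the one in the proof of Lemma~\ref{join-lemma}. Everything rests on two structural facts about introduce nodes in a nice tree decomposition that I would record as preliminary observations: (i)~if $t$ introduces a variable $v$, then by the connectivity condition $v$ lies in no bag of $T_{t'}$, so $v\notin V_{t'}$ and $\widetilde{V_t}=\widetilde{V_{t'}}$; (ii)~if $t$ introduces a constraint $c$, then $c$ belongs to no bag of $T_{t'}$, and every $x\in\var(c)\cap V_t$ must already lie in $b_V(t)$, since otherwise the edge $cx$ of $G_F$ would have no witness bag.

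For case~1, I fix $b\in\{0,1\}$ and check that the restriction $\tau\mapsto\tau|_{V_{t'}}$ is a bijection from $N(t,\alpha\cup\{(v,b)\},\bar{s})$ to $N(t',\alpha,\bar{s})$, with inverse $\tau'\mapsto\tau'\cup\{(v,b)\}$. Condition~1 of Definition~\ref{N-def} transfers because the restriction agrees with $\alpha$ on $b_V(t')$. Condition~2 transfers verbatim since $b_F(t)=b_F(t')$ and $\widetilde{V_t}=\widetilde{V_{t'}}$ by~(i). For Condition~3, for any $c\in F_t\setminus b_F(t)=F_{t'}\setminus b_F(t')$ the connectivity condition for $c$ forces $c$ to lie only in bags of $T_{t'}$ that are strict descendants of $t'$, so $c$ never shares a bag with $v$; hence $v\notin\var(c)$, which gives $\delta^c[\tau]=\delta^c[\tau|_{V_{t'}}]$ and makes the conditions at $t$ and $t'$ coincide.

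For case~2, the sets of variables and of forgotten constraints do not change: $V_t=V_{t'}$ and $F_t\setminus b_F(t)=F_{t'}\setminus b_F(t')$, so the only new requirement is the Condition~2 instance for the freshly introduced $c$. By~(ii), $\var(c)\cap\widetilde{V_t}=\emptyset$, so $q_0(\tau|_{\widetilde{V_t}},c)=q_1(\tau|_{\widetilde{V_t}},c)=0$ and $\delta^c[\tau|_{\widetilde{V_t}}]=\delta^c(\epsilon)=s_0^c$ for every assignment $\tau$. Hence if $s\neq s_0^c$ no $\tau$ can belong to $N(t,\alpha,\bar{s}\cup\{(c,s)\})$ and the count is $0$; if $s=s_0^c$ the new condition is automatic, and the identity on assignments gives a bijection onto $N(t',\alpha,\bar{s})$. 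The only real obstacle is discharging the two structural observations carefully; once they are in hand, both equalities reduce to routine verification of Definition~\ref{N-def}.
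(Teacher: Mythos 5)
Your proposal is correct and follows essentially the same route as the paper's own proof: both arguments isolate the structural facts that an introduced variable $v$ satisfies $v\notin V_{t'}$ (hence $\widetilde{V_t}=\widetilde{V_{t'}}$ and $v\notin\var(c)$ for every already-forgotten $c$) and that an introduced constraint $c$ satisfies $\var(c)\cap\widetilde{V_t}=\emptyset$, and then verify the three conditions of Definition~\ref{N-def} through the obvious restriction/extension bijection. Your write-up merely makes explicit the connectivity arguments that the paper leaves implicit.
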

    \begin{proof}
        \begin{enumerate}
            \item It suffices to show that for arbitary $\tau:V_{t'}\rightarrow\{0,1\}$, $\tau':=\tau\cup\{(v,0)\}\in N(t,\alpha\cup\{(v,0)\},\bar{s})$ if and only if $\tau\in N(t,\alpha\cup\{(v,0)\},\bar{s})$. This equivalence follows from the following observations. 
            \begin{enumerate}
            \item $\tau(x)=\alpha(x)$ for all $x\in b_V(t')$ if and only if $\tau'(x)=\alpha(x)\cup\{(v,0)\}$ for all $x\in b_V(t)$.
            \item For each $c\in b_C(t)$, $\tau'|_{\widetilde{V_t}}=\tau|_{\widetilde{V_{t'}}}$ and therefore, $\delta^c[\tau'|_{\widetilde{V_t}}] = \delta^c[\tau|_{\widetilde{V_{t'}}}]$.
            \item Since $(T, b,r)$ is a nice tree decomposition, it follows that for each $c\in C_t/ b_C(t)$, we have $v\not\in\text{var}(c)$. Therefore, $c\in C_t/ b_C(t)$ we have $\delta^c[\tau']=\delta^c[\tau]$.
            \end{enumerate}
            
            \item Since $(T, b,r)$ is a nice tree decomposition, it follows that $\text{var}(c')\cap\widetilde{V_t} = \varnothing$ and therefore $\delta^{c'}[\tau] = s_0^{c'}$ for any $\tau:V_t\rightarrow\{0,1\}$. Now it is easy to check that for any $\tau:V_t\rightarrow\{0,1\}$, $\tau\in N(t,\alpha,\bar{s}\cup\{(c',s_0^{c'})\})$ if and only if $\tau\in N(t',\alpha,\bar{s}\})$.
        \end{enumerate}
    \end{proof}

\begin{lemma}
    \label{forgetLemma}
    Let $t$ be a forget node with child $t'$. For each truth assignment $\alpha:b_V (t) \rightarrow \{0, 1\}$, and $\bar{s}\in\mathcal{S}(b_F(t))$, we have the following two equalities depending on whether it is a variable or a constraint that is forgotten at $t$.
    \begin{enumerate}
        \item[\normalfont{1.}] A variable $v$ is forgotten. Let $\alpha_0$ and $\alpha_1$ denote the extension to $\alpha$ that sets $v$ to $0$ and $1$, respectively. We have
    \end{enumerate}     
    \[ n(t,\alpha,\bar{s})=\sum_{\bar{s}'\in\sigma_0(v,\bar{s})}n(t',\alpha_0,\bar{s}') + \sum_{\bar{s}'\in \sigma_1(v,\bar{s})}n(t',\alpha_1,\bar{s}')\]
    \begin{enumerate} \item[] where $\sigma_1(v,\bar s')$ is the set  $\Biggl\{\bar s'\in\mathcal{S}(b_F(t))\, \Bigg| \,s^c=
        \begin{cases}
        f^c_{0}({s'}^c)&v\in c^-\\
        f^c_1({s'}^c)&v\in c^+\\
        {s'}^c& v \not\in \var(c)
        \end{cases}\Biggl\}$
        and $\sigma_0$ is defined similarly by exchanging $f^c_0$ and $f^c_1$.
        \item[\normalfont{2.}] A constraint $c$ is forgotten. We have
        \[n(t,\alpha,\bar{s})=\sum_{\delta^{c'}[\alpha]+s\,\in T^{c'}}n(t',\alpha,\bar{s}\cup\{(c',s)\}).\]
    \end{enumerate}    
    \end{lemma}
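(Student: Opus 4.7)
The plan is to prove both formulas by partitioning $N(t,\alpha,\bar s)$ according to the object forgotten at $t$ and setting up a bijection with the disjoint union of the sets $N(t',\dots)$ appearing on the right-hand side. Throughout, I would use that $V_t = V_{t'}$ (a forget node changes no variables in the underlying universe of assignments), so any $\tau \colon V_t \to \{0,1\}$ is simultaneously an assignment for $V_{t'}$; and the core algebraic tool is commutativity of the CSTSs, which yields $\delta^c[\tau_1 \cup \tau_2] = \delta^c[\tau_1] + \delta^c[\tau_2]$ for any partition of a partial assignment $\tau$ along disjoint variable sets.

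For part~1 I would first note that $b_F(t) = b_F(t')$ and $\widetilde{V_t} = \widetilde{V_{t'}} \cup \{v\}$, and classify $\tau \in N(t,\alpha,\bar s)$ by the value $\tau(v) \in \{0,1\}$. For $\tau(v)=1$, such a $\tau$ lies in $N(t',\alpha_1,\bar s')$ for a unique $\bar s' \in \mathcal{S}(b_F(t))$, determined for each $c \in b_F(t)$ by the equation $s^c = \delta^c[\tau|_{\widetilde{V_t}}] = {s'}^c + \delta^c[\{(v,1)\}]$. Unwinding the definitions of $q_0, q_1, \delta^c[\cdot]$, setting $v=1$ appends a $1$ to the word read by the CSTS when $v \in c^+$ and a $0$ when $v \in c^-$ (because of $\phi^c$). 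By commutativity the identity reduces to $s^c = f_1^c({s'}^c)$ in the former case, $s^c = f_0^c({s'}^c)$ in the latter, and $s^c = {s'}^c$ when $v \notin \var(c)$ --- exactly the defining conditions of $\sigma_1(v, \bar s)$. The acceptance conditions on the constraints in $F_t \setminus b_F(t) = F_{t'} \setminus b_F(t')$ transfer verbatim since they are conditions on the same function $\tau$. The symmetric analysis for $\tau(v)=0$ yields $\sigma_0$, and adding the two disjoint contributions gives the formula.

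For part~2, I would use that $\widetilde{V_t} = \widetilde{V_{t'}}$, while $c'$ leaves $b_F$ and its state-pinning condition in Definition~\ref{N-def} is replaced by the acceptance condition $\delta^{c'}[\tau] \in T^{c'}$. For $\tau \in N(t, \alpha, \bar s)$, set $s := \delta^{c'}[\tau|_{\widetilde{V_{t'}}}]$; then $\tau \in N(t', \alpha, \bar s \cup \{(c', s)\})$ by construction, and commutativity rewrites the acceptance condition as $\delta^{c'}[\alpha] + s \in T^{c'}$. Conversely, any $\tau$ in some $N(t', \alpha, \bar s \cup \{(c', s)\})$ whose $s$ satisfies this admissibility condition lies in $N(t, \alpha, \bar s)$. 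Partitioning $N(t, \alpha, \bar s)$ by the value of $s$ therefore induces a bijection with the disjoint union, yielding the stated sum.

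I expect the entire argument to be routine bookkeeping. The main pitfall is the case analysis defining $\sigma_0$ and $\sigma_1$: one has to verify carefully that the literal flipping $\phi^c$ swaps the roles of $f_0^c$ and $f_1^c$ precisely as prescribed, and keep straight the direction of the transition identity --- $\bar s$ is the state tuple \emph{after} forgetting $v$ and $\bar s'$ the one \emph{before}. A sign- or direction-error here would go unnoticed at the higher structural level of the argument.
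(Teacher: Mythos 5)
Your proof is correct and takes essentially the same route as the paper's: partition $N(t,\alpha,\bar{s})$ by the value of the forgotten variable (resp.\ by the state reached by the forgotten constraint on $\widetilde{V_{t'}}$) and use commutativity to split $\delta^c$ as $\delta^c[\tau|_{\widetilde{V_t}}]=\delta^c[\tau|_{\widetilde{V_{t'}}}]+\delta^c[(v,\tau(v))]$ in part~1 and as $\delta^{c'}[\tau]=\delta^{c'}[\alpha]+\delta^{c'}[\tau|_{\widetilde{V_{t'}}}]$ in part~2. The paper's proof is terser but rests on exactly these two observations, and your unwinding of $\phi^c$, $q_0$, $q_1$ to match $\sigma_0$ and $\sigma_1$ is the same bookkeeping the paper leaves implicit.
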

    
    \begin{proof}
    \begin{enumerate}
        \item It suffices to show that for any $\tau:V_{t'}\rightarrow\{0,1\}$ such that $\tau(v)=0$, we have $\tau\in N(t,\alpha,\bar{s})$ if and only if $\tau\in N(t',\alpha\cup\{(v,0)\},\bar{s'})$ for some $\bar{s'}\in F^{-1}_0(v,\bar{s})$. This equivalence follows from the following observation. For each $c\in b_C(t)$, $\delta^c[\tau|_{\widetilde{V_t}}]=\delta^c[\tau|_{\widetilde{V_{t'}}}]+\delta^c[(v,\tau(v))]$.
        \item It suffices to show that for any $\tau:V_t\rightarrow\{0,1\}$, $\tau\in N(t,\alpha,\bar{s})$ if and only if $\tau\in N(t',\alpha,\bar{s}\cup\{c',s\})$ for some $s$ such that $\delta^{c'}[\alpha]+s\in T^{c'}$. This equivalence follows from the following observation.
        \begin{enumerate}
        \item For each $c\in b_C(t)$, since $\widetilde{V_t}=\widetilde{V_{t'}}$, it follows that $\delta^c[\tau|_{\widetilde{V_t}}] = \delta^c[\tau|_{\widetilde{V_{t'}}}]$.
        \item $\delta^{c'}[\tau]=\delta^{c'}[\tau|_{ b_V(t)}]+\delta^{c'}[\tau|_{\widetilde{V_t}}] = \delta^{c'}[\alpha]+\delta^{c'}[\tau|_{\widetilde{V_{t'}}}] = \delta^{c'}[\alpha]+s$ 
        \end{enumerate}
    \end{enumerate}
    \end{proof}

    \begin{lemma}\label{leaf-lemma}
    Let $t$ be a leaf node. For each truth assignment $\alpha: b_V(t)\rightarrow\{0, 1\}$ and $\bar{s}\in\mathcal{S}(b_F(t))$, we have
    \[n(t,\alpha,\bar{s}) =\begin{cases}
    1, &\text{if }s^c=s_0^c\text{ for all }c\in b_F(t);\\
    0, &otherwise.\end{cases}\]    
    \end{lemma}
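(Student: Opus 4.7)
The plan is to unpack the three conditions of Definition~\ref{N-def} in the special case of a leaf node, where the subtree $T_t$ reduces to $\{t\}$ itself, and observe that they collapse almost entirely.

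First I would note that since $t$ is a leaf of $T$, we have $V_t = b_V(t)$ and $F_t = b_F(t)$. Consequently $\widetilde{V_t} = V_t \setminus b_V(t) = \emptyset$ and $F_t \setminus b_F(t) = \emptyset$. Thus any $\tau \in N(t,\alpha,\bar s)$ is a function from $V_t = b_V(t)$ to $\{0,1\}$, and condition~1 of Definition~\ref{N-def} forces $\tau = \alpha$. So there is at most one candidate assignment, and it remains to determine when this candidate lies in $N(t,\alpha,\bar s)$.

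Next I would check the remaining two conditions on $\tau = \alpha$. Condition~3 is vacuous because $F_t \setminus b_F(t) = \emptyset$. For condition~2, note that $\tau|_{\widetilde{V_t}}$ is the empty partial assignment, so $q_0(\tau|_{\widetilde{V_t}},c) = q_1(\tau|_{\widetilde{V_t}},c) = 0$ for every $c \in b_F(t)$, and therefore
\[
\delta^c[\tau|_{\widetilde{V_t}}] \;=\; \delta^c(1^0 0^0) \;=\; \delta^c(\epsilon) \;=\; s_0^c.
\]
Hence condition~2 holds exactly when $s^c = s_0^c$ for every $c \in b_F(t)$.

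Combining the two observations, $N(t,\alpha,\bar s)$ equals $\{\alpha\}$ when $s^c = s_0^c$ for all $c \in b_F(t)$, and $\emptyset$ otherwise, which gives $n(t,\alpha,\bar s) \in \{0,1\}$ exactly as stated. There is no real obstacle here; the only thing to be careful about is the convention that $\delta^c$ applied to the empty string is the start state $s_0^c$, which comes directly from the definition of the extended transition function in Section~\ref{section:STS}.
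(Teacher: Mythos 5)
Your proof is correct and follows essentially the same route as the paper's: both observe that a leaf node forces $\tau=\alpha$, that $\widetilde{V_t}=\emptyset$ makes $\delta^c[\tau|_{\widetilde{V_t}}]=\delta^c(\epsilon)=s_0^c$, and that condition~3 is vacuous. Your write-up is in fact slightly more explicit than the paper's about why condition~2 reduces to $s^c=s_0^c$.
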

    \begin{proof}
    Since $t$ is a leaf node, it follows that the only $\tau:V_t\rightarrow\{0,1\}$ such that $\tau(v)=\alpha(v)$ for all $v\in b_V(t)$ is $\alpha$. Since $\widetilde{V_t}=\varnothing$, it follows that for each $c\in b_C(t)$ we have $\delta^c[\alpha|\widetilde{V_t}]=\delta^c(\epsilon)=s_0^c$. Because $C_t/ b_C(t)=\varnothing$, the third condition is vacuously true for $\alpha$.
\end{proof}

Now we prove Theorem \ref{dynamic-brute-force-time}.

\begin{proof}
    It is easy to see that $M_t$ for a leaf node $t$ takes as many elementary operations linear to the number of rows in $M_t$, which is at most $w^k$. Calculating $M_t$ for an introduce node or a forget node $t$ also takes at most $O(w^k)$ many elementary operations, because either each $n(t,\alpha,\bar{s})$ appears in the sum of at most one column, or it appears in the sum of at most two columns, but there are at most $w^{k-1}$ rows in $M_{t'}$ in total. Calculating $M_t$ for a join node $t$ takes at most $O(w^{2k})$ elementary operations. An obvious way to do this is, for all $\alpha$, initialize all $n(t,\alpha,\bar{s})$ as $0$, enumerate the pairs $(\bar{s_1},\bar{s_2})$, and add $n(t_1,\alpha,\bar{s_1})\cdot n(t_2,\alpha,\bar{s_2})$ to the corresponding row in $M_t$.
    
    We calculate $M_t$ for each $t\in T$ following the tree structure, and in the end we read off the final result from table $M_r$ (which is a table consisting of $|F^{-1}(1)|$ as the only element by definition). It is known that one can transform efficiently any tree decomposition of width k of a graph with n vertices into a nice tree decomposition of width at most k and at most 4n nodes \cite[Lemma 13.1.3]{kloks1994treewidth}. Thus, the entire algorithm uses $O(w^{2k}\cdot|G_F|)$ elementary operations
    \end{proof}

\subsection{One-Sided and Modulo Constraints}
    A close inspection of the proof of Theorem
    \ref{dynamic-brute-force-time} reveals that the bottleneck lies in the
    computation of the table $M_t$ for a join node $t$. In fact, if we
    restrict ourselves to more particular cases of systems of constraints,
    namely systems of one-sided constraints and systems of disjunctive
    clauses and modulo constraints, we can speed-up the algorithm by  computing the tables for join
    nodes faster, using advanced techniques like the Convolution Theorem and fast Fourier transform
    (see, e.g., \cite{BjorklundHKK07}, \cite{Slivovsky20}). The two main theorems of this subsection are stated here, whose proofs we will give later on in this chapter, after sufficient technical details and supporting lemmas have been provided.

    \begin{theorem}
    \label{theorem:one-sided}
    Let $F$ be a system of one-sided constraints of maximal state size $w$. Given $F$ and a width-$k$ tree decomposition of $G_F$, one can compute the number of models of $F$ in $O((2w)^kk\,\log(w)\cdot|G_F|)$ elementary operations. 
    \end{theorem}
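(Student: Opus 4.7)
The plan is to reuse the dynamic-programming framework over a nice tree decomposition of $G_F$ established for Theorem~\ref{dynamic-brute-force-time}, and accelerate only the join-node computation, which is where the $O(w^{2k})$ bottleneck lies. The leaf, introduce, and forget updates (Lemmas~\ref{introduceLemma}--\ref{leaf-lemma}) already run in $O(w^k)$ per node, well within budget, so it suffices to bring each join node down to $O((2w)^k\,k\log w)$ operations. Since any tree decomposition of width $k$ can be turned into a nice one with $O(|G_F|)$ nodes without increasing the width, the claimed overall bound then follows by summing over all nodes.

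First I would exploit what ``one-sided'' gives structurally: in a one-sided CSTS for a constraint $c$, after the literal-flipping $\phi^c$ one of the two transition functions (say $f^c_0$) acts as the identity, as is visible on the examples in Figure~\ref{STS-example}. The state reached from $s^c_0$ is then completely determined by the number of $1$'s read so far, and the set $S^c$ (of size at most $w$) can be identified with $\{0,1,\dots,w-1\}$. In particular, the combining operation ``$+$'' appearing in Lemma~\ref{join-lemma} becomes a one-dimensional addition that is either clipped at $w-1$ (absorbing states, as for OR or cardinality) or modular (as for XOR and modulo constraints). By zero-padding each coordinate to $\{0,1,\dots,2w-1\}$ one can work with ordinary integer addition and recover the clipping or modulo at the end in a single pass.

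Now fix a join node $t$ with children $t_1,t_2$ and an assignment $\alpha:b_V(t)\to\{0,1\}$, and let $d:=|b_F(t)|$. The recurrence
\[
n(t,\alpha,\bar s) \;=\; \sum_{\bar s_1 + \bar s_2 = \bar s} n(t_1,\alpha,\bar s_1)\cdot n(t_2,\alpha,\bar s_2)
\]
of Lemma~\ref{join-lemma} is, coordinate-wise in $\bar s$, a $d$-dimensional convolution of two arrays of side at most $2w$. Such a convolution can be computed by a $d$-dimensional Fast Fourier Transform: one applies a $1$D FFT along each of the $d$ axes, each costing $O(w^{d-1}\cdot w\log w)$, for a total of $O(w^d\, d\log w)$. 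Summing over the at most $2^{|b_V(t)|}$ assignments $\alpha$ and using $|b_V(t)|+|b_F(t)|\leq k$ (since the tree decomposition has width $k$), the per-join-node cost is $O(2^{|b_V(t)|}\cdot w^{|b_F(t)|}\cdot|b_F(t)|\log w)=O((2w)^k\,k\log w)$.

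The main obstacle I anticipate is the non-invertibility of ``$+$'' in the clipped (absorbing) case: cyclic FFT does not directly compute a saturating convolution. The padding trick above bypasses this, because linear (non-cyclic) convolution of zero-padded arrays of side $2w$ carries exactly the information needed, and a final $O(w^d)$ clamping pass folds overflow into the absorbing state within the FFT budget. Working over a large enough modular ring (or over $\mathbb{C}$ with sufficient precision) sidesteps numerical issues, a standard manoeuvre in the algorithmic-FFT literature cited in the paper. Putting these ingredients together yields the announced $O((2w)^k\,k\log w\cdot|G_F|)$ bound.
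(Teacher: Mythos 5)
Your proposal is correct and follows essentially the same route as the paper: both keep the dynamic program of Theorem~\ref{dynamic-brute-force-time} intact and accelerate only the join nodes by re-indexing the states of each one-sided constraint by the number of $1$'s read, computing the resulting $d$-dimensional \emph{linear} convolution via zero-padded multidimensional FFT (Theorem~\ref{polynomial-multiplication}), and then projecting counts back onto states, for $\tilde{O}((2w)^k)$ operations per join node. Your phrasing of the final step as ``clipped or modular'' is slightly narrower than the general eventually-periodic behaviour of a one-sided CSTS, but your concluding pass that folds each count $m<2w$ onto the state $c(m)$ is exactly the paper's aggregation $\sum_{c_i(m_i)=s^{c_i}}(A*B)[m_1,\dots,m_d]$, so this is only a cosmetic difference.
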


    \begin{theorem}
    \label{theorem:modulo}
    Let $w$ be a natural number. Let $F$ be a system of constraints comprising only clauses and $m$-modulo constraints for possibly different $m\leq w$.
    Given $F$ and a width-$k$ tree decomposition of $G_F$, one can compute the model count of $F$ in $O(w^kk\log(w)\cdot|G_F|)$ elementary operations.    
    \end{theorem}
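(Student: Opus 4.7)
The plan is to speed up the dynamic-programming algorithm of Theorem~\ref{dynamic-brute-force-time} by using fast convolution at join nodes, which are the bottleneck. Recall from Lemma~\ref{join-lemma} that at a join node $t$ with children $t_1,t_2$, for every assignment $\alpha:b_V(t)\to\{0,1\}$ we need
\[
n(t,\alpha,\bar s)=\sum_{\bar s_1+\bar s_2=\bar s} n(t_1,\alpha,\bar s_1)\,n(t_2,\alpha,\bar s_2),\qquad \bar s\in\mathcal{S}(b_F(t)).
\]
This is a convolution over the finite abelian monoid $G=\prod_{c\in b_F(t)} G_c$, where I choose, for each $c\in b_F(t)$, the state monoid of a minimal CSTS for $c$: $G_c=(\mathbb{Z}_m,+)$ for an $m$-modulo constraint, and $G_c=(\{0,1\},\vee)$ for a clause (the two states being ``not yet satisfied'' and ``already satisfied''). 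Thus $|G_c|\le w$ and $|G|\le w^{|b_F(t)|}\le w^{k+1}$.

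The key observation is that convolution over a direct product of finite abelian monoids factors as the tensor product of one-dimensional convolutions on the factors: if, for each $c$, $T_c$ is a linear transform that diagonalises convolution on $G_c$, then $\bigotimes_c T_c$ diagonalises convolution on $G$. I would instantiate such a $T_c$ in each case: for $\mathbb{Z}_m$ the discrete Fourier transform (applied in the unit-cost model over $\mathbb{C}$ or over a large enough field containing an $m$-th root of unity), computable in $O(m\log m)$ on $m$ values; for $(\{0,1\},\vee)$ the zeta transform $\hat f(0)=f(0)$, $\hat f(1)=f(0)+f(1)$, inverted by M\"obius inversion $f(0)=\hat f(0)$, $f(1)=\hat f(1)-\hat f(0)$ over~$\mathbb{Z}$.

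I would then apply these transforms one axis at a time to the tensor $n(t_i,\alpha,\cdot)$, as in a multi-dimensional FFT. Transforming along a single axis of length $\ell\le w$ costs $O(\ell\log\ell)$ per length-$\ell$ slice, with $|G|/\ell$ slices and hence $O(|G|\log w)$ per axis; over all at most $k+1$ axes, a full forward or inverse transform costs $O(|G|\,k\log w)$. A join-node update is then: transform $n(t_1,\alpha,\cdot)$ and $n(t_2,\alpha,\cdot)$, multiply pointwise in $O(|G|)$ operations, and inverse-transform. Summed over all $2^{|b_V(t)|}$ assignments $\alpha$, the cost is $O\!\bigl(2^{|b_V(t)|}\cdot w^{|b_F(t)|}\cdot k\log w\bigr)$, and since $w\ge 2$ and $|b_V(t)|+|b_F(t)|\le k+1$ we have $2^{|b_V(t)|}\cdot w^{|b_F(t)|}\le w^{k+1}$, giving $O(w^k k\log w)$ per join node. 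The introduce, forget and leaf nodes are already handled in $O(w^k)$ each by Lemmas~\ref{introduceLemma},~\ref{forgetLemma},~\ref{leaf-lemma}, and a nice tree decomposition has $O(|G_F|)$ nodes, yielding the claimed $O(w^k k\log w\cdot|G_F|)$ bound.

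The main obstacle is to justify rigorously that the tensor decomposition of $G$ really does diagonalise the mixed convolution — in particular, that the $\mathbb{Z}_m$-DFTs and the Boolean zeta transforms commute across coordinates, and that M\"obius inversion over $\mathbb{Z}$ correctly recovers integer-valued OR-convolutions. A secondary care is the interaction between the chosen CSTS representation of each clause/modulo constraint and the transition formulas used at introduce and forget nodes, which must be re-examined to confirm that no step uses more than $O(w^k)$ operations; this is however immediate since those node updates touch the table entry-wise and do not depend on the convolution reformulation.
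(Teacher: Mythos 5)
Your proposal is correct and follows essentially the same route as the paper's proof: both accelerate the join-node convolution by diagonalising it coordinate-wise, using the discrete Fourier transform on the cyclic state groups of the modulo constraints and the zeta/M\"obius transform on the two-state OR-semigroup of the clauses, then multiplying pointwise and inverting. The only difference is presentational --- the paper organises the computation as a Fourier step on the modulo block composed with a cover product on the clause block (citing the corresponding convolution theorems), whereas you apply a single mixed multi-dimensional transform one axis at a time --- and your worry about the mixed transforms commuting is unfounded, since they act on disjoint tensor axes.
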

    
    We call a constraint \emph{one-sided} when there exists an STS that describes it where either $f_0(s)=s$ for every state $s$, or $f_1(s)=s$ for every state $s$. In the first case, we talk of \emph{$1$-only} constraints and STS, and in the second case, we talk of \emph{$0$-only} constraints and STS.  Throughout this section, we assume that for each one-sided constraint $c\in F$, a minimal one-sided STS that describes it is known before the algorithm starts.
    Given a one-sided constraint $c$, if $c$ is a $1$-only (resp. $0$-only) constraint, then let $c(i)$ denote the state the STS is in after receiving $i$ many $1$'s (resp. $i$ many $0$'s).
    \begin{theorem}
    Let $A$ and $B$ be two $d$-dimensional tensors with elements from a ring $R$ of dimensions $N^M_1\times N^M_2\times...\times N^M_d$, $M=A,B$. Let $N_i:=N^A_i+n^B_i$ for all $1\leq i\leq d$ and  $N:=\prod\limits_{1\leq i\leq d}N_i$. Define $\mathbf{n}^A:=(n_1^A,n_2^A,...,n_d^A)$, $\mathbf{n}^B:=(n_1^B,n_2^B,...,n_d^B)$ and $\mathbf{n}:=(n_1,n_2,...,n_d)$ as $d$-dimensional verctors of indices from $0$ to $\mathbf{N}-1$, which is in turn defined as $\mathbf{N}-1=(N_1-1,N_2-1,...,N_d-1)$. Define an $N_1\times N_2\times...\times N_d$ tensor $A*B$ by 
    \[(A*B)_{\mathbf{n}}:=\sum\limits_{\mathbf{n}^A+\mathbf{n}^B=\mathbf{n}}A_{\mathbf{n}^A}\cdot B_{\mathbf{n}^B}\]
    Then $A*B$ can be calculated using $\tilde{O}(N)$\footnote{We use $f = \tilde{O}(g)$ as shorthand for $f = O(g\,log\,g)$.} elementary operations in $R$ by computing $\calF^{-1}_d(\calF_d(\bar{A})\circ\calF_d(\bar{B}))$. Here $\bar{A}$ and $\bar{B}$ are $N_1\times N_2\times...\times N_d$ tensors resulting from padding $A$ and $B$ with zeros. $\calF_d$ and $\calF^{-1}_d$ are the $d$-dimensional Fourier transform and its inverse and $\circ$ is the Hadamard product.
    \label{polynomial-multiplication}
    \end{theorem}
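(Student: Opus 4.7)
The strategy is the standard FFT-based evaluation--interpolation scheme, lifted to $d$ dimensions. First I would identify a tensor $C$ of shape at most $N_1\times\cdots\times N_d$ with the coefficient array of a $d$-variate polynomial $p_C(x_1,\dots,x_d) = \sum_{\mathbf{n}^C} C_{\mathbf{n}^C}\, x_1^{n_1^C}\cdots x_d^{n_d^C}$ over $R$. A direct unfolding of the product shows that the coefficient of $x_1^{n_1}\cdots x_d^{n_d}$ in $p_A\cdot p_B$ is exactly $(A*B)_{\mathbf{n}}$, so it suffices to compute the coefficient tensor of $p_A \cdot p_B$. Zero-padding $A$ and $B$ to the common shape $N_1\times\cdots\times N_d$ does not alter the underlying polynomials, hence $p_{\bar A}=p_A$ and $p_{\bar B}=p_B$.

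Next I would use that, with respect to a family of primitive $N_i$-th roots of unity in $R$, the multi-dimensional DFT $\calF_d$ sends the coefficient tensor of a polynomial $q$ of shape $N_1\times\cdots\times N_d$ to its value-tensor at all grid points $(\omega_{N_1}^{k_1},\dots,\omega_{N_d}^{k_d})$ with $\mathbf{k}\in\prod_i [N_i]$. Since evaluation is a ring homomorphism and every monomial of $p_A p_B$ has multi-degree strictly below $\mathbf{N}$ (by the choice $N_i=N_i^A+N_i^B$), one gets $\calF_d(p_A p_B) = \calF_d(\bar A) \circ \calF_d(\bar B)$. Applying $\calF_d^{-1}$ then recovers the coefficient tensor of $p_A p_B$, i.e.\ the tensor $A*B$.

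For the complexity, I would invoke the separability of $\calF_d$: it is the composition, over the $d$ axes, of the operation ``apply a 1D DFT of length $N_i$ to every fibre along axis~$i$''. There are $N/N_i$ such fibres along axis~$i$, and each 1D DFT uses $O(N_i \log N_i)$ ring operations by Cooley--Tukey, so processing axis~$i$ costs $O(N \log N_i)$. Summing over $i$ gives $O(N \log N)$; the Hadamard product adds $O(N)$, and $\calF_d^{-1}$ has the same cost as $\calF_d$. Altogether we obtain $\tilde{O}(N)$ ring operations.

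The main conceptual obstacle is the padding step: if one naively applied DFTs of sizes $N_i^A$ and $N_i^B$ axis-wise and then took the pointwise product, one would obtain a cyclic convolution wrapping indices modulo those sizes. Padding to $N_i = N_i^A + N_i^B$, which strictly exceeds the maximal $i$-th partial degree of $p_A p_B$, is exactly what guarantees that cyclic and linear convolution coincide, and hence that the identity $A*B = \calF_d^{-1}(\calF_d(\bar A) \circ \calF_d(\bar B))$ holds. A minor background assumption is that $R$ (or a ring in which $R$ embeds) contains the required primitive roots of unity so that the Cooley--Tukey recursion applies; this is the standard setting in which FFT over a ring is considered.
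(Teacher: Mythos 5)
Your proposal is correct, and the overall pipeline is the same as the paper's: zero-pad, apply the separable $d$-dimensional transform axis by axis (each axis costing $O(N\log N_i)$ via 1D FFTs on the $N/N_i$ fibres), take the Hadamard product, and invert, for $\tilde{O}(N)$ ring operations in total. Where you diverge is in how the central identity $\calF_d(A*B)=\calF_d(\bar A)\circ\calF_d(\bar B)$ is justified: the paper verifies it by a direct index-shift computation with the explicit complex-exponential kernel, writing $\sum_{\mathbf{m}}A_{\mathbf{m}}B_{\mathbf{n}-\mathbf{m}}$ and factoring the exponential, whereas you argue via the evaluation--interpolation view, identifying the tensors with coefficient arrays of $d$-variate polynomials and using that evaluation at a grid of roots of unity is a ring homomorphism. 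The two arguments are equivalent, but yours has the merit of making explicit a point the paper's computation glosses over: the reindexing $B_{\mathbf{n}-\mathbf{m}}$ is really a cyclic convolution modulo $\mathbf{N}$, and it agrees with the linear convolution in the statement precisely because the choice $N_i=N_i^A+N_i^B$ exceeds every partial degree of the product, so no wrap-around occurs. You also correctly flag the background assumption that $R$ admits (or embeds into a ring admitting) the needed primitive roots of unity, which the paper leaves implicit by writing $e^{-i2\pi k_jn_j/N_j}$; in the paper's application the entries are nonnegative integers, so one works over $\mathbb{C}$ and reads off integer results, and your remark covers exactly this. No gaps.
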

    
\begin{proof}
    Define $\mathbf{k}:=(k_1,k_2,...,k_d)$ as $d$-dimensional a verctor of indices from $0$ to $\mathbf{N}-1$. Let $x$ be a $d$-dimensional tensor. Then $\calF_d(x)$ is defined as
    \[
        \calF_d(x)_\mathbf{k}:=\sum^{N_1-1}_{n_1=0}e^{-\frac{i2\pi k_1 n_1}{N_1}}\sum^{N_2-1}_{n_2=0}e^{-\frac{i2\pi k_2 n_2}{N_2}}\\\cdot\cdot\cdot\sum^{N_d-1}_{n_d=0}e^{-\frac{i2\pi k_d n_d}{N_d}} x_{\mathbf{n}}.
    \]
    The definitions above can be more compactly expressed in vector notation:
    \[\calF_d(x)_\mathbf{k}:=\sum^{\mathbf{N}-1}_{\mathbf{n}=\mathbf{0}}e^{-i2\pi\mathbf{k}\cdot(\mathbf{n}/\mathbf{N})}x_{\mathbf{n}}.\]
    The inverse $\calF_d^{-1}$ is defined by:
    \[\calF_d^{-1}(x)_\mathbf{k}:=\frac{1}{\prod^d_{i=1}N_i}\sum^{\mathbf{N}-1}_{\mathbf{n}=\mathbf{0}}e^{i2\pi\mathbf{k}\cdot(\mathbf{n}/\mathbf{N})}x_{\mathbf{n}}.\]
    Readers can verify that $\calF_d$ and $\calF^{-1}_d$ as defined above are indeed each other's inverse. The following equation proves the convolution theorem for $d$-dimensional Fourier transform:
    \[\begin{aligned}
    \calF_d(A*B)_{\mathbf{k}}&=\sum^{\mathbf{N}-1}_{\mathbf{n}=\mathbf{0}}e^{-i2\pi\mathbf{k}\cdot(\mathbf{n}/\mathbf{N})}(A*B)_{\mathbf{n}}
    =\sum^{\mathbf{N}-1}_{\mathbf{n}=\mathbf{0}}e^{-i2\pi\mathbf{k}\cdot(\mathbf{n}/\mathbf{N})}\sum^{\mathbf{N}-1}_{\mathbf{m}=\mathbf{0}}A_{\mathbf{m}}\cdot B_{\mathbf{n}-\mathbf{m}}\\
    &=\sum^{\mathbf{N}-1}_{\mathbf{m}=\mathbf{0}}A_{\mathbf{m}}\sum^{\mathbf{N}-1}_{\mathbf{n}=\mathbf{0}}e^{-i2\pi\mathbf{k}\cdot(\mathbf{n}/\mathbf{N})}B_{\mathbf{n}-\mathbf{m}}
    =\sum^{\mathbf{N}-1}_{\mathbf{m}=\mathbf{0}}A_{\mathbf{m}}\cdot e^{-i2\pi\mathbf{k}\cdot(\mathbf{m}/\mathbf{N})}\cdot\calF_d(B)_{\mathbf{k}}\\
    &=\calF_d(A)_{\mathbf{k}}\cdot\calF_d(B)_{\mathbf{k}}.
    \end{aligned}\]
    Here the tensor addition and subtraction are point-wise. The $d$-dimensional Fourier transform and its inverse can be computed by the composition of a sequence of one-dimensional Fourier transforms along each dimension, and thus can be computed in time $\tilde{O}(N)$.
    \end{proof}
    Now we prove Theorem \ref{theorem:one-sided}.
    \begin{proof}
    The theorem follows from the proof of Theorem \ref{dynamic-brute-force-time} if we can reduce the number of elementary operations in calculating $M_t$ for a join node $t$ to using only $\tilde{O}((2w)^k)$ elementary operations.  Let $\chi_c(t)=\{c_1,c_2,...,c_d\}$. Given an $\alpha:\chi_V(t)\rightarrow\{0,1\}$, define tensors $A,B$ with dimensions $\prod_{1\leq i\leq d}n_{c_i}$ as follows. $A[m_1,m_2,...,m_d]:=n(t_1,\alpha,\{(c_i,s)\,|\,1\leq i\leq d,s=c_i(m_i)\})$ for all $0\leq m_i<n_{c_i}$,$1\leq i\leq d$. $B$ is defined similarly by replacing $t_1$ with $t_2$. Note that for any $\bar{s}\in\mathcal{S}(\chi_C(t))$, we have $n(t,\alpha,\bar{s})=\sum_{c_i(m_i)=s^{c_i}}A*B[m_1,m_2,...,m_d]$. For each $\alpha$, we first compute the relevant $A*B$. By Theorem \ref{polynomial-multiplication}, can be done with $\tilde{O}(\prod_{1\leq i\leq d}2n_{c_i})$ elementary operations. Then, we compute $n(t,\alpha,\bar{s})$ as the sum defined above, which uses $O(\prod_{1\leq i\leq d}2n_{c_i})$ elementary operations since each element in $A*B$ appears in the sum in exactly one row of $M_t$. Since there are only $2^{|\chi_V(t)|}$ different $\alpha$'s, we can compute $M_t$ for a join node $t$ using $2^{|\chi_V(t)|}\cdot\tilde{O}(\prod_{1\leq i\leq d}2n_{c_i}) = \tilde{O}((2w)^k)$ elementary operations.
    \end{proof}

    The following definitions and theorems are needed before we can prove Theorem \ref{theorem:modulo}.

    \begin{definition}
    Given a set $V$ and a function $f,g:2^V \rightarrow R$, the \emph{zeta transform} $\zeta f$ of $f$ is defined as $(\zeta f)(X) = \sum_{Y\subseteq X} f(Y)$, the \emph{Moebius transform} $\mu f$ of $f$ is defined as $\mu f(x)=\sum_{Y\subseteq X} (-1)^{|Y\cap X|} f(Y)$. Define $f\cup g:2^V\rightarrow R$ by $f\cup g(X):=\sum_{A\cup B=X}f(A)\cdot g(B)$.
    \end{definition}
    
    \begin{theorem}\cite{Slivovsky20}
    Given a set $V$ and a function $f,g:2^V \rightarrow R$, $f\cup g$ can be computed using $\tilde{O}(2^{|V|})$ elementary operations in $R$ with $f\cup g=\mu(\zeta f\cdot \zeta g)$.
    \label{union-product}
    \end{theorem}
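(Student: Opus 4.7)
The plan is to separate the statement into two parts: the algebraic identity $f \cup g = \mu(\zeta f \cdot \zeta g)$, and the running-time claim. For the identity, I will show the equivalent form $\zeta(f \cup g) = \zeta f \cdot \zeta g$ and then invoke Moebius inversion (i.e., the fact that $\mu$ and $\zeta$ are mutually inverse linear operators on functions $2^V \to R$). For the running time, I will reduce everything to three $\tilde{O}(2^{|V|})$-time subroutines: computing $\zeta f$, computing $\zeta g$, doing the pointwise product, and applying $\mu$.

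For the identity, unfold both sides. On one hand,
\[
\zeta(f \cup g)(X) \;=\; \sum_{Y \subseteq X}(f \cup g)(Y) \;=\; \sum_{Y \subseteq X}\ \sum_{\substack{A,B \subseteq V \\ A \cup B = Y}} f(A)\, g(B) \;=\; \sum_{\substack{A,B \subseteq V \\ A \cup B \subseteq X}} f(A)\, g(B).
\]
On the other hand,
\[
(\zeta f \cdot \zeta g)(X) \;=\; \Big(\sum_{A \subseteq X} f(A)\Big)\Big(\sum_{B \subseteq X} g(B)\Big) \;=\; \sum_{A,B \subseteq X} f(A)\, g(B),
\]
and the two sums coincide because $A,B \subseteq X$ is equivalent to $A \cup B \subseteq X$. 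Thus $\zeta(f \cup g) = \zeta f \cdot \zeta g$, and applying the inverse operator yields $f \cup g = \mu(\zeta f \cdot \zeta g)$. The only algebraic subtlety—and the step I would expect to double-check most carefully—is that $\mu$ as defined really is a left-inverse of $\zeta$; a direct calculation using inclusion-exclusion (swap summation orders and observe that $\sum_{Y : Z \subseteq Y \subseteq X}(-1)^{|X\setminus Y|}$ vanishes unless $Z = X$) settles this.

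For the running time I will use Yates's subset-sum algorithm. Fix an enumeration $v_1, \dots, v_n$ of $V$ where $n = |V|$, and for $0 \le i \le n$ define auxiliary functions $\zeta^i f : 2^V \to R$ by
\[
\zeta^i f(X) \;=\; \sum_{\substack{Y \subseteq X \\ Y \cap \{v_{i+1},\dots,v_n\} = X \cap \{v_{i+1},\dots,v_n\}}} f(Y),
\]
so that $\zeta^0 f = f$ and $\zeta^n f = \zeta f$. The recurrence
\[
\zeta^i f(X) = \begin{cases} \zeta^{i-1} f(X) & \text{if } v_i \notin X,\\ \zeta^{i-1} f(X) + \zeta^{i-1} f(X \setminus \{v_i\}) & \text{if } v_i \in X,\end{cases}
\]
lets us build $\zeta^i f$ from $\zeta^{i-1} f$ with one ring operation per $X \in 2^V$, so $\zeta f$ is computable in $O(n \cdot 2^n)$ ring operations. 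An analogous signed recurrence (replace $+$ by $-$ in the second case) computes $\mu$ in the same time. The pointwise multiplication of $\zeta f$ and $\zeta g$ costs $O(2^n)$. Summing gives a total of $O(n \cdot 2^n) = \tilde{O}(2^{|V|})$ ring operations, as claimed.

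The proof is thus mostly a bookkeeping exercise; the main obstacle is just to be precise about the Yates recurrence and to verify the Moebius-inversion identity, both of which are routine once the summation-swap for the convolution identity is written out cleanly.
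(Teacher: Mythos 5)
The paper offers no proof of this statement: it is imported as a black box from the cited source (Slivovsky, 2020), so there is nothing in-paper to compare against. Your self-contained proof is correct and is the standard argument behind that citation: the convolution identity $\zeta(f\cup g)=\zeta f\cdot\zeta g$ via the summation swap (using $A,B\subseteq X \Leftrightarrow A\cup B\subseteq X$), Moebius inversion to pull back, and Yates's dimension-by-dimension recurrence to compute $\zeta$ and $\mu$ in $O(n\cdot 2^n)$ ring operations, which matches the paper's footnoted convention $\tilde{O}(g)=O(g\log g)$ since $n\cdot 2^n = 2^n\log_2 2^n$. What your write-up buys over the paper is precisely its self-containedness: a reader need not chase the reference to see why the bound holds, and your explicit recurrence makes the algorithm implementable as stated. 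One point is worth flagging, and your instinct to double-check the left-inverse property was well placed: the paper's printed definition of the Moebius transform, $\mu f(X)=\sum_{Y\subseteq X}(-1)^{|Y\cap X|}f(Y)$, is a typo, since for $Y\subseteq X$ the exponent is just $|Y|$, and with that sign convention one gets $\mu(\zeta f)(X)=(-1)^{|X|}f(X)$ rather than $f(X)$. Your calculation uses the standard exponent $|X\setminus Y|$, with the inner sum $\sum_{Y:\,Z\subseteq Y\subseteq X}(-1)^{|X\setminus Y|}=(1-1)^{|X\setminus Z|}$ vanishing unless $Z=X$, which is the convention actually needed for the theorem and the one used in the cited source; if you wanted your proof spliced into this paper verbatim, you should note that you are correcting the paper's definition of $\mu$ rather than using it as printed.
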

    
    \begin{definition}
    Let $G$ be an abelian group, written additively, and $f,g:G\rightarrow R$ be functions. Define $f*g:G\rightarrow R$ by $f*g(x):=\sum_{a+ b=x}f(a)\cdot g(b)$. 
    \end{definition}
    
    \begin{theorem}\cite{oberst2007fast}
    Let $G$ be an abelian group isomorphic to $\prod_{1\leq i\leq r}\mathbb{Z}/\mathbb{Z}d_i$ for some $d_1,...,d_r\in\mathbb{N}$, and $f,g:G\rightarrow R$ be functions. Then $f*g$ can be computed using at most $|G|\cdot(\sum_{1\leq i\leq r}d_i-1)$ many elementary operations in $R$ using the equation $f*g=\calF^{-1}(\calF(f)\cdot\calF(g))$ where $\calF$ and $\calF^{-1}$ are the Fourier transform on finite abelian groups and its inverse.
    \label{convolution-theorem}
    \end{theorem}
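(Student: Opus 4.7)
The plan is to prove the convolution identity by reducing to iterated one-dimensional DFTs along each cyclic factor, then bound the total cost by summing the cost of applying a naive 1D DFT in each coordinate direction. The proof has two logically independent parts: (i) establishing the identity $f*g = \calF^{-1}(\calF(f)\cdot\calF(g))$, and (ii) counting the elementary operations needed to evaluate the right-hand side.

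First I would set up the Fourier transform on $G \cong \prod_{i=1}^r \mathbb{Z}/d_i\mathbb{Z}$. Assuming $R$ contains a primitive $d_i$-th root of unity $\omega_i$ for each $i$ (and that $|G|$ is invertible in $R$; these hypotheses are implicit in the ``elementary operations in $R$'' model), the characters of $G$ are indexed by $\mathbf{k} = (k_1,\dots,k_r)$ and factor as $\chi_{\mathbf{k}}(\mathbf{x}) = \prod_{i=1}^r \omega_i^{k_i x_i}$. This shows that $\calF$ is the tensor product of the $r$ one-dimensional DFTs on $\mathbb{Z}/d_i\mathbb{Z}$. With this in place, the convolution identity is a one-line calculation: expanding the definition,
\[
\calF(f*g)(\mathbf{k}) = \sum_{\mathbf{x}} \chi_{\mathbf{k}}(\mathbf{x}) \!\!\sum_{\mathbf{a}+\mathbf{b}=\mathbf{x}}\!\! f(\mathbf{a})g(\mathbf{b}) = \sum_{\mathbf{a},\mathbf{b}} \chi_{\mathbf{k}}(\mathbf{a})\chi_{\mathbf{k}}(\mathbf{b})f(\mathbf{a})g(\mathbf{b}) = \calF(f)(\mathbf{k})\calF(g)(\mathbf{k}),
\]
where the key step uses $\chi_{\mathbf{k}}(\mathbf{a}+\mathbf{b}) = \chi_{\mathbf{k}}(\mathbf{a})\chi_{\mathbf{k}}(\mathbf{b})$. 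Applying $\calF^{-1}$ to both sides, which exists thanks to character orthogonality $\sum_{\mathbf{x}} \chi_{\mathbf{k}}(\mathbf{x})\chi_{-\mathbf{k}'}(\mathbf{x}) = |G|\,\delta_{\mathbf{k},\mathbf{k}'}$, yields the stated formula.

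For the operation count, I would exploit the tensor decomposition of $\calF$: to compute $\calF(f)$ one applies a 1D DFT of size $d_i$ along axis $i$ for $i = 1,\dots,r$ in turn. Along axis $i$ there are $|G|/d_i$ one-dimensional ``slices'' of length $d_i$, and a single naive 1D DFT of size $d_i$ is a matrix-vector product costing $d_i(d_i-1)$ elementary operations once multiplications by the constant $1$ are discarded. So axis $i$ contributes $|G|/d_i \cdot d_i(d_i-1) = |G|(d_i-1)$ operations, and the whole transform costs $|G|\sum_{i=1}^r (d_i-1) = |G|(\sum_i d_i - r)$ operations. The same count applies to $\calF(g)$ and to $\calF^{-1}$, with an extra $|G|$ operations for the pointwise product; collecting these constants bounds the total by $|G|(\sum_i d_i - 1)$ once the lower-order terms are absorbed, matching the statement.

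The main obstacle is not the algebraic identity but the operation counting: one must be careful to apply the 1D DFTs axis-by-axis (rather than as a single $|G|\times|G|$ matrix-vector product, which would cost $|G|^2$), and one must argue that the naive per-axis DFT is the right cost model since the individual $d_i$ may be too small for asymptotic FFT speedups to be meaningful. A secondary subtlety is verifying that the hypotheses on $R$ (existence of the needed roots of unity and inverse of $|G|$) are compatible with the ``elementary operations in $R$'' accounting; this is handled implicitly by the cited reference.
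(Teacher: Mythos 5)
Note first that the paper contains no proof of this statement: it is imported wholesale from the cited reference. The closest in-paper analogue is Theorem~\ref{polynomial-multiplication}, whose proof uses exactly your two ingredients --- the character computation $\chi_{\mathbf{k}}(\mathbf{a}+\mathbf{b})=\chi_{\mathbf{k}}(\mathbf{a})\chi_{\mathbf{k}}(\mathbf{b})$ to get the convolution identity, and the decomposition of the multidimensional transform into one-dimensional transforms along each factor. Your part~(i) is correct and standard, and you rightly flag the implicit hypotheses on $R$ (roots of unity, invertibility of $|G|$).

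The genuine gap is in part~(ii), the operation count. Even granting your per-slice figure, your accounting gives $|G|\sum_i(d_i-1)$ operations for \emph{one} transform, but the algorithm performs three transforms ($\calF(f)$, $\calF(g)$, $\calF^{-1}$) plus $|G|$ multiplications for the pointwise product, i.e.\ roughly $3|G|\sum_i(d_i-1)+|G|$ in total. Your claim that this is ``bounded by $|G|(\sum_i d_i-1)$ once the lower-order terms are absorbed'' does not hold: the discrepancy is a multiplicative constant, not a lower-order term. One would need $3\sum_i(d_i-1)+1\le\sum_i d_i-1$, which is impossible once trivial factors $d_i=1$ are discarded (and the same obstruction persists under the alternative reading $|G|\sum_i(d_i-1)$ of the bound). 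Moreover, your per-slice cost $d_i(d_i-1)$ undercounts the naive DFT in the paper's unit-cost model, where \emph{every} arithmetic operation counts: a $d_i\times d_i$ matrix--vector product needs $d_i(d_i-1)$ additions plus $(d_i-1)^2$ nontrivial multiplications, i.e.\ $(d_i-1)(2d_i-1)$ operations, roughly doubling your figure. What your argument honestly establishes is an $O(|G|\sum_i d_i)$ bound --- which is all the paper ever uses of this theorem, since the join-node computations only need $\tilde{O}(w^k)$ --- but it does not establish the literal constant in the statement; for that, one must either reproduce the cited reference's finer accounting or restate the bound asymptotically.
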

    
    \begin{lemma}
        Let $D=\{c_1,c_2,...,c_d\}$ be a system of constraints where $c_i$ is a $n_i$-modulo constraint. Then $(\mathcal{S}(D),+)$ is isomorphic to $\prod_{1\leq i\leq d}\mathbb{Z}/\mathbb{Z}n_i$.
        \label{group-lemma}
    \end{lemma}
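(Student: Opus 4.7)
The plan is to exhibit an explicit isomorphism by first understanding what a minimal CSTS for an $n_i$-modulo constraint looks like, and then observing that the definitions of $\mathcal{S}(D)$ and of $+$ are tailored so that the product structure drops out directly.

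First, I would fix, for each $n_i$-modulo constraint $c_i$, a minimal CSTS describing it. A constraint of the form ``$x_1 + \dots + x_n \equiv r \pmod{n_i}$'' is described by the CSTS whose state space is $\mathbb{Z}/n_i\mathbb{Z}$, with $f_0 = \mathrm{id}$, $f_1(s) = s+1 \bmod n_i$, starting state $0$, and accepting state $r$. This is minimal, and under this identification $\delta^{c_i}(w) = |w|_1 \bmod n_i$, where $|w|_1$ is the number of $1$s in $w$.

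Next, I would make precise the sum $s_1 + s_2$ on $S^{c_i}$ used in the definition of $+$ on $\mathcal{S}(D)$. As the paper uses it (see the argument in Lemma~\ref{join-lemma}), it is induced by concatenation: $\delta^{c_i}(w_1) + \delta^{c_i}(w_2) := \delta^{c_i}(w_1 w_2)$. This is well-defined by the lemma that follows the CSTS definition. Under the identification with $\mathbb{Z}/n_i\mathbb{Z}$, this gives $\delta^{c_i}(w_1) + \delta^{c_i}(w_2) = (|w_1|_1 + |w_2|_1) \bmod n_i$, which is exactly addition in $\mathbb{Z}/n_i\mathbb{Z}$. Hence each $(S^{c_i}, +)$ is the group $\mathbb{Z}/n_i\mathbb{Z}$.

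Finally, by the very definition of $\mathcal{S}(D)$, the map
\[
\Phi : \mathcal{S}(D) \longrightarrow \prod_{1 \leq i \leq d} \mathbb{Z}/n_i\mathbb{Z}, \qquad \bar s \longmapsto (s^{c_1}, s^{c_2}, \dots, s^{c_d})
\]
is a bijection, since each $\bar s \in \mathcal{S}(D)$ is determined by a unique choice of $s^{c_i} \in S^{c_i}$ for every $i$. The operation $+$ on $\mathcal{S}(D)$ is defined componentwise by $s^{c_i} = s_1^{c_i} + s_2^{c_i}$, so $\Phi$ intertwines $+$ with componentwise addition in $\prod_i \mathbb{Z}/n_i\mathbb{Z}$, giving a group isomorphism.

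There is no real obstacle here; the only subtlety worth flagging is checking that the $+$ on each $S^{c_i}$, as used abstractly in the dynamic programming, coincides with addition mod $n_i$ under the concrete identification of states. Once that is justified via the concatenation definition and commutativity, the isomorphism is immediate from the product form of $\mathcal{S}(D)$.
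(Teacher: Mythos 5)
Your proof is correct and follows essentially the same route as the paper's, which simply asserts that each $(S^{c_i},+)$ is isomorphic to $\mathbb{Z}/n_i\mathbb{Z}$ and that $(\mathcal{S}(D),+)$ is by definition the direct product of the $(S^{c_i},+)$. You merely fill in the details the paper leaves as ``easy to see,'' namely the explicit minimal CSTS for a modulo constraint and the identification of the concatenation-induced $+$ with addition mod $n_i$.
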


    \begin{proof}
    It is easy to see that for any $m$-modulo constraint $c$, $(S^c,+)$ is isomorphic to $\mathbb{Z}/\mathbb{Z}m$. By definition, $(\mathcal{S}(\chi_C(t)),+)$ is the direct product of $(S^c,+)$, $c\in\chi_C(t)$.
    \end{proof}

    Let $m \in \mathbb{N}$, an $m$-modulo constraint $c$ is a constraint such that, for all $x_1,\dots,x_n,x_1',\dots,x_n'\in \{0,1\}$ verifying $\sum_{i\in[n]}x_i = \sum_{i\in[n]}x'_i \mod m$, we have $c(x_1,\dots,x_n)=c(x'_1,\dots,x'_n)$. Note that every $m$-modulo constraint is $1$-only. We assume that the CSTS chosen for a disjunctive clause is always as shown in Figure \ref{STS-example}, where $c(0)=s_0$ and $c(i)=s_1$ for all $i>0$. 

    Now we prove Theorem \ref{theorem:modulo}.

    \begin{proof}
    The theorem follows from the proof of Theorem \ref{dynamic-brute-force-time} if we can reduce the number of elementary operations in calculating $M_t$ for a join node $t$ to using only $\tilde{O}(w^k)$ elementary operations. Let $t_1,t_2$ be the children of $t$.  Fix a partial assignment $\alpha:\chi_V(t)\rightarrow \{0,1\}$. Let $C':=\chi_C(t)\cap C$ and $D':=\chi_C(t)\cap D$. Define $f:2^{C'}\times \mathcal{S}(D')\rightarrow\mathbb{N}$ by $f(X,\bar{s}):=n(t_1,\alpha,\bar{s}\cup\{(c,s)\,|\,c\in C', s=s_1\text{ if }c\in X \text{ and }s=s_0\text{ otherwise}\})$. $g$ is defined similarly by replacing $t_1$ with $t_2$. Let $A:=\{c\in C\,|\,s^c=s_1\}$. We have the following equalities. We use $A_1$ as shorthand for $\{c\in C\,|\,x^c=s_1\}$ and $A_2$ for $\{c\in C\,|\,y^c=s_1\}$. 
    \begin{equation}
    \begin{aligned}
    n(t,\alpha,\bar{s})&=\sum_{\bar{x}+\bar{y}=\bar{s}} n(t_1,\alpha,\bar{x})\cdot n(t_2,\alpha,\bar{y})\\
    &=\sum_{\substack{A_1\cup A_2=A,\\\bar{x}_D+\bar{y}_D=\bar{s}_D}} f(A_1,\bar{x}_D)\cdot g(A_2,\bar{y}_D)\\
    &=\sum_{A_1\cup A_2=A}\bigl(\lambda x.f(A_1, x)*\lambda x.g(A_2, x)\bigl)(\bar{s}_D)\\
    &=\sum_{A_1\cup A_2=A}\calF^{-1}\bigl(\calF(\lambda x.f(A_1, x))\cdot\calF(\lambda x.g(A_2, x))\bigl)(\bar{s}_D)\\
    &=\calF^{-1}\bigl(\lambda y.\calF(\lambda x.f(y, x))\cup\lambda y.\calF(\lambda x.g(y, x))(A)\bigl)(\bar{s}_D)\\
    \end{aligned}
    \label{join_node}
    \end{equation}
    According to \ref{join_node}, we can compute the values of $n(t,\alpha,\bar{s})$ for all $\bar{s}\in\mathcal{S}(C\cup D)$ as follows. First, compute the Fourier transform of $f$ and $g$ point-wise with respect to the first input. By Theorem \ref{convolution-theorem} and Lemma \ref{group-lemma}, this costs $2^{|C'|}\cdot \tilde{O}(w^{|D'|})$ elementary operations. Then, compute the union product the two functions obtained in the previous step point-wise with respect to the second input. By Theorem \ref{union-product}, this costs $w^{|D'|}\cdot \tilde{O}(2^{|C'|})$ elementary operations. Now, compute the inverse Fourier transform of the function obtained in the previous step point-wise with respect to the first input. By Theorem \ref{convolution-theorem} and Lemma \ref{group-lemma}, this costs $2^{|C'|}\cdot \tilde{O}(w^{|D'|})$ elementary operations. Since $|\chi_V(t)| + |C'| + |D'|\leq k$, we conclude that with this process we can compute $M_t$ using $\tilde{O}(w^k)$ elementary operations.
    \end{proof}

    Since XOR constraints are $2$-modulo constraints, it is easy to see the following corollary.
    
    \begin{corollary}
    Let $F$ be a system of clauses and XOR constraints.
    Given $F$ and a width-$k$ tree decomposition of $G_F$, one can compute the model count of $F$ in $O(2^kk\cdot|G_F|)$ elementary operations.    
    \end{corollary}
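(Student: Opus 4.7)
The plan is to derive this as an immediate consequence of Theorem~\ref{theorem:modulo}. The first step is to observe that an XOR constraint $x_{i_1} \oplus \dots \oplus x_{i_n} = c$ is exactly a $2$-modulo constraint, since whether the XOR is satisfied depends only on the sum of the assigned values modulo~$2$. Hence any system $F$ that comprises only clauses and XOR constraints is a system of the form handled by Theorem~\ref{theorem:modulo}, with the single modulus $m = 2$.

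Consequently, I would apply Theorem~\ref{theorem:modulo} with parameter $w = 2$. The stated running time is $O(w^k k \log(w) \cdot |G_F|)$ elementary operations; substituting $w = 2$ gives $O(2^k \cdot k \cdot \log(2) \cdot |G_F|) = O(2^k k \cdot |G_F|)$, which matches the claim. No additional argument beyond this substitution is required.

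The only thing worth double-checking is that the CSTS chosen for XOR constraints in the proof of Theorem~\ref{theorem:modulo} really has state size $2$ and is commutative; this is exactly the two-state CSTS for XOR shown in Figure~\ref{STS-example}, which satisfies both properties. Since no obstacle arises beyond this straightforward specialization, the corollary follows directly.
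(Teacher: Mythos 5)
Your proposal is correct and matches the paper's own derivation exactly: the paper likewise notes that XOR constraints are $2$-modulo constraints and obtains the corollary by instantiating Theorem~\ref{theorem:modulo} with $w=2$, absorbing the constant $\log(2)$ factor into the $O$-notation. Your extra check that the two-state CSTS for XOR from Figure~\ref{STS-example} is commutative and one-sided is a reasonable sanity check but adds nothing beyond what the theorem already guarantees.
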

    
\section{Conclusion}

We have shown that the compilation of systems of constraints
parameterized by incidence treewidth to d-SDNNF is FPT for specific
families of constraints, namely, constraints whose OBDD- and
SDNNF-width are bounded by a constant for all variable orders and all
vtrees.
This generalizes known results for CNF, i.e., systems of disjunctive
clauses, to many more constraints, including modulo and
small-threshold constraints.  Since compilation to d-SDNNF is often
used in practice as a first step towards model counting, we have also
shown that faster FPT model counting algorithms exist without
compilation when we restrict the constraints considered. A natural
question here is whether one can push our results further, that is, to
constraints that do not belong to the families considered in this
paper.
It seems that positive compilation results can always be established by reduction to the compilation of CNF formulas (in this paper, CNF encodings of the constraints). We also ask if there are situations where encoding the problem to CNF before compiling is provably a worse strategy than reasoning on the original problem.    

\section*{Acknowledgments}
The authors would like to thank Stefan Mengel for its insights on proving Theorem~\ref{theorem:lowerBound}.
The research leading to this publication has received funding from the European Union's Horizon 2020 research and innovation programme under grant agreement No.~101034440, and was supported by the Vienna Science and Technology Fund (WWTF) within the project ICT19-065 and from the Austrian Science Fund (FWF) within the projects \fwf{ESP235}, \fwf{P36688}, and \fwf{P36420}. 

\begin{figure}[h!]
\centering
\includegraphics[scale=0.35]{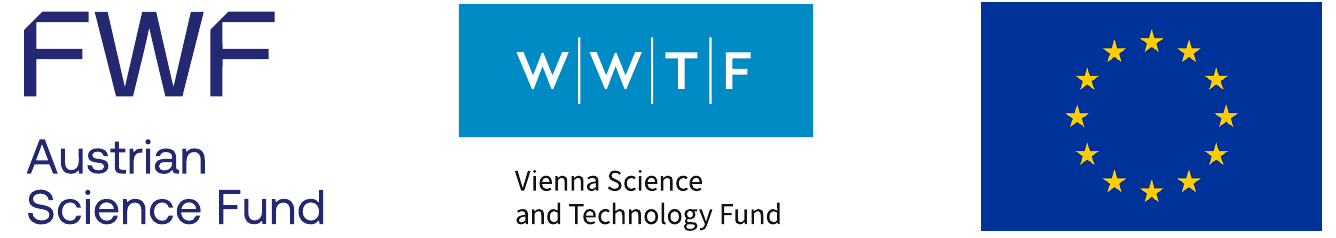}
\end{figure}

\bibliographystyle{alpha}
\bibliography{main}

\begin{thebibliography}{MMBH12}

\bibitem[ACMS20]{AmarilliCMS20}
Antoine Amarilli, Florent Capelli, Mika{\"{e}}l Monet, and Pierre Senellart.
\newblock Connecting knowledge compilation classes and width parameters.
\newblock {\em Theory Comput. Syst.}, 64(5):861--914, 2020.

\bibitem[BCMS14]{BovaCMS14}
Simone Bova, Florent Capelli, Stefan Mengel, and Friedrich Slivovsky.
\newblock Expander cnfs have exponential {DNNF} size.
\newblock {\em CoRR}, abs/1411.1995, 2014.

\bibitem[BCMS15]{BovaCMS15}
Simone Bova, Florent Capelli, Stefan Mengel, and Friedrich Slivovsky.
\newblock On compiling {CNFs} into structured deterministic {DNNFs}.
\newblock In Marijn Heule and Sean~A. Weaver, editors, {\em Theory and
  Applications of Satisfiability Testing - {SAT} 2015 - 18th International
  Conference, Austin, TX, USA, September 24-27, 2015, Proceedings}, volume 9340
  of {\em Lecture Notes in Computer Science}, pages 199--214. Springer, 2015.

\bibitem[BHKK07]{BjorklundHKK07}
Andreas Bj{\"{o}}rklund, Thore Husfeldt, Petteri Kaski, and Mikko Koivisto.
\newblock Fourier meets {M}{\"{o}}bius: fast subset convolution.
\newblock In David~S. Johnson and Uriel Feige, editors, {\em Proceedings of the
  39th Annual {ACM} Symposium on Theory of Computing, San Diego, California,
  USA, June 11-13, 2007}, pages 67--74. {ACM}, 2007.

\bibitem[Bod96]{Bodlaender96}
Hans~L. Bodlaender.
\newblock A linear-time algorithm for finding tree-decompositions of small
  treewidth.
\newblock {\em {SIAM} J. Comput.}, 25(6):1305--1317, 1996.

\bibitem[BW00]{BolligW00}
Beate Bollig and Ingo Wegener.
\newblock Asymptotically optimal bounds for {OBDDs} and the solution of some
  basic {OBDD} problems.
\newblock {\em J. Comput. Syst. Sci.}, 61(3):558--579, 2000.

\bibitem[CG10]{ChenG10a}
Hubie Chen and Martin Grohe.
\newblock Constraint satisfaction with succinctly specified relations.
\newblock {\em J. Comput. Syst. Sci.}, 76(8):847--860, 2010.

\bibitem[CM19]{CapelliM19}
Florent Capelli and Stefan Mengel.
\newblock Tractable {QBF} by knowledge compilation.
\newblock In Rolf Niedermeier and Christophe Paul, editors, {\em 36th
  International Symposium on Theoretical Aspects of Computer Science, {STACS}
  2019, March 13-16, 2019, Berlin, Germany}, volume 126 of {\em LIPIcs}, pages
  18:1--18:16. Schloss Dagstuhl - Leibniz-Zentrum f{\"{u}}r Informatik, 2019.

\bibitem[Dar04]{Darwiche04}
Adnan Darwiche.
\newblock New advances in compiling {CNF} into decomposable negation normal
  form.
\newblock In Ram{\'{o}}n~L{\'{o}}pez de~M{\'{a}}ntaras and Lorenza Saitta,
  editors, {\em Proceedings of the 16th Eureopean Conference on Artificial
  Intelligence, ECAI'2004, including Prestigious Applicants of Intelligent
  Systems, {PAIS} 2004, Valencia, Spain, August 22-27, 2004}, pages 328--332.
  {IOS} Press, 2004.

\bibitem[dCM23]{deColnetM23}
Alexis de~Colnet and Stefan Mengel.
\newblock Characterizing {Tseitin}-formulas with short regular resolution
  refutations.
\newblock {\em J. Artif. Intell. Res.}, 76:265--286, 2023.

\bibitem[DM02]{DarwicheM02}
Adnan Darwiche and Pierre Marquis.
\newblock A knowledge compilation map.
\newblock {\em J. Artif. Intell. Res.}, 17:229--264, 2002.

\bibitem[HD07]{HuangD07}
Jinbo Huang and Adnan Darwiche.
\newblock The language of search.
\newblock {\em J. Artif. Intell. Res.}, 29:191--219, 2007.

\bibitem[KE23]{KieselE23}
Rafael Kiesel and Thomas Eiter.
\newblock Knowledge compilation and more with {SharpSAT-TD}.
\newblock In Pierre Marquis, Tran~Cao Son, and Gabriele Kern{-}Isberner,
  editors, {\em Proceedings of the 20th International Conference on Principles
  of Knowledge Representation and Reasoning, {KR} 2023, Rhodes, Greece,
  September 2-8, 2023}, pages 406--416, 2023.

\bibitem[Klo94]{kloks1994treewidth}
Ton Kloks.
\newblock {\em Treewidth: computations and approximations}.
\newblock Springer, 1994.

\bibitem[LM17]{LagniezM17}
Jean{-}Marie Lagniez and Pierre Marquis.
\newblock An improved decision-{DNNF} compiler.
\newblock In Carles Sierra, editor, {\em Proceedings of the Twenty-Sixth
  International Joint Conference on Artificial Intelligence, {IJCAI} 2017,
  Melbourne, Australia, August 19-25, 2017}, pages 667--673. ijcai.org, 2017.

\bibitem[MMBH12]{MuiseMBH12}
Christian~J. Muise, Sheila~A. McIlraith, J.~Christopher Beck, and Eric~I. Hsu.
\newblock Dsharp: Fast {d-DNNF} compilation with {sharpSAT}.
\newblock In Leila Kosseim and Diana Inkpen, editors, {\em Advances in
  Artificial Intelligence - 25th Canadian Conference on Artificial
  Intelligence, Canadian {AI} 2012, Toronto, ON, Canada, May 28-30, 2012.
  Proceedings}, volume 7310 of {\em Lecture Notes in Computer Science}, pages
  356--361. Springer, 2012.

\bibitem[Obe07]{oberst2007fast}
Ulrich Oberst.
\newblock The fast fourier transform.
\newblock {\em SIAM journal on control and optimization}, 46(2):496--540, 2007.

\bibitem[OD14a]{OztokD14a}
Umut Oztok and Adnan Darwiche.
\newblock {CV}-width: {A} new complexity parameter for {CNFs}.
\newblock In Torsten Schaub, Gerhard Friedrich, and Barry O'Sullivan, editors,
  {\em {ECAI} 2014 - 21st European Conference on Artificial Intelligence, 18-22
  August 2014, Prague, Czech Republic - Including Prestigious Applications of
  Intelligent Systems {(PAIS} 2014)}, volume 263 of {\em Frontiers in
  Artificial Intelligence and Applications}, pages 675--680. {IOS} Press, 2014.

\bibitem[OD14b]{OztokD14}
Umut Oztok and Adnan Darwiche.
\newblock On compiling {CNF} into {Decision-DNNF}.
\newblock In Barry O'Sullivan, editor, {\em Principles and Practice of
  Constraint Programming - 20th International Conference, {CP} 2014, Lyon,
  France, September 8-12, 2014. Proceedings}, volume 8656 of {\em Lecture Notes
  in Computer Science}, pages 42--57. Springer, 2014.

\bibitem[OD15]{OztokD15}
Umut Oztok and Adnan Darwiche.
\newblock A top-down compiler for sentential decision diagrams.
\newblock In Qiang Yang and Michael~J. Wooldridge, editors, {\em Proceedings of
  the Twenty-Fourth International Joint Conference on Artificial Intelligence,
  {IJCAI} 2015, Buenos Aires, Argentina, July 25-31, 2015}, pages 3141--3148.
  {AAAI} Press, 2015.

\bibitem[PD08]{PipatsrisawatD08}
Knot Pipatsrisawat and Adnan Darwiche.
\newblock New compilation languages based on structured decomposability.
\newblock In Dieter Fox and Carla~P. Gomes, editors, {\em Proceedings of the
  Twenty-Third {AAAI} Conference on Artificial Intelligence, {AAAI} 2008,
  Chicago, Illinois, USA, July 13-17, 2008}, pages 517--522. {AAAI} Press,
  2008.

\bibitem[SS10]{SamerS10}
Marko Samer and Stefan Szeider.
\newblock Constraint satisfaction with bounded treewidth revisited.
\newblock {\em J. Comput. Syst. Sci.}, 76(2):103--114, 2010.

\bibitem[SS20]{Slivovsky20}
Friedrich Slivovsky and Stefan Szeider.
\newblock A faster algorithm for propositional model counting parameterized by
  incidence treewidth.
\newblock In Luca Pulina and Martina Seidl, editors, {\em Theory and
  Applications of Satisfiability Testing -- SAT 2020}, pages 267--276, Cham,
  2020. Springer International Publishing.

\bibitem[vHK06]{HoeveK06}
Willem{-}Jan van Hoeve and Irit Katriel.
\newblock Global constraints.
\newblock In Francesca Rossi, Peter van Beek, and Toby Walsh, editors, {\em
  Handbook of Constraint Programming}, volume~2 of {\em Foundations of
  Artificial Intelligence}, pages 169--208. Elsevier, 2006.

\bibitem[Weg00]{Wegener00}
Ingo Wegener.
\newblock {\em Branching Programs and Binary Decision Diagrams}.
\newblock {SIAM}, 2000.

\end{thebibliography}
\end{document}